\documentclass[a4paper,twocolumn,11pt,accepted=2026-04-03]{quantumarticle}
\pdfoutput=1

\usepackage{graphicx}
\usepackage{amsmath}
\usepackage{amssymb}
\usepackage{amsthm}
\usepackage{mathtools}
\usepackage{bm}
\usepackage{bbold}
\usepackage{braket}
\usepackage{tcolorbox}
\usepackage{tabularray}

\usepackage[colorlinks]{hyperref}
\hypersetup{
plainpages=true,
breaklinks=true,
hypertexnames=false,
pageanchor=true,
colorlinks=true,
linkcolor={blue},
citecolor={red},
urlcolor={blue},
anchorcolor={black}
}
\usepackage{cleveref}

\newtheorem{definition}{Definition}[section]
\newtheorem{lemma}{Lemma}[section]
\newtheorem{proposition}{Proposition}[section]
\newtheorem{theorem}{Theorem}[section]

\newcommand{\tr}{\operatorname{tr}}
\newcommand{\gr}{\operatorname{Gr}}

\newcommand{\sgn}{\operatorname{sgn}}


\newcommand{\partialp}{\partial_{\theta_p}}
\newcommand{\partialps}{\partial_{\theta_p^*}}
\newcommand{\partialpp}{\partial_{\theta_{p'}}}

\newcommand{\argmax}[1]{\underset{#1}{\operatorname{arg\,max }}}

\begin{document}

\author{Adrien Kahn}
\affiliation{CPHT, CNRS, Ecole Polytechnique, Institut Polytechnique de Paris, 91120 Palaiseau, France}
\affiliation{Collège de France, Université PSL, 11 place Marcelin Berthelot, 75005 Paris, France}
\affiliation{Inria Paris-Saclay, Bâtiment Alan Turing, 1, rue Honoré d’Estienne d’Orves – 91120 Palaiseau}
\affiliation{LIX, CNRS, École polytechnique, Institut Polytechnique de Paris, 91120 Palaiseau, France}
\author{Luca Gravina}
\affiliation{Institute of Physics, Ecole Polytechnique Fédérale de Lausanne (EPFL), CH-1015 Lausanne, Switzerland}
\affiliation{Center for Quantum Science and Engineering, Ecole Polytechnique Fédérale de Lausanne (EPFL), CH-1015 Lausanne, Switzerland}
\author{Filippo Vicentini}
\affiliation{CPHT, CNRS, Ecole Polytechnique, Institut Polytechnique de Paris, 91120 Palaiseau, France}
\affiliation{Collège de France, Université PSL, 11 place Marcelin Berthelot, 75005 Paris, France}
\affiliation{Inria Paris-Saclay, Bâtiment Alan Turing, 1, rue Honoré d’Estienne d’Orves – 91120 Palaiseau}
\affiliation{LIX, CNRS, École polytechnique, Institut Polytechnique de Paris, 91120 Palaiseau, France}

\title{Variational subspace methods and application to improving variational Monte Carlo dynamics}

\maketitle

\onecolumngrid

\begin{abstract}
We present a formalism that allows for the direct manipulation and optimization of subspaces, circumventing the need to optimize individual states when using subspace methods. Using the determinant state mapping, we can naturally extend notions such as distance and energy to subspaces, as well as Monte Carlo estimators, recovering the excited states estimation method proposed by Pfau et al. As a practical application, we then introduce Bridge, a method that improves the performance of variational dynamics by extracting linear combinations of variational time-evolved states. We find that Bridge is both computationally inexpensive and capable of significantly mitigating the errors that arise from discretizing the dynamics, and can thus be systematically used as a post-processing tool for variational dynamics.
\end{abstract}

\vspace{0.5cm}
\twocolumngrid

\section{Introduction}

The simulation of quantum many-body systems is a major challenge of modern computational physics~\cite{Bookatz2014QMA}. 
Although brute-force diagonalization is incapable of tackling Hilbert spaces with more than a few particles, several families of techniques have been proposed over the years to provide approximate solutions. 
In particular, variational methods, that rely on the optimization of a parametrized wavefunction, have appeared as efficient tools in approximating solutions of quantum many-body problems, such as dynamics or ground state estimation. 
Families of variational states that have been studied in recent years include tensor networks~\cite{Bridgeman2017Handwaving, Ors2014PracticalTensorNetworks, Schollwock2011}, variational quantum circuits~\cite{Tilly2022VQEReview, Peruzzo2014VQE, McClean2016VQE}, and neural quantum states~\cite{Carleo2017, ModernApplications}. 
While tensor networks are well suited for low-dimensional problems with limited entanglement, such as ground state estimation for 1D gapped Hamiltonians~\cite{Hastings20071dAreaLaw}, neural network quantum states have recently appeared as a powerful alternative. 
They have been successful in approximating the ground state of various spin~\cite{Choo2019J1J2, Viteritti2023ViT, Chen2024MinSR}, bosonic~\cite{Denis2024Bosons} and fermionic~\cite{Luo2019Backflow, RobledoMoreno2022HiddenFermion} systems.

In the context of dynamics, generally considered a more challenging problem for variational methods, neural quantum states have also proven useful. 
Several variational schemes have been proposed to simulate dynamics with neural quantum states, such as the time-dependent variational principle~\cite{Carleo2017Boson,Schmitt2020MarkusMarkus,Schmitt2023}, the projected time-dependent variational principle~\cite{Sinibaldi2023Unbiasing, Gravina2024PTVMC}, or global-in-time variational principles~\cite{Walle2024Global, Sinibaldi2024Galerkin}.

Whether it be exact diagonalization or variational Monte Carlo, many methods that operate on quantum states can be improved by subspace methods.
One of the most prominent examples of subspace method is the Lanczos algorithm that improves upon the power method by considering linear combinations of the generated vectors~\cite{Golub, Saad2011, TrefethenBau, Minganti2022FasterThanTheClockLanczos}. 
Lanczos or Lanczos-inspired methods have been successfully used in conjunction with variational Monte Carlo, to refine ground state estimations~\cite{Wang2025Lanczos, Chen2022Lanczos, Sorella2001LanczosNonsense} and to simulate dynamics~\cite{Paeckel2019MPSDynamics} among other applications~\cite{Melo2025VPTLanczos}.
Subspace methods in variational Monte Carlo typically represent the relevant subspace as the span of a given basis of states, each having been obtained by an individual optimization.
For constructing the Krylov subspace~\cite{Wang2025Lanczos} or for excited states optimization~\cite{Choo2018SymmetrizeOtherWay} for example, the sequential nature of optimizations causes an accumulation of errors for later basis states.
This problem stems from the lack of a formal language to operate directly on subspaces, instead of individual states.

In this paper, we formalise a framework where the fundamental object is the subspace itself, thus providing for a more natural language to derive variational Monte Carlo algorithms operating directly on subspaces.
Inspired by a recently proposed method to optimize for several excited states at once \cite{Pfau2024}, we present the \textit{determinant state mapping}, that allows the direct manipulation of subspaces by encoding them into a single quantum state of a larger Hilbert space. 
We find that this framework not only shines a new light on some previously known results, but also establishes new methods as natural generalization of single quantum state methods.

We then leverage this framework to introduce \textit{Bridge}, a post-processing method for variational dynamics.
Using previously generated variational states approximating quantum dynamics, we improve their performance by taking linear combinations of these states \cite{Sinibaldi2024Galerkin}.
This method is inexpensive compared to the cost of generating the original states, such that it can be used systematically at a negligible cost. 
We show that in certain circumstances, Bridge can improve the performance of the original variational states by several orders of magnitude. 
Crucially, we find that this method relies on a specific estimator derived from determinant state framework in order to obtain consistent performances as the number of states increases.

The manuscript is structured as follows.
In \cref{sec:determinant_state} we introduce the formalism of the determinant state and several of its applications. 
In \cref{sec:bridge}, we present Bridge, and we demonstrate its performance on the square lattice transverse-field Ising model in \cref{sec:numerical_results}. \Cref{sec:determinant_state} and the later sections are relatively independent.

\section{Subspace methods with the determinant state}
\label{sec:determinant_state}

In quantum physics, a state is represented by a vector of the Hilbert space $\mathcal H$, up to its norm and global phase, that are irrelevant. 
In other words, quantum states are effectively one-dimensional subspaces of $\mathcal H$. 
In this section, we examine general $m$-dimensional subspaces of $\mathcal H$ and generalise the formalism of pure state quantum mechanics to them.   
This framework goes beyond purely mathematical considerations, and extends to numerical methods.
We will show that it recontextualizes the excited states variational Monte Carlo method proposed by Pfau et al. \cite{Pfau2024} as the natural generalization of variational Monte Carlo to $m$-dimensional subspaces, but also allows for the formulation of new methods, as we will illustrate by proposing a method to construct a basis for dynamics.

\subsection{Encoding subspaces as determinant states}
\label{sec:det_state_mapping}

A natural place to start in order to represent a subspace $V$ is with a basis $\left\{ \ket{\phi_k} \right\}_k$ of $V$, such that $V = \operatorname{span} \left( \left\{ \ket{\phi_k} \right\}_k \right)$.  
The exponential cost of representing each basis vector can be managed by means of variational encodings (tensor networks, neural networks...) of each basis state $\ket{\phi_k}$.

Regardless of how the basis vectors are encoded, this representation of a subspace relies on an arbitrary choice of basis, such that the mapping
\begin{equation}
V \mapsto \left\{ \ket{\phi_k} \right\}_k
\end{equation}
is not well-defined.
For example, any permutation of the basis states $\ket{\phi_k}$ encodes the same subspace.
More generally, given any complex $m \times m$ invertible matrix $B$ that represents a change of basis, the family $\left\{ \ket{\varphi_k} \right\}_k$ defined by
\begin{equation}
\label{eq:change-of-basis}
\ket{\varphi_p} = \sum_k B_{kp} \ket{\phi_k}
\end{equation}
would parametrize the same subspace, since $\operatorname{span} \left( \left\{ \ket{\phi_k} \right\}_k \right) = \operatorname{span} \left( \left\{ \ket{\varphi_k} \right\}_k \right)$.
In order to have a well-defined and bijective encoding, we should instead represent a subspace by the set of all its bases.
Formally, we define on $\mathcal H^m$, the set of families of $m$ vectors of $\mathcal H$, the equivalence relation $\sim$ such that two families are equivalent if and only if they span the same subspace. Then, the following map is a well-defined bijective encoding of a subspace:
\begin{equation}
\begin{cases}
\gr_m(\mathcal H) &\to \mathcal H^m / \sim \\
V &\mapsto \left[ \left\{ \ket{\phi_k} \right\}_k \right]
\end{cases},
\end{equation}
where $\gr_m(\mathcal H)$ is the set of subspaces of $\mathcal H$ of dimension $m$, known as the Grassmannian, and $\left[ \left\{ \ket{\phi_k} \right\}_k \right]$ is the equivalence class of $\left\{ \ket{\phi_k} \right\}_k$.

In practice, however, we can only manipulate a representative of a class, and not the class itself.
This implies that that any expression involving a subspace should be independent from the choice of representative.

However, working with representatives in a representative-independent way is well-developed in the context of quantum states.
A quantum state is an element of the projective space $\gr_1 (\mathcal H)$, that is, it is equivalence classes under multiplication by a scalar. 
We will now show that it is possible to transform equivalence classes of $\mathcal H^m$ under change of basis into equivalence classes of $\mathcal H^{\otimes m}$ under multiplication by a scalar. 
This will effectively allow us to map a subspace onto a quantum state.

Inspired by Pfau et al. \cite{Pfau2024}, we propose to encode the subspace $V = \operatorname{span} \left( \left\{ \ket{\phi_k} \right\}_k \right)$ with the quantum state $\bm{\ket{\phi_A}}$ given by the antisymmetrization of a given basis,
\begin{equation}
\bm{\ket{\phi_A}} = \bm{S_-} \ket{\phi_1} \otimes \ldots \otimes \ket{\phi_m},
\end{equation}
where $\bm{S_-}$ is the antisymmetrization operator (see \cref{def:antisymmetrizer}).
The central motivation for this encoding is that given two different bases $\left\{ \ket{\phi_k} \right\}_k$ and $\left\{ \ket{\varphi_k} \right\}_k$ of the same subspace, related by a change of basis matrix $B$ (see \cref{eq:change-of-basis}), the antisymmetrization of the two bases will differ only by a complex scalar factor $\det B$, that is,
\begin{equation}
\label{eq:det_cob}
\bm{\ket{\varphi_A}} = \det B \bm{\ket{\phi_A}}.
\end{equation}
The proof is reported in \cref{prop:slater_cob}.
The two corresponding quantum states of $\mathcal H^{\otimes m}$ are therefore identical.

For example, it is easy to see that if the basis states $\ket{\phi_k}$ are permuted, their antisymmetrization will be modified by a factor $\pm 1$.
The equivalence classes of $\mathcal H^m$ under change of basis have thus been turned into equivalence classes of $\mathcal H^{\otimes m}$ under multiplication by a scalar, or in other words, into quantum states of $\mathcal H^{\otimes m}$.

More specifically, this mapping encodes a subspace by a Slater determinant, such that we call the quantum state $\bm{\ket{\phi_A}}$ the \textit{determinant state} associated to the subspace $V = \operatorname{span} \left( \left\{ \ket{\phi_k} \right\}_k \right)$. We emphasize that, when thought of as an encoding of a subspace, both the global phase and the norm of this vector are irrelevant.

Formally, we have established the mapping
\begin{equation}
\label{eq:determinant_state_mapping}
\begin{cases}
\gr_m(\mathcal H) &\to S / (\mathbb C, \times) \\
V &\mapsto \left[ \bm{S_-} \ket{\phi_1} \otimes \ldots \otimes \ket{\phi_m} \right]
\end{cases}
\end{equation}
where $S$ is the set of Slater determinant vectors, and $S / (\mathbb C, \times)$ is thus the set of Slater determinants understood as quantum states, that is, defined up to a complex scalar multiple.
We have established that this mapping is well-defined, it is surjective, and we show in the appendix that it is injective (\cref{prop:injectivity}). This mapping that we introduced in the context of encoding subspaces of a Hilbert space is in fact known in mathematics as the \textit{Plücker embedding} \cite{Griffiths1994AlgebraicGeometry}.

To conclude, we have presented a one-to-one mapping of a subspace of dimension $m$ to a Slater determinant of $\mathcal H^{\otimes m}$ constructed from any basis of that subspace. 
In practice we will still encode the states $\ket{\phi_k}$ themselves, but the representation as a determinant state will serve to ensure that any expression we use is representative-independant. 
Furthermore, the fact that we can identify a subspace with a quantum state suggests that we can use expressions that are known for quantum states, and apply them on determinant states as guidance to generalize them on subspaces.

\subsection{Distance between subspaces from distance between quantum states}
\label{sec:subspace_fidelity}

Let us now see how the determinant state perspective is helpful for computing distances between subspaces.
The space of quantum states, that is, $\mathcal H / (\mathbb C, \times)$, is endowed with the Fubini-Study metric
\begin{equation}
d \left( \ket \psi, \ket \phi \right) = \arccos \left( \frac{\left| \braket{\psi | \phi} \right|}{\lVert \psi \rVert \lVert \phi \rVert} \right).
\end{equation}
We emphasize that the Fubini-Study metric is independent of the choice of representative of the quantum states, that is,
\begin{equation}
\label{eq:fs_inv}
d \left( \alpha \ket \psi, \beta \ket \phi \right) = d \left( \ket \psi, \ket \phi \right),
\end{equation}
for any $\alpha, \beta \in \mathbb C$.

It is then natural to define a Fubini-Study metric between subspaces through their respective determinant states. 
For any two $m$-dimensional subspaces $U$ and $V$ of $\mathcal H$, with respective arbitrary bases $\left\{ \ket{\psi_k} \right\}_k$ and $\left\{ \ket{\phi_k} \right\}_k$, we define
\begin{equation}
\label{eq:subspace_fidelity}
d_s \left( U, V \right) = d \left( \bm{\ket{\psi_A}}, \bm{\ket{\phi_A}} \right).
\end{equation}
First, we note that thanks to the properties of the determinant state mapping, this function is well-defined, in the sense that it does not depend on the chosen bases.
Indeed, if $\{ \ket{\tilde \psi_k} \}_k$ and $\{ \ket{\tilde \phi_k} \}_k$ are different bases of $U$ and $V$ respectively, then $d ( \bm{\ket{\psi_A}}, \bm{\ket{\phi_A}} ) = d ( \bm{\ket{\tilde{\psi}_A}}, \bm{\ket{\tilde{\phi}_A}} )$,
as a consequence of \cref{eq:det_cob} and \cref{eq:fs_inv}.

Since the Fubini-Study metric is a metric and the determinant state mapping (\cref{eq:determinant_state_mapping}) is injective, it implies that $d_s$ is itself a metric over $\gr_m (\mathcal H)$.\footnote{We remark that, in general, any choice of metric over the space of determinant states would induce a metric over subspaces.}

We can examine in more detail the behaviour of this metric. In accordance with the definition of a metric, $d_s(U,V) = 0$ if and only if $U = V$. Conversely, the maximum distance of $\pi/2$ is attained if and only if $V$ contains a non-zero vector that is orthogonal to $U$ (see \cref{prop:max_distance_subspaces}). 
More generally, we can relate this distance to the volume of the $m$-dimensional parallelotope spanned by a orthonormal basis of $V$ orthogonally projected on $U$ (see \cref{lem:det_state_fidelity}).

Having seen that the Fubini-Study metric between determinant states defines a metric between subspaces, we can reuse all the tools developed for estimating and optimizing the Fubini-Study metric between states and apply them to subspaces \cite{Sinibaldi2023Unbiasing, Gravina2024PTVMC, Havlicek2023AmplitudeRatios}.

\subsection{Operators acting on subspaces through determinant states}
\label{sec:det_state_operators}

In quantum physics, the two fundamental kinds of operators are unitary and Hermitian.
In this section, we examine their natural generalization to operators acting on subspaces.

We consider an operator $\hat{A}$ acting on $\mathcal H$ and a subspace $V$ with basis $\left\{ \ket{\phi_k} \right\}_k$, encoded through its determinant state $\bm{\ket{\phi_A}}$. 
The objective of this section is to define an operator $\bm{\hat{A}}$ acting on $\mathcal H^{\otimes m}$ such that $\bm{\hat{A}} \bm{\ket{\phi_A}}$ defines the action of $\hat{A}$ on the subspace $V$ in a meaningful way.
In this section we will be using the $\hat{\cdot}$ symbol to disambiguate operators $\hat U$ from subspaces $V$.

\paragraph{Unitary operators}
The natural way to define the action of a unitary operator $\hat{U}$ on a subspace $V$ is
\begin{equation}
\hat U V = \left\{ \hat{U} \ket{\phi}, \ket{\phi} \in V \right\}.
\end{equation}
The set $\hat U V$ is still a subspace, and if $\left\{ \ket{\phi_k} \right\}_k$ is a basis of $V$, then $\left\{ \ket{\tilde \phi_k} \right\}_k = \left\{\hat{U} \ket{\phi_k} \right\}_k$ is a basis of $\hat U V$. 
The determinant state of $\hat U V$ can thus be expressed as
\begin{equation}
\bm{\ket{\tilde \phi_A}} = \bm{S_-} \bigotimes_k \hat{U} \ket{\phi_k} 
= \hat{U}^{\otimes m} \bm{\ket{\phi_A}},
\end{equation}
since $\hat{U}^{\otimes k}$ commutes with $\bm{S_-}$. 
We deduce that the natural way to generalize a unitary operator $\hat{U}$ as an operator acting on determinant states is to define $\bm{\hat U} = \hat U^{\otimes k}$. 
We remark that $\bm{\hat U}$ is indeed a unitary operator of $\mathcal H^{\otimes m}$.

\paragraph{Hermitian operators}
Using the previously established generalization of unitaries, we will see that while unitaries are generalized by taking a tensor product, Hermitian operators are generalized by taking a sum.
Consider the family of unitary operators generated by the hermitian operator $\hat H$ 
\begin{equation}
\label{eq:infinitesimal_generator}
\hat U_a = e^{-i a \hat H}, \ a \in \mathbb R.
\end{equation}
In the previous paragraph, we have defined the generalization of $\hat U_a$ to the space of determinant states to be $\bm{\hat U_a} = \hat U_a^{\otimes m}$.
Substituting \cref{eq:infinitesimal_generator} in the definition of the generalized unitary $\bm{\hat U_a}$, we obtain, 
\begin{equation}
\bm{\hat U_a} = e^{-i a \sum_k \bm{\hat H_k}},
\end{equation}
where $\bm{\hat H_k}$ is the operator $\hat H$ acting on the $k$-th copy of the Hilbert space
\begin{equation}
\bm{\hat H_k} = \mathbb 1 \otimes \ldots \otimes \underset{\substack{\uparrow \\ \mathclap{\text{$k$-th position}}}}{\hat H} \otimes \ldots \otimes \mathbb 1.
\end{equation}
The unitary $\bm{\hat U_a}$ is therefore generated by the Hermitian operator $\sum_k \bm{\hat H_k}$, so we deduce that the natural generalization of $\hat H$ as an operator that acts on subspaces through the determinant state is $\bm{\hat H} = \sum_k \bm{\hat H_k}$.

One might question why we defined the action of a unitary operator on a subspace as $\hat U V$, but not the action of a Hermitian operator as $\hat H V$. There are three reasons. 
(i) First, unitary operators are operators that are fundamentally meant to be applied on states. The action of a unitary operator on a state defines a transformed state. 
On the other hand, the action of a Hermitian operator on a state does not fundamentally have a meaning. 
What is fundamental in the definition of Hermitian operators is their average values, which defines the expected value of a measurement.
(ii) The second justification is that unitary operators and Hermitian operators are related by the former being generated by the latter.
While our construction explicitly enforced this relationship for the generalized operators, defining the action of $\hat H$ on $V$ as $\hat H V$ would not satisfy it. In particular, the time evolution operator would no longer be generated by the Hamiltonian.
(iii) Lastly, following the same procedure as for unitary operators would have led to the operator $H^{\otimes m}$. 
While this operator is also Hermitian, dimensional analysis indicates that it is not a proper generalization of $\hat H$ since it does not have the dimension of an energy.

\subsection{Energy of a subspace and the spectrum of the Rayleigh matrix}
\label{sec:subspace_energy}

In the previous section we examined the generalization of unitary and Hermitian operators for determinant states. 
Since we have generalized the notion of a Hamiltonian to subspaces, it is now natural to examine the resulting notion of energy.

\paragraph{Energy of the determinant state}
Following the generalization of Hermitian operators in \cref{sec:det_state_operators}, the energy of a subspace $V$ spanned by a basis $\left\{ \ket{\phi_k} \right\}_k$ for a Hamiltonian $H$ is given by
\begin{equation}
\label{eq:subspace_energy}
\frac{\bm{\braket{\phi_A | H | \phi_A}}}{\bm{\braket{\phi_A | \phi_A}}}.
\end{equation}

Since $\bm H$ commutes with the antisymmetrizer $\bm S_-$, which is the orthogonal projector on the antisymmetric subspace $V_-(\mathcal H^{\otimes m})$ (see \cref{def:antisymmetric_subspace}), then $V_-(\mathcal H^{\otimes m})$ is an invariant subspace under $\bm H$.
Therefore, the restriction of $\bm{H}$ to the subspace of $\mathcal H^{\otimes m}$ spanned by the determinant states is well-defined. If we denote by $E_k$ the ordered eigenvalues of $H$ and by $\ket{E_k}$ their associated eigenstate, then the minimum energy of $\bm H$ restricted to $V_-(\mathcal H^{\otimes m})$ is
\begin{equation}
\bm{E_0} = \sum_{k=0}^{m-1} E_k.
\end{equation}
The associated eigenstate is the determinant state corresponding to the subspace $\operatorname{span} \left( \left\{ \ket{E_k} \right\}_{0 \leq k \leq m-1} \right)$.

Based on this observation, Pfau et al. \cite{Pfau2024} use \cref{eq:subspace_energy} as a cost function to optimize for the $m$ lowest excited states. Pfau et al. further show that this energy of a subspace has a simple expression that does not involve determinants,
\begin{equation}
\label{eq:det_rm_connection}
\frac{\bm{\braket{\phi_A | H | \phi_A}}}{\bm{\braket{\phi_A | \phi_A}}} = \tr \left( G^{-1} G^{(H)} \right),
\end{equation}
where $G$ is the Gram matrix of the family $\left\{ \ket{\phi_k} \right\}_k$ and $G^{(H)}$ is the projected Hamiltonian on this family,
\begin{align}
\label{eq:g_def}
G_{ij} &= \braket{\phi_i | \phi_j}, \\
\label{eq:gh_def}
G_{ij}^{(H)} &= \braket{\phi_i | H | \phi_j}.
\end{align}
We provide a detailed proof of \cref{eq:det_rm_connection} in \cref{prop:det_energy}.

The appearance of the matrix $G^{-1} G^{(H)}$ within the determinant state energy shows that this framework is in fact a reformulation of subspace methods for eigenvalue problems. In order to better examine this connection, let us first recall the theory of subspace methods for eigenvalue problems, in particular in the context of non-orthonormal bases.

\paragraph{Subspace methods and the Rayleigh matrix}
In the Lanczos method \cite{Golub, TrefethenBau}, extremal eigenvalues of a Hermitian matrix $A$ are estimated by orthonormalizing the family of $m$ Krylov vectors $\left\{ A^k v \right\}_k$ into the family $\left\{ u_k \right\}_k$, and taking the eigenvalues of the projected $m \times m$ tridiagonal matrix with coefficients $T_{ij} = u_i^\dag A u_j$.

The same procedure would also work with different families of vectors ${\phi_k}$, not necessarily orthonormal, as long as they are suitably chosen to have good overlap with the extremal eigenvectors \cite{Nightingale2001, Shixin2025Unified}.
Given a family of well-chosen linearly independent states $\ket{\phi_k}$, the eigenvalues $\mu_k$ of $G^{-1} G^{(H)}$ provide an approximation of the low-lying spectrum of the Hamiltonian $H$, where $G$ and $G^{(H)}$ are defined by \cref{eq:g_def,eq:gh_def}.\footnote{The matrix $G^{-1} G^{(H)}$ is analogous to the matrix $T$ from the Lanczos method, with the difference that the inverse of the Gram matrix $G$ accounts for the fact that the family $\left\{ \ket{\phi_k} \right\}_k$ is not necessarily orthonormal.}

The eigenvalues $\mu_k$ of $G^{-1} G^{(H)}$ only depend on the subspace $\operatorname{span} \left( \left\{ \ket{\phi_k} \right\}_k \right)$, and not on the specific choice of basis. These eigenvalues are real.\footnote{The matrix $G^{-1} G^{(H)}$ is not Hermitian in general, so it is not obvious that its eigenvalues $\mu_k$ should be real. However, this matrix is similar to $G^{- \frac 1 2} G^{(H)} G^{- \frac 1 2}$ such that they share the same eigenvalues $\mu_k$. And since $G^{- \frac 1 2} G^{(H)} G^{- \frac 1 2}$ is manifestly Hermitian, we must have $\mu_k \in \mathbb R$.}

In particular, these approximations satisfy a generalized variational principle: if we order the eigenvalues $\mu_k$ of $G^{-1} G^{(H)}$, then they are upper bounds to the corresponding ordered eigenvalues $E_k$ of the Hamiltonian $H$,
\begin{equation}
\label{eq:cit}
\forall k, \ E_k \leq \mu_k.
\end{equation}
Furthermore, the eigenvectors $\alpha^{(k)}$ of $G^{-1} G^{(H)}$ define states of the full Hilbert space
\begin{equation}
\ket{\psi^{(k)}} = \sum_p \alpha_p^{(k)} \ket{\phi_p},
\end{equation}
such that $\ket{\psi^{(k)}}$ has energy $\mu_k$. The proofs of these statements are reported in \cref{app:ritz_values}.
In \cref{app:ritz_values}, we discuss more generally why the eigenvalues $\mu_k$ of $G^{-1} G^{(H)}$ can be seen as the natural generalization of the eigenvalues of $H$ in the subspace $V = \operatorname{span}(\left\{ \ket{\phi_{\theta_k}} \right\}_k)$.

We finally remark that \cref{eq:cit} is a generalization of the ground state variational principle
\begin{equation}
\forall \ket \psi \in \mathcal H, \ E_0 \leq \frac{\braket{\psi | H | \psi}}{\braket{\psi | \psi}}.
\end{equation}
In particular, we see that $G^{-1} G^{(H)}$ appears as a generalization of the Rayleigh quotient $\braket{\psi | \psi}^{-1} \braket{\psi | H | \psi}$. 
As such, we call $G^{-1} G^{(H)}$ the \textit{Rayleigh matrix}.

From this perspective, a natural variational principle thus consists in minimizing the sum of the eigenvalues $\mu_k$ of the Rayleigh matrix, that is, $\tr ( G^{-1} G^{(H)} )$. This variational principle is known as the Ky Fan variational principle \cite{Ding2024KyFan}.\footnote{Since the trace of the Rayleigh matrix is independent of the chosen basis, we can choose an orthonormal basis $\left\{ \ket{\psi_k} \right\}_k$, such that $\tr \left( G^{-1} G^{(H)} \right) = \sum_k \braket{\psi_k | H | \psi_k}$, which is the usual form of the Ky Fan variational principle.}

\paragraph{Determinant state and Rayleigh matrix}
We now understand \cref{eq:det_rm_connection} as expressing the equivalence between the determinant state variational principle, and the Ky Fan variational principle.

Furthermore, this connection provides a better understanding of the Rayleigh matrix from the point of view of the determinant state, as the object whose eigendecomposition defines the natural generalization of the spectrum of $H$ restricted to that subspace. More practically, it provides bounds that generalize the ground state variational principle for each approximate energy $\mu_k$ found in the subspace (\cref{eq:cit}).

In the next section, we explore practical applications of the formalism that we have developed so far in this section.

\subsection{Applications of the determinant state to variational Monte Carlo methods}
\label{sec:det_state_applications}

The determinant state is also a convenient tool for practical applications. In a variational context, a subspace $V$ can be encoded by a basis of $m$ variational states $\ket{\phi_{\theta_k}}$ through the determinant state mapping, and we will see in this section that these variational subspace states can also be manipulated efficiently.

\paragraph{Sampling}
The amplitudes of the determinant state can be easily evaluated. For an element $\bm s = (s_1, s_2, \ldots, s_m)$ of the computational basis of $\mathcal H^{\otimes m}$, such that $s_k$ is an element of the computational basis of $\mathcal H$, the corresponding amplitude of $\bm{\ket{\phi_A}}$ can be expressed as
\begin{equation}
\bm{\braket{s | \phi_A}} = \frac{1}{m!} \det \Phi(\bm s),
\end{equation}
where the $m \times m$ matrix $\Phi(\bm s)$ is defined by $[\Phi(\bm s)]_{ij} = \braket{s_i | \phi_j}$ (see proof of \cref{prop:det_slater}). As such, even though $\bm{\ket{\phi_A}}$ is a linear combination of $m!$ states, its amplitudes can be queried efficiently. This also means that the probability distribution $\left| \bm{\braket{s | \phi_A}} \right|^2$ can be approximately sampled from using Markov chain Monte Carlo.\footnote{The determinant structure of the amplitudes allows for the use of rank-1 update formulas to update the determinant at a cost of $O(m^2)$ instead of $O(m^3)$ \cite{Nukala2009FastDeterminant}. While this method can increase the speed of sampling, it may lead to an accumulation of numerical errors. Since in our case the cost of evaluating determinant state amplitudes is dominated by the evaluation of $\Phi(\bm s)$ and $\Phi^{(H)}(\bm s)$ rather than the computation of the determinant, our numerical results in \cref{sec:numerical_results} were obtained without the rank-1 update method.}

\paragraph{Fubini-Study metric}
The subspace distance discussed in \cref{sec:subspace_fidelity} can be estimated using the decomposition of the Fubini-Study distance between states
\begin{multline}
\label{eq:infidelity_estimation}
d_s(U, V) = \arccos \left( \mathbb E_{\bm s \sim |\bm{\psi_A(s)}|^2} \left[ \frac{\bm{\braket{s | \phi_A}}}{\bm{\braket{s | \psi_A}}} \right] \right.\\
\left. \mathbb E_{\bm s \sim |\bm{\phi_A(s)}|^2} \left[ \frac{\bm{\braket{s | \psi_A}}}{\bm{\braket{s | \phi_A}}} \right] \right),
\end{multline}
which leads to an efficient estimator with the addition of control variates, as discussed in Ref.~\cite{Gravina2024PTVMC}.

The samples $\bm s = (s_1, \ldots, s_m)$ are configurations of the Hilbert space $\mathcal H^{\otimes m}$. 
They are sampled according to the probability distribution of $\bm{\ket{\psi_A}}$ and $\bm{\ket{\phi_A}}$ respectively, namely, $\left| \det \Psi(\bm s) \right|^2$ and $\left| \det \Phi(\bm s) \right|^2$.

One possible application of the Fubini-Study distance between subspaces is the search for approximately invariant subspaces, which could be used to simulate dynamics.

\paragraph{Approximately invariant subspace}
Let $\ket{\psi_0}$ be the initial state for dynamics. 
If we can identify an $m$-dimensional subspace $V$ that contains $\ket{\psi_0}$ and that is invariant under the Hamiltonian $H$, then the dynamics of $\ket{\psi_0}$ under $H$ can be restricted to $V$ without approximation. 
In general, however, for reasonably small $m$, we do not expect the existence of an invariant subspace containing $\ket{\psi_0}$. 

Alternatively, we can search for an \textit{approximately invariant subspace}, that is, a subspace $V$ which contains $\ket{\psi_0}$ and such that $HV = \left\{H \ket{\phi}, \ \ket{\phi} \in V\right\}$ is close enough to $V$ according to some distance, such as the one defined by \cref{eq:subspace_fidelity}.\footnote{As we discussed in \cref{sec:det_state_operators}, applying a Hermitian operator directly on states is not a natural operation. Instead of requiring $V_\theta$ to be close to the subspace $HV_\theta$, we could instead consider the subspace $e^{-i H \delta} V_\theta$ for some $\delta \in \mathbb R$. While both choices would lead to the optimization problem minimizing at an invariant subspace in the case that it exists, it is not yet clear what the minimum would be in either cases if an invariant subspace does not exist.}
To solve this problem, we can variationally encode a subspace $V_\theta = \left\{ \ket{\psi_0}, \ket{\psi_{\theta_1}}, \ldots, \ket{\psi_{\theta_{m-1}}} \right\}$ and solve the optimization problem
\begin{equation}
\theta^* = \operatorname{arg\,min}_\theta \ d_s(V_\theta, HV_\theta).
\end{equation}
The loss function of this optimization problem can be estimated through \cref{eq:infidelity_estimation}, and estimation of its gradient has also been studied \cite{Gravina2024PTVMC}.

Once an approximately invariant subspace $V^*$ has been obtained, we simply need to perform the dynamics in that subspace, which is precisely the object of \cref{sec:bridge}.

\paragraph{Energy estimation}
It is also straightforward to estimate the energy of a subspace (\cref{eq:subspace_energy}) using the standard decomposition
\begin{equation}
\frac{\bm{\braket{\phi_A | H | \phi_A}}}{\bm{\braket{\phi_A | \phi_A}}} = \mathbb E_{\bm s \sim |\bm{\phi_A(s)}|^2} \left[ \frac{\bm{\braket{s | H | \phi_A}}}{\bm{\braket{s | \phi_A}}} \right].
\end{equation}
As shown in \cref{sec:subspace_energy}, the minimization of this scalar quantity provides the subspace spanned by the $m$ eigenstates of $H$ with lowest energy, but extracting the individual energies and the corresponding states requires computing the entire Rayleigh matrix $G^{-1} G^{(H)}$.

In \cref{sec:subspace_energy} we argued that the Rayleigh quotient of a single state ${\braket{\psi | H | \psi}}/{\braket{\psi | \psi}}$ can be generalised to the Rayleigh matrix of a subspace $G^{-1}G^{(H)}$.
Similarly, the Monte Carlo estimator for the Rayleigh quotient
\begin{equation}
\frac{\braket{\psi | H | \psi}}{\braket{\psi | \psi}} = \mathbb{E}_{s \sim |\psi(s)|^2} \left[ \frac{\braket{s | H | \psi}}{\braket{s | \psi}} \right],
\end{equation}
is generalised to the following estimator for the Rayleigh matrix (due to Pfau et al. in Ref.~\cite{Pfau2024}, see proof in \cref{prop:rm_estimator})
\begin{equation}
\label{eq:rm_decomposition}
G^{-1} G^{(H)} = \mathbb{E}_{\bm s \sim |\bm{\phi_A(s)}|^2} \left[ \Phi(\bm s)^{-1} \Phi^{(H)}(\bm s) \right],
\end{equation}
where the matrices $\Phi(\bm s)$ and $\Phi^{(H)}(\bm s)$ are defined by 
\begin{align}
\left[ \Phi(\bm s) \right]_{ij} &= \braket{s_i | \phi_j}, \\
\left[ \Phi^{(H)}(\bm s) \right]_{ij} &= \braket{s_i | H | \phi_j}.
\end{align}
The estimator of the Rayleigh matrix also generalizes the desirable zero-variance property of the estimator of the energy: if the subspace is an invariant subspace under $H$ (and thus contains $m$ proper eigenvectors of $H$), then the local energy matrix $\Phi(\bm s)^{-1} \Phi^{(H)}(\bm s)$ does not depend on $\bm s$ (\cref{prop:zero-variance}).

\paragraph{Extracting states from subspaces}
In the excited states estimation method discussed above, estimating the Rayleigh matrix was required to extract the eigenvalue and eigenstate approximations from the optimized subspace. This process can be extended to any subspace, irrespective of how its basis states were obtained.

In the context of ground state optimization, the solution obtained by variational Monte Carlo can be improved by extracting the linear combination from a basis consisting of the optimization result and other states.
Example of states that can be added to the basis include states reached earlier during the optimization, or Krylov states built from the final optimization state \cite{Wang2025Lanczos, Chen2022Lanczos, Sorella2001LanczosNonsense}.

Another possible use of extracting optimal linear combinations out of a subspace from the Rayleigh matrix is the generalization of ground states optimized for given values of the Hamiltonian parameters \cite{Duguet2024, Rath2025}. Consider a parametric Hamiltonian of the form
\begin{equation}
H(\gamma) = \sum_p \gamma_p H_p.
\end{equation}
Suppose that we have obtained $m$ states $\ket{\phi_k} = \ket{\phi_{\gamma_k}}$ that approximate the ground state at $m$ different values of the Hamiltonian parameter $\gamma$. Exploiting the continuity of eigenstates within a phase, we can approximate the ground state at other values of $\gamma$ as linear combinations of the states $\ket{\phi_k}$. The Rayleigh matrix at any parameter $\gamma$ is given by
\begin{equation}
\label{eq:rm_linearity}
G^{-1} G^{(H(\gamma))} = \sum_p \gamma_p G^{-1} G^{(H_p)}.
\end{equation}
Therefore, it can be estimated at any value $\gamma$ with a single estimation of each Rayleigh matrices $G^{-1} G^{(H_p)}$, using either the determinant state estimator (\cref{eq:rm_decomposition}) or the sum of states estimator (\cref{eq:sum_of_states_1,eq:sum_of_states_2}). Then, for any value $\gamma$, we simply need to diagonalize the small $m \times m$ Rayleigh matrix $G^{-1} G^{(H(\gamma))}$ in order to obtain its lowest eigenvalue $\mu^{(\gamma)}$ and the corresponding eigenvector $\alpha^{(\gamma)}$. It defines a state of the full Hilbert space
\begin{equation}
\ket{\psi^{(\gamma)}} = \sum_k \alpha_k^{(\gamma)} \ket{\phi_k}
\end{equation}
that approximates the ground state of $H(\gamma)$, and has energy $\mu^{(\gamma)}$.

We illustrate this method on a simple example in \cref{fig:carpet}. The Hamiltonian is the transverse-field Ising model defined in \cref{eq:tfim} on an open chain with 12 sites. We choose the open chain in order to decrease the number of symmetries and ensure that the ground states do not belong to a subspace whose dimension is much smaller than that of the Hilbert space.

\begin{figure}[ht]
\centering
\includegraphics[width=\linewidth]{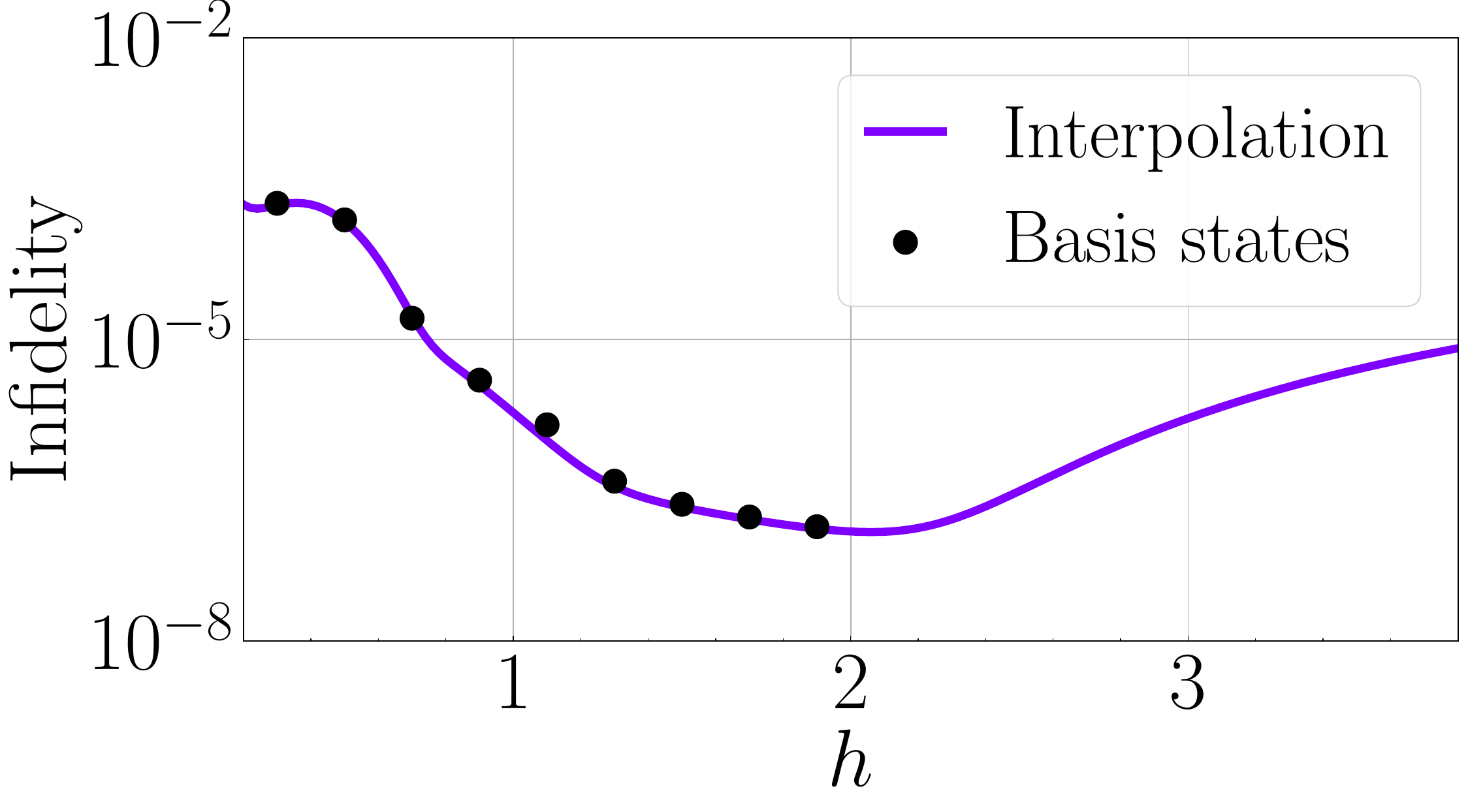}
\caption{Infidelity between the exact ground state and estimations of the ground state for various values of the transverse-field Ising Hamiltonian parameter $h$. The parameter $J$ is fixed to $J=1$. The Hamiltonian is defined on an open chain of 12 sites. The black dots indicate nine ground state approximations obtained for various values of $h$. These variational states are restricted Boltzmann machines \cite{Carleo2017} with $\alpha=4$. The purple line shows the precision of the ground state interpolation obtained by taking linear combination of the nine ground state approximations.}
\label{fig:carpet}
\end{figure}

We see that the linear combination states successfully generalize to unseen values of $\gamma$, both for interpolation and, to some degree, extrapolation.

The average value of any operator $A$ on these linear combination states can then be conveniently evaluated using the Rayleigh matrix of $A$. For a given value of $\gamma$, let $p$ be the index of the lowest eigenvalue of $G^{-1} G^{(H(\gamma))}$, and let $P_\gamma$ denote the matrix of eigenvectors of $G^{-1} G^{(H(\gamma))}$. Then, under the assumption that the eigenvalues of $G^{-1} G^{(H(\gamma))}$ are all distinct, the average value of $A$ on the ground state approximation at $\gamma$ is given by
\begin{equation}
\frac{\braket{\psi^{(\gamma)} | A | \psi^{(\gamma)}}}{\braket{\psi^{(\gamma)} | \psi^{(\gamma)}}} = \left[ P_\gamma^{-1} G^{-1} G^{(A)} P_\gamma \right]_{pp}.
\end{equation}
A proof is reported with \cref{prop:ritz_vector_average}.

\begin{table*}[ht]
\centering
\begin{tblr}{hlines, vlines, colspec={c c | c c}, row{1}={font=\bfseries}, row{1-Z}={font=\footnotesize}, rowsep=0.2cm}
 & Section & Single state & Subspace \\
\hline
Original object & \ref{sec:det_state_mapping} & 1-d subspace with basis $\left\{ \ket \psi \right\}$ & $m$-d subspace with basis $\left\{ \ket{\phi_k} \right\}_k$ \\
Quantum state & \ref{sec:det_state_mapping} & $\ket \psi$ & $\bm{\ket{\phi_A}} = \bm{S_-} \ket{\phi_1} \otimes \ldots \otimes \ket{\phi_m}$ \\
Hilbert space & \ref{sec:det_state_mapping} & $\mathcal H$ & $\mathcal H^{\otimes m}$ \\
Distance & \ref{sec:subspace_fidelity} & $\displaystyle d \left( \ket \psi, \ket \phi \right) = \arccos \left( \frac{\left| \braket{\psi | \phi} \right|}{\lVert \psi \rVert \lVert \phi \rVert} \right)$ & $d_s(U, V) = d(\bm{\ket{\psi_A}}, \bm{\ket{\phi_A}})$ \\
Unitary operator & \ref{sec:det_state_operators} & $U$ & $\bm U = U^{\otimes m}$ \\
Hermitian operator & \ref{sec:det_state_operators} & $H$ & $\bm H = \sum_k \bm{H_k}$ \\
Energy & \ref{sec:subspace_energy} & $\displaystyle \frac{\braket{\psi | H | \psi}}{\braket{\psi | \psi}}$ & $\displaystyle \frac{\bm{\braket{\phi_A | H | \phi_A}}}{\bm{\braket{\phi_A | \phi_A}}} = \tr \left( G^{-1} G^{(H)} \right)$ \\
Variational principle & \ref{sec:subspace_energy} & $E_0 \leq \displaystyle \frac{\braket{\psi | H | \psi}}{\braket{\psi | \psi}}$ & $\displaystyle \forall k, \ E_k \leq \lambda_k \left( G^{-1} G^{(H)} \right)$ \\
Energy estimator & \ref{sec:det_state_applications} & $\displaystyle \frac{\braket{\psi | H | \psi}}{\braket{\psi | \psi}} = \mathbb{E}_{s \sim |\psi(s)|^2} \left[ \frac{\braket{s | H | \psi}}{\braket{s | \psi}} \right]$ & $\displaystyle G^{-1} G^{(H)} = \mathbb{E}_{\bm s \sim |\bm{\phi_A(s)}|^2} \left[ \Phi(\bm s)^{-1} \Phi^{(H)}(\bm s) \right]$ \\
\end{tblr}
\caption{Summary of the generalisation of quantum states to subspaces, as explored in \cref{sec:determinant_state}.}
\label{tab:summary}
\end{table*}

\section{Bridge: improving variational dynamics accuracy with subspace methods}
\label{sec:bridge}
Various methods have been proposed for the variational approximation of quantum dynamics, including time-dependent variational Monte Carlo \cite{Carleo2017Boson, Schmitt2020MarkusMarkus}, projected-time variational Monte Carlo (p-tVMC) \cite{Sinibaldi2023Unbiasing, Gravina2024PTVMC}, and global-in-time dynamics \cite{Walle2024Global, Sinibaldi2024Galerkin}.
In this section, we discuss how the subspace spanned by approximate Arnoldi states obtained by one of the above-mentioned methods can be used to extract linear combinations that improve the precision of the approximate dynamics.
In the following, we will refer to this approach as \textit{Bridge}, where the Arnoldi basis states are seen as pillars on which a bridge is built.
We remark that the use of linear combinations of Arnoldi states for extrapolating  dynamics has already been examined for exact dynamics~\cite{Minganti2022FasterThanTheClockLanczos}. 
A related strategy has recently been discussed in Ref.~\cite{Sinibaldi2024Galerkin} for global-in-time variational dynamics. We comment on the relation to this work at the end of \cref{sec:estimation_rayleigh_matrix}.

We will see that Bridge is a computationally inexpensive method that can significantly mitigate the errors arising from the time discretization of the method that produced the approximate Arnoldi basis states.

\paragraph{}
Consider a set of $m$ variational states $\left\{ \ket{\phi_{\theta_k}} \right\}_{k \in \{ 0, \ldots, m-1 \}}$ that approximate the exact dynamics at times $t = k\delta$ of a system evolving according to the Hamiltonian $H$ with initial state $\ket{\phi_{\theta_0}}$.
These states may have been produced by t-VMC, p-tVMC, tensor network methods or any other variational approach, such that their amplitudes can be queried efficiently.
Since these states define a low-dimensional subspace, we can identify the time-dependent linear combination 
\begin{equation}
\ket{\psi_{\alpha(t)}} = \sum_k \alpha_k(t) \ket{\phi_{\theta_k}}, \ \alpha(t) \in \mathbb C^m
\end{equation}
which solves the Schrödinger equation within the variational space $\operatorname{span}(\left\{ \ket{\phi_{\theta_k}} \right\}_k)$, and take it as a new and possibly more accurate approximation of the dynamics.
Since at time $t = k\delta$ the linear variational state can represent $\ket{\phi_{\theta_k}}$, the optimal state in the subspace $\operatorname{span}(\left\{ \ket{\phi_{\theta_k}} \right\}_k)$ has at least the accuracy of the original basis states, but it could achieve a better approximation of the dynamics. Although it is not guaranteed, we empirically observe that the linear combinations obtained by this procedure are always close to the optimal linear state, and thus never underperforms compared to the original states $\ket{\phi_{\theta_k}}$.

Importantly, this approach not only improves on previously computed solutions at times $t = k\delta$, but also allows interpolating between times $t=k\delta$ and extrapolating beyond $t=(m-1)\delta$.

\paragraph{}
In the next section, we will discuss how the parameter $\alpha(t)$ is determined through the time-dependent variational principle.
We emphasize that during this optimization, the parameters $\theta_k$ remain fixed and only the coefficients $\alpha_k$ are varied.
As such, we simplify notations by denoting $\ket{\phi_k} = \ket{\phi_{\theta_k}}$.
We assume that the family $\left\{ \ket{\phi_k} \right\}_k$ is linearly independent, which should always be the case in practice.\footnote{If it is not the case, then the number of states in the family can be reduced until the family is linearly independent. Linear dependence can be immediately noticed when Monte Carlo sampling the Rayleigh matrix with the determinant state estimator (\cref{eq:rm_decomposition}), since the probability of all samples is zero in this case.}

\subsection{Time-dependent variational principle for a linear state}

To determine the time-dependent parameter vector $\alpha(t)$, we use the time-dependent variational principle (TDVP). The linear structure of $\ket{\psi_\alpha}$ greatly simplifies the TDVP, resulting in the equation
\begin{equation}
\label{eq:bridge}
\dot \alpha = - i G^{-1} G^{(H)} \alpha.
\end{equation}
The direct application of the TDVP commonly used with neural quantum states where the norm and global phase are not constrained would yield a different equation. However, as discussed in \cref{app:variational_principle}, it is in fact equivalent to \cref{eq:bridge}.

This equation is an effective Schrödinger equation, where $\alpha$ evolves under the action of an effective Hamiltonian given by the \textit{Rayleigh matrix} $G^{-1} G^{(H)}$, previously discussed in \cref{sec:subspace_energy}. The solution of this equation is given by
\begin{equation}
\label{eq:linear_dynamics}
\alpha(t) = e^{- i G^{-1}G^{(H)} t} \alpha(0).
\end{equation}

We recall from \cref{sec:subspace_energy} that the spectrum of the Rayleigh matrix is real.
Consequently, the evolution of $\alpha(t)$ is similar to an evolution under a Hermitian Hamiltonian, where the eigencomponents rotate and the dynamics never converges.

Crucially, unlike the general case, the TDVP for a linear state only requires the evaluation of $G^{-1} G^{(H)}$ once, since it does not depend on $\alpha$. From one evaluation of $G^{-1} G^{(H)}$, we can obtain the solution $\alpha(t)$ from \cref{eq:linear_dynamics} at any times for a negligible cost.

This suggests that Bridge can be used both to improve the precision at times $t = k \delta$, and to interpolate between these times, yielding a truly continuous approximation of the dynamics.

The numerical implementation of Bridge thus consists in two steps. (i) Estimate the Rayleigh matrix. (ii) Obtain $\alpha(t)$ from \cref{eq:linear_dynamics}.

The first step requires an estimator for the Rayleigh matrix, which we discuss in the next section.

The second step is simple, since the Rayleigh matrix is a small $m \times m$ matrix that can be exponentiated exactly at a negligible cost. In practice, the basis states $\ket{\phi_k}$ can be close to linear dependence, which causes instability in the exact exponentiation of the Rayleigh matrix estimate. To circumvent this problem, we perform these operations in arbitrary precision.

\paragraph{Time-dependent Hamiltonian}
We note that while we focus on the case of a time-independent Hamiltonian, Bridge can also be used to tackle time-dependent problems. 
If the Hamiltonian has a time dependence of the form
\begin{equation}
H(t) = \sum_p \beta_p(t) H_p,
\end{equation}
then it is enough to estimate the Rayleigh matrix of each term $H_p$ once to obtain the Rayleigh matrix of the Hamiltonian at all times (see \cref{eq:rm_linearity}).

\subsection{Estimation of the Rayleigh matrix}
\label{sec:estimation_rayleigh_matrix}

One way of estimating the Rayleigh matrix $G^{-1} G^{(H)}$ is the \textit{determinant state estimator} discussed in the previous section (\cref{eq:rm_decomposition}). 
Another way, proposed by Sinibaldi et al. in Ref.~\cite{Sinibaldi2024Galerkin}, relies on individually estimating $G$ and $G^{(H)}$ up to a common constant that cancels out in the product $G^{-1} G^{(H)}$. 
This estimator is based on the decomposition
\begin{align}
\label{eq:sum_of_states_1}
\frac{1}{\lVert p \rVert} G_{ij} &= \sum_s p(s) \frac{\braket{\phi_i | s} \braket{s | \phi_j}}{\sum_k \left| \braket{s | \phi_k} \right|^2}, \\
\label{eq:sum_of_states_2}
\frac{1}{\lVert p \rVert} G^{(H)}_{ij} &= \sum_s p(s) \frac{\braket{\phi_i | s} \braket{s | H | \phi_j}}{\sum_k \left| \braket{s | \phi_k} \right|^2},
\end{align}
where samples are taken from the distribution
\begin{equation}
\label{eq:sum_of_states_prob}
p(s) = \frac{1}{\lVert p \rVert} \sum_k \left| \braket{s | \phi_k} \right|^2,
\end{equation}
and $\lVert p \rVert = \sum_s \sum_p \left| \braket{s | \phi_p} \right|^2$ is a normalization constant.
Once $G / \lVert p \rVert$ has been obtained, it must be inverted in order to obtain the Rayleigh matrix. 
As $G$ is typically ill-conditioned, we need to use the pseudo-inverse with a well-chosen singular value cut-off.
Since this estimator relies on samples drawn from the summed probability distributions of all states $\ket{\phi_k}$, we call it the \textit{sum of states} (SOS) estimator.

\begin{table}[t]
\centering
\begin{tabular}{| c || c | c |}
\hline
& Determinant & Sum \\
\hline\hline
Sampling & $O \left( m \gamma + m^3 \right)$ & $O \left( m \gamma \right)$ \\
\hline
Local term & $O \left( m^2 l \gamma + m^3 \right)$ & $O \left( m l \gamma + m^2 \right)$ \\
\hline
\end{tabular}
\caption{Cost of the determinant state and sum of states estimators, where $\gamma$ is the complexity of evaluating the network of one of the basis states, $l$ is the number of connected elements of the Hamiltonian, and $m$ is the number of basis states.
\textit{Sampling} refers to the complexity of querying the amplitude of a sample, under the assumption that the new sample proposal only changes the configuration in a single copy of the original Hilbert space $\mathcal H$ (but we do not use the rank-1 update determinant formula).
\textit{Local term} refers to the complexity of evaluating the local term for a single sample. It is either the local energy matrix $\bm\Phi(\bm s)^{-1}\bm\Phi^{(H)}(\bm s)$ or the local matrix $\frac{\braket{\phi_i | s} \braket{s | H | \phi_j}}{\sum_k \left| \braket{s | \phi_k} \right|^2}$.}
\label{tab:estimator_costs}
\end{table}

In \cref{tab:estimator_costs}, we report the computational complexity of querying the probability distribution and local estimator for each estimator.
While the determinant state estimator is more expensive per sample, numerical experiments reported in \cref{fig:estimator_difference_small} indicate that it leads to more accurate results. To evaluate the performance of each estimator, we run Bridge with a given basis of states for a single quench of the $4 \times 4$ transverse-field Ising model, which is discussed in more details in \cref{sec:numerical_results}.
We run Bridge with each Rayleigh matrix estimator on a single A100 GPU with an increasing number of samples.
We find that the best performance is obtained with the determinant state estimator, and so we will be using this estimator for the numerical results of \cref{sec:numerical_results}.
The importance of using the determinant state estimator will be further exemplified with \cref{fig:various_delta_sos}, which shows that the performance of the sum of states estimator is inconsistent and highly dependent on the choice of the pseudo-inverse singular value cut-off.
In \cref{app:rm_estimators}, we provide further details regarding the various possible estimators of the Rayleigh matrix and their associated benchmarks, as well as theoretical justification for the superiority of the determinant state estimator. In particular, we justify that the determinant state estimator is more resilient to basis states that are near linear dependence, which is the case for approximate dynamics states such as those obtained by p-tVMC, especially as the size of the basis increases.

\begin{figure}[t]
\centering
\includegraphics[width=\columnwidth]{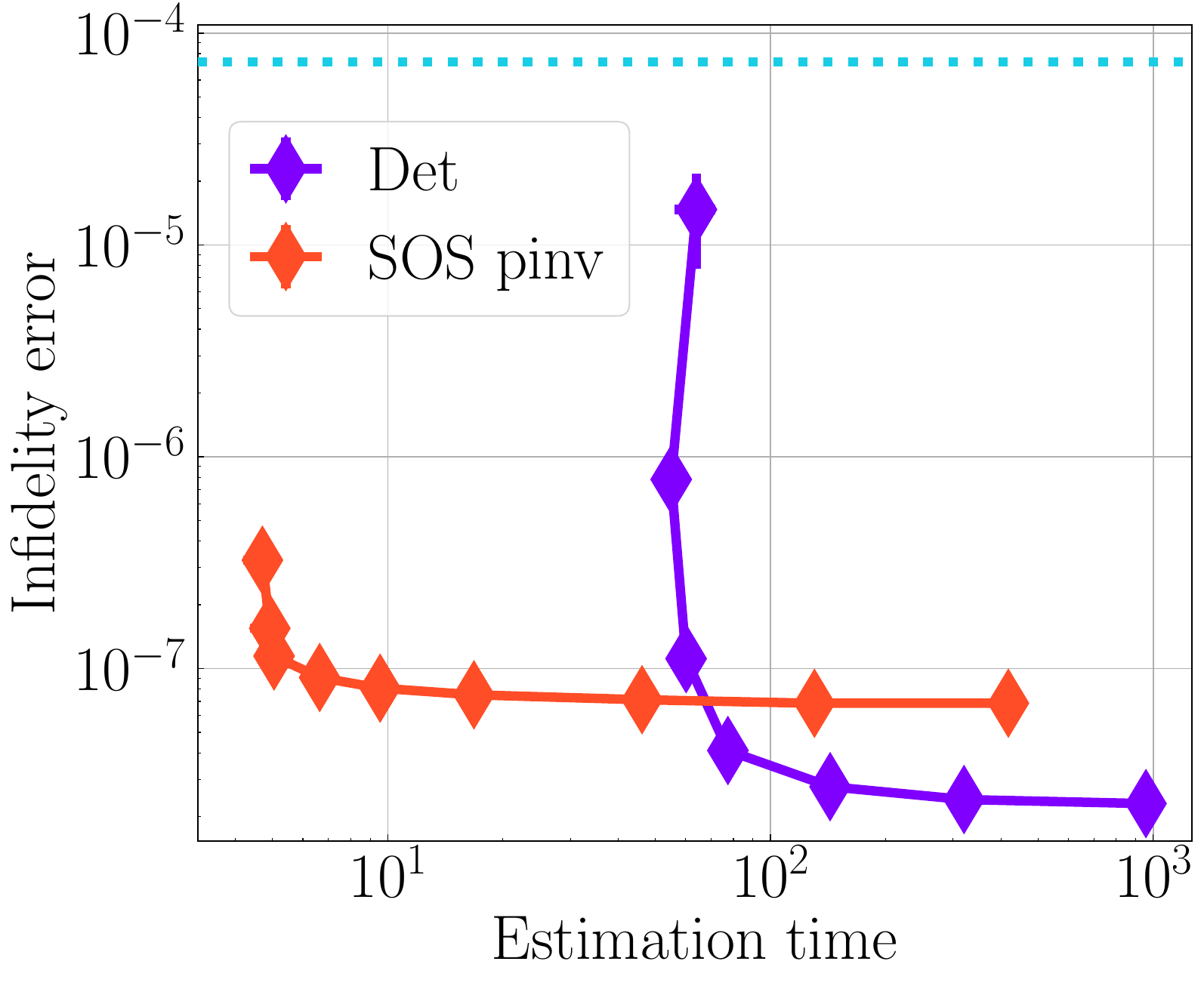}
\caption{Comparison between the final infidelity error of Bridge performed using two estimators of the Rayleigh matrix as a function of the estimation time of the Rayleigh matrix, corresponding to varying numbers of samples. The two estimators are: in purple, the determinant state estimator of \cref{eq:rm_decomposition}, and in red, the sum of states (SOS) estimator of \cref{eq:sum_of_states_1,eq:sum_of_states_2} where $G$ is inverted using the pseudo-inverse with an appropriately chosen singular value cut-off. The dotted horizontal blue line indicates the performance of the original states $\ket{\phi_k}$. The Hamiltonian is the transverse-field Ising Hamiltonian on a $4 \times 4$ lattice with $(h,J) = (1,0.1)$, and the basis states are 136 CNN states obtained with p-tVMC from time 0 to 1.35 with $\delta=0.01$.}
\label{fig:estimator_difference_small}
\end{figure}

\paragraph{Relation to Ref.~\cite{Sinibaldi2024Galerkin}}
In Ref.~\cite{Sinibaldi2024Galerkin}, the authors employed a similar method of variational optimization of linear coefficients to post-process global-in-time dynamics, using the sum of states estimator. In their specific case, the basis states are not approximate dynamics states, and therefore do not necessarily suffer from near linear dependence. Additionally, the size of their bases are small, never exceeding 20 states, which might explain why the sum of states estimator was sufficient. 
However, as detailed in \cref{app:rm_estimators}, bases of states obtained by the time-dependent variational principle or p-tVMC are typically near linear dependence, which causes the sum of states estimator to fail, such that we instead rely on the more resilient determinant state estimator.

\subsection{Estimating observables}
\label{sec:estimate_observables}

Once the parameters $\alpha(t)$ have been obtained, the state $\ket{\psi_{\alpha(t)}} = \sum_k \alpha_k(t) \ket{\phi_k}$ can be sampled from and average values of observables can be estimated using standard methods.

Alternatively, the average value of an observable $A$ can be estimated for any number of times from a single estimation. Due to the linearity of the variational state, its average values can be expressed as
\begin{equation}
\frac{\braket{\psi_{\alpha(t)} | A | \psi_{\alpha(t)}}}{\braket{\psi_{\alpha(t)} | \psi_{\alpha(t)}}} = \frac{\alpha^\dag (t) G^{(A)} \alpha(t)}{\alpha^\dag (t) G \alpha(t)}.
\end{equation}
The matrices $G$ and $G^{(A)}$ can be estimated using the estimator of \cref{eq:sum_of_states_1,eq:sum_of_states_2}. We remark that the determinant state estimator cannot be used here, since it does not provide $G$ and $G^{(A)}$ independently.

While the second method of estimating average values should be more efficient when computing many data points, we did not find a simple method to estimate statistical error bars. As such, for the numerical results, we will restrict ourselves to the first method.\footnote{We note that the method for estimating average values relying only on Rayleigh matrices detailed in \cref{prop:ritz_vector_average} does not apply here, since it relies on the orthogonality of the Ritz vectors. Here, the states $\ket{\psi_{\alpha(t)}}$ are linear combinations of Ritz vectors, and since the norms of the Ritz vectors are unknown, it is not clear how to complete $\ket{\psi_{\alpha(t)}}$ into an orthogonal basis.}

\section{Numerical results}
\label{sec:numerical_results}

We study the performance of Bridge for the dynamics of the transverse-field Ising model
\begin{equation}
\label{eq:tfim}
H = - J \sum_{\braket{i,j}} \sigma_i^{(z)} \sigma_j^{(z)} - h \sum_i \sigma_i^{(x)},
\end{equation}
where we examine in particular the time evolution of the average magnetization
\begin{equation}
M_x = \frac{1}{n} \sum_k \sigma_k^{(x)}.
\end{equation}
We consider square lattices of different sizes, with periodic boundary conditions for $J=1$ and various values of $h/h_c$ with $h_c = 3.044J$, the value of $h$ at the phase transition. The initial state is the uniform state
\begin{equation}
\ket{\psi_0} = \bigotimes_{k=1}^n \frac{\ket \uparrow + \ket \downarrow}{\sqrt 2}.
\end{equation}
To obtain the original dynamics states $\ket{\phi_k}$, we use p-tVMC \cite{Sinibaldi2023Unbiasing, Gravina2024PTVMC, Nys2024}. More details are given in \cref{app:ptvmc}.
The networks used in this section are either convolutional neural networks \cite{Schmitt2020MarkusMarkus} (for $4\times 4$ lattices) or vision transformers \cite{Viteritti2023a} (for $8 \times 8$ lattices). Both networks enforce the translational invariance of the dynamics. 
For more details on the architectures, we refer to appendix H of Ref.~\cite{Gravina2024PTVMC}.

\begin{figure*}[t]
\centering
\includegraphics[width=\linewidth]{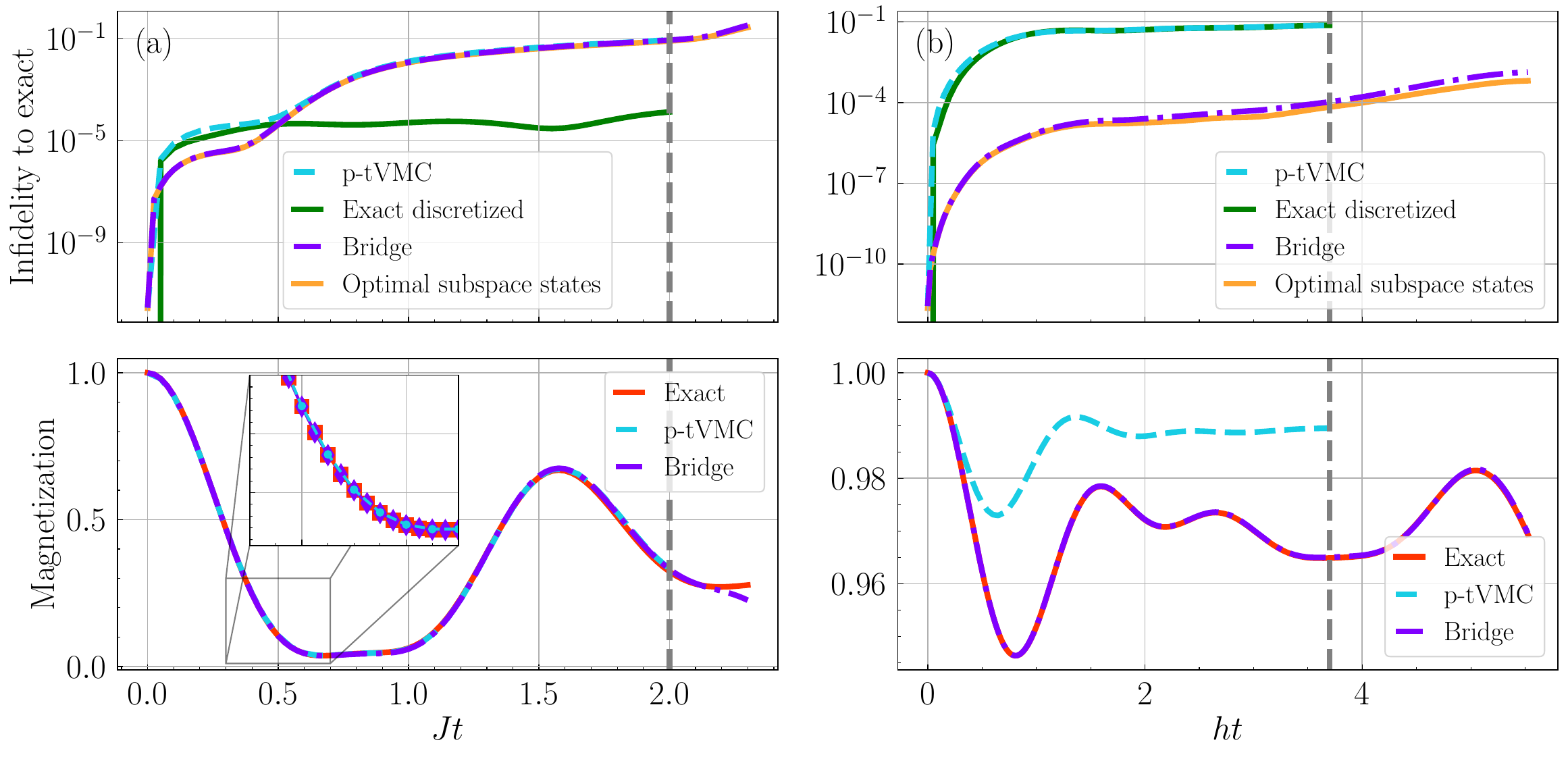}
\caption{Infidelity to exact dynamics and magnetization on the $4 \times 4$ transverse-field Ising model with $h=0.1h_c$ (a) and $h=2h_c$ (b). The original states (blue) are convolutional neural networks optimized with p-tVMC using the scheme SLPE2 with time step $J\delta = 0.05$ (a) and $h\delta = 0.05$ (b). For the Bridge states (purple), the Rayleigh matrix is estimated with the determinant state estimator using 3000 samples. Exact discretized (green) refers to the dynamics resulting from the exact application of the discretized time evolution operator. The vertical grey line highlights the final time of the p-tVMC states.
}
\label{fig:bridge}
\end{figure*}

In \cref{fig:bridge}, we consider a $4 \times 4$ lattice, where we have access to the exact dynamics, such that we can measure the performance of the various methods by evaluating its infidelity with the exact dynamics.
We run Bridge on the states $\ket{\phi_k}$ obtained using p-tVMC with the discretization scheme SLPE2 \cite{Gravina2024PTVMC}.
In the figure, the light blue line corresponds to the p-tVMC data while the purple line corresponds to Bridge.
We examine the two cases $h = 0.1h_c$ (left) and $h = 2h_c$ (right). 
In the first case, we see that Bridge improves on p-tVMC by one order of magnitude up to time $0.5J$ and then does not improve any more. 
In the second case, we see that Bridge significantly improves on p-tVMC, for all times, by 3 or 4 orders of magnitudes. 
In yellow, we plot the optimal infidelity with respect to the exact dynamics that can be achieved within the subspace spanned by the original states $\ket{\phi_k}$, which we observe Bridge is very close to.

To understand when Bridge improves on the states $\ket{\phi_k}$, we detail the two sources of error arising from the p-tVMC method used to obtain them (see \cref{app:ptvmc} for more details).
(i)~The first is \textit{discretization error} emerging from the low-order expansion of the unitary evolution.
(ii)~The second is \textit{optimization error}, which arises because the optimization problem at each discrete time-step cannot be performed exactly, for example because of finite variational expressivity.
In \cref{fig:bridge}, we present in green the exact (non-variational) discretized dynamics, that is, the dynamics when applying exactly the operator resulting from the Taylor expansion of the time evolution operator at each step. 
When the p-tVMC infidelity is close to the discretized infidelity, this means that the p-tVMC error is dominated by discretization error. 
Conversely, when the p-tVMC infidelity is above the discretized infidelity, the p-tVMC error is dominated by the optimization error. 
What we therefore identify in \cref{fig:bridge} is that Bridge improves the performance of the original states when they are dominated by discretization error, but not when they are dominated by optimization error.

This behaviour can be understood by looking at the form of each error. 
At each time step, the discretization error is the remainder of the Taylor expansion formula, which is a linear combination of powers of $H$ applied to the current state.\footnote{In the case of SLPE schemes, the Hamiltonian is split between the diagonal part $H_0$ and the off-diagonal part $H_1$, such that the remainder would instead be a linear combination of products of $H_0$ and $H_1$.} On the other hand, optimization error has no particular structure, which justifies why linear combinations cannot significantly correct it.

In the previous section, we saw that the special structure of the time-dependent variational principle allows the evaluation of the linear parameter $\alpha$ for many times at almost no additional cost, meaning that Bridge can also be used to interpolate between time steps.
The insets of \cref{fig:bridge} show that Bridge is indeed successful at interpolating between the times corresponding to the original states $\ket{\phi_k}$.

Lastly, \cref{fig:bridge} also examines whether Bridge can extrapolate beyond the times of the $\ket{\phi_k}$. 
In both cases, we see that while Bridge does have some extrapolating power, its performance predictably wanes over time.

\begin{figure*}
\centering
\includegraphics[width=\linewidth]{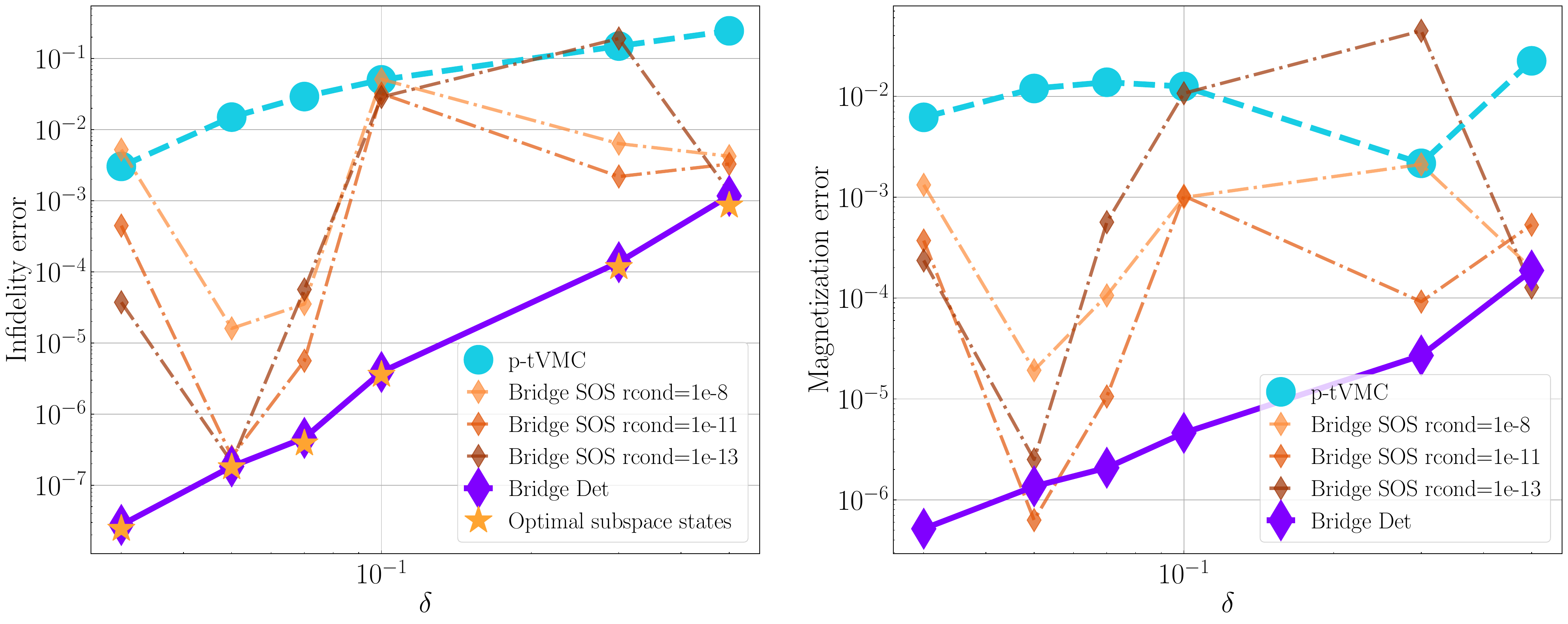}
\caption{Performance of Bridge with either the determinant state estimator (Det) or sum of states estimator (SOS), using families of p-tVMC states generated with varying time steps $\delta$ between time 0 and 1, for the transverse-field Ising Hamiltonian on a $4 \times 4$ lattice with $(h,J) = (1,0.1)$. The performance of Bridge with the sum of states estimator is shown for various values of the pseudo-inverse singular value cut-off \texttt{rcond} for inverting $G$.}
\label{fig:various_delta_sos}
\end{figure*}

\paragraph{}
\Cref{fig:various_delta_sos} shows the performance of Bridge in the case $(h,J) = (1, 0.1)$ using convolutional neural network states generated with p-tVMC, with varying time steps $\delta$. In this case, Bridge is performed with both the determinant state and the sum of states estimator, for various values of the pseudo-inverse singular value cut-off. We first observe that the performance of the sum of states estimator is inconsistent for different basis sizes, and even for a given basis, it strongly depends on the chosen value of the cut-off. On the other hand, the determinant state estimator displays consistent improvement as the basis size increases, and systematically both outperforms the sum of states estimator and reaches the optimal accuracy within the subspace, further confirming that it is the most adapted estimator of the Rayleigh matrix in the context of Bridge.

Just like in the previously mentioned $h = 2h_c$ case, the dominating source of error is discretization error, meaning that Bridge is expected to significantly improve performance. Indeed, when using the determinant state estimator, we see improvements in the fidelity of the dynamics for all values of $\delta$, ranging from 2 to 5 orders of magnitude. As $\delta$ increases, we expect that the quality of the states degrade to the extent that Bridge cannot improve them as much as for lower values of $\delta$.

\begin{figure*}[t]
\centering
\includegraphics[width=\linewidth]{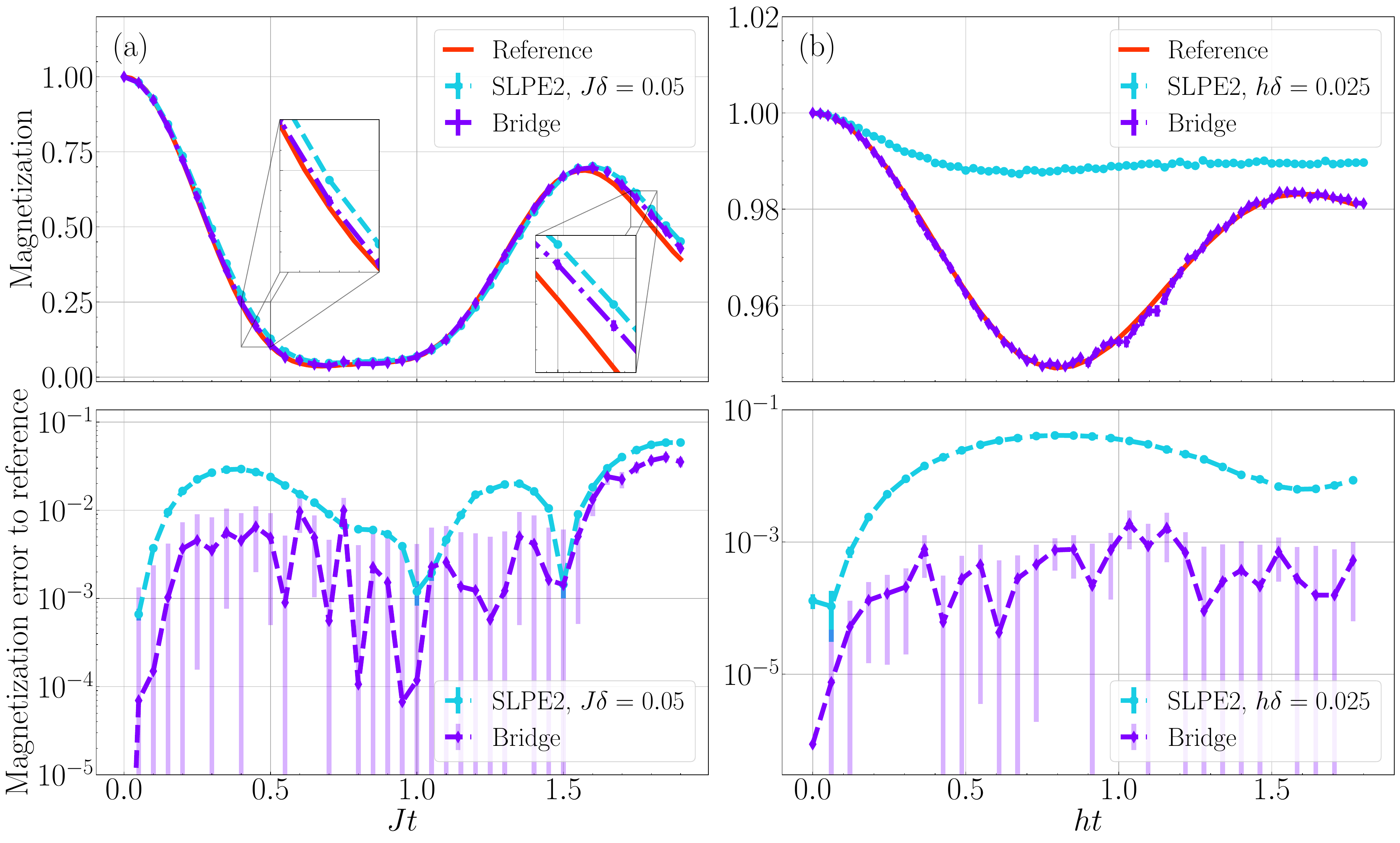}
\caption{Magnetization on the $8 \times 8$ transverse-field Ising model with $h=0.1h_c$ (a) and $h=2h_c$ (b). The original states are vision transformers optimized with p-tVMC using the scheme SLPE2 with time step $J\delta = 0.05$ (a) and $h\delta = 0.025$ (b). The reference for (a) are vision transformers optimized with p-tVMC using the higher order scheme SLPE3 with a smaller time step $J\delta = 0.025$. The reference for (b) is convolutional neural network states obtained with t-VMC in Ref.~\cite{Schmitt2020MarkusMarkus}, since obtaining good p-tVMC states in this regime would have been too computationally expensive.}
\label{fig:8x8}
\end{figure*}

\paragraph{}
In \cref{fig:8x8}, we present a practical application of Bridge to a system size of $8\times 8$ lattice sites, beyond the capabilities of exact calculations.
Similarly to the results on the $4 \times 4$ lattice, we see the most improvement in the case $h=2h_c$, where we expect to be dominated by discretization error. In this case, the numerical results show that Bridge faithfully reproduces the behaviour of the magnetization, even though the basis states do not.
This demonstrates the usefulness of Bridge as a practical method that can improve dynamics at a negligible cost.
For that practical application, obtaining the p-tVMC states takes about 12 hours on 4 H100 GPUs for (a) and 17 hours on 12 H100 GPUs for (b), while running Bridge on a single A100 GPU takes 15 minutes for (a) and 60 minutes for (b).

\paragraph{Restarting dynamics from a Bridge state}
In subplot (b) of \cref{fig:8x8}, going beyond times $ht = 2$ reveals a breakdown of Bridge. The interpretation is that after some time, the original basis states have gone so far away from the exact dynamics that we cannot expect Bridge to be able to make use of them. However, since Bridge successfully improved the dynamics at earlier times, we could run p-tVMC again from time $ht = 2$, starting from the corresponding Bridge state. Doing so, however, requires compressing the linear combination state $\sum_k \alpha_k(t) \ket{\phi_{\theta_k}}$ back into a single variational state $\ket{\phi_\theta}$. This operation can be performed with infidelity optimization \cite{Gravina2024PTVMC}, but it has revealed itself especially challenging in this case. Further investigation into the compression of linear states could allow restarting the dynamics from Bridge solutions whenever the original dynamics strays too far from the exact dynamics.

\section{Conclusion}
In this paper, we presented the determinant state mapping, which is a natural formalism to treat subspaces of the Hilbert space.
This mapping results in both a natural notion of distance between subspaces, and a natural notion of energy which is consistent with the one suggested by subspace methods. 
The resulting formalism is applicable in the context of variational states, and is compatible with the tools developed for Monte Carlo estimation, while providing new ones. 
It recontextualizes existing methods and provides a framework where new methods can be developed.

We then examined Bridge, a computationally cheap subspace method relying on tools from this framework, which can improve the accuracy of variational dynamics obtained by other established methods.
We discussed different estimators for the Rayleigh matrix, and showed that the determinant state estimator is the most suitable estimator in the context of Bridge.

Numerical evidence suggests that Bridge can significantly reduce time step discretization errors, while it does not provide significant improvements for optimization (or variationally-induced) errors. 
We saw that it is a powerful interpolation tool, while its extrapolation capabilities are more limited.

The low cost and potentially significant improvements provided by Bridge thus make it appear as a method that can be systematically used to post-process states approximating dynamics.

An important point that should be further examined is the compression of linear states, in order to restart the dynamics from Bridge states and further improve the performance of both the original time evolution method and Bridge.
Further investigations into Bridge could also examine the impact of enriching the subspace with more states, for examples Krylov states built from any of the $\ket{\phi_k}$.

More importantly, Bridge is thus evidence of the usefulness of subspace methods in the context of variational methods, especially when using the formalism of the determinant state mapping. It therefore invites more investigation into the potential new methods and tools that could be derived from using the determinant state mapping.

\begin{acknowledgments}
Simulations were performed with NetKet \cite{Netket2, Netket3}, and at times parallelized with mpi4JAX \cite{MPI4JAX}.
This software is built on top of JAX \cite{Jax2018Github}, equinox \cite{Kidger2021Equinox} and Flax \cite{Flax2020Github}.
We used QuTiP \cite{QuTiP1, QuTiP2} for exact benchmarks and mpmath for arbitrary precision arithmetics \cite{Mpmath}.
p-tVMC simulations where performed using the repository from~\cite{Gravina2024PTVMC}.
The accompanying code for the manuscript is available at \href{https://github.com/NeuralQXLab/variational-subspace-methods}{github.com/NeuralQXLab/variational-subspace-methods}.

We acknowledge extensive discussions with A. Shokry and R. Nutakki, as well as insightful discussions with G. Booth, G. Carleo, Z. Denis, A. Sinibaldi and V. Savona.
F.V. and A.K. acknowledge support by the French Agence Nationale de la Recherche through the NDQM project, grant ANR-23-CE30-0018.
This project was provided with computing HPC and storage resources by GENCI at IDRIS thanks to the grants 2023-AD010514908 and 2024-A0170515698 on the supercomputer Jean Zay's V100/A100 partition.

\textit{We also acknowledge a concurrent work by D. Hendry, A. Sinibaldi, and G. Carleo \cite{Hendry2025Grassmann}, where a similar formalism is discussed to describe subspaces and used to estimate excited states. This work appeared simultaneously on the preprint server.}
\end{acknowledgments}

\bibliographystyle{quantum}
\bibliography{bibliography}

\newpage

\appendix

\onecolumngrid

\section{Determinant state formalism}
\label{app:antisymmetric_states}

\subsection{Antisymmetric subspace and antisymmetrizer}

We recall the definitions of the antisymmetric subspace and the antisymmetrizer operator.

\begin{definition}[Permutation operator]
\label{def:permutation_operator}
The permutation operator $\bm{P_\sigma}$ over the Hilbert space $\mathcal H^{\otimes m}$ corresponding to a permutation $\sigma \in S_m$ is defined on product states by
\begin{equation}
\bm{P_\sigma} \ket{\phi_1} \otimes \ket{\phi_2} \otimes \ldots \otimes \ket{\phi_m} = \ket{\phi_{\sigma^{-1}(1)}} \otimes \ket{\phi_{\sigma^{-1}(2)}} \otimes \ldots \otimes \ket{\phi_{\sigma^{-1}(m)}}.
\end{equation}
\end{definition}
The reason why the states are permuted with $\sigma^{-1}$ instead of $\sigma$ is because only this definition produces a representation of $S_m$ on $\mathcal H^{\otimes m}$, that is, it is the only one satisfying $\bm{P_{\sigma \sigma'}} = \bm{P_\sigma} \bm{P_{\sigma'}}$.

\begin{definition}[Antisymmetric subspace]
\label{def:antisymmetric_subspace}
The antisymmetric subspace $V_-(\mathcal H^{\otimes m})$ is defined by
\begin{equation}
V_-(\mathcal H^{\otimes m}) = \left\{ \bm{\ket{\psi}} \in \mathcal H^{\otimes m} 
\text{ such that } \forall \sigma \in S_m, \ \bm{P_\sigma} \bm{\ket{\psi}} = \sgn(\sigma) \bm{\ket{\psi}} \right\}.
\end{equation}
\end{definition}

\begin{definition}[Antisymmetrizer operator]
\label{def:antisymmetrizer}
The antisymmetrizer $\bm{S_-}$ is the operator of $\mathcal H^{\otimes m}$ defined by
\begin{equation}
\bm{S_-} = \frac{1}{m!} \sum_\sigma \sgn (\sigma) \bm{P_\sigma},
\end{equation}
where $\bm{P_\sigma}$ is the permutation operator that permutes the states according to the permutation $\sigma$.
\end{definition}

We also recall several simple properties of the antisymmetrizer that will be useful throughout the appendix.
\begin{proposition}
\label{prop:antisymmetrizer_projector}
The antisymmetrizer $\bm{S_-}$ is the orthogonal projector on the antisymmetric subspace $V_-(\mathcal H^{\otimes m})$. In other words, it satisfies:
\begin{itemize}
\item $\bm{S_-}^2 = \bm{S_-}$,
\item $\bm{S_-}^\dag = \bm{S_-}$,
\item $\text{im} \left( \bm{S_-} \right) = V_-(\mathcal H^{\otimes m})$.
\end{itemize}
\end{proposition}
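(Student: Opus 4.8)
The plan is to establish the three listed properties in turn, relying throughout on two structural facts. First, $\sigma \mapsto \bm{P_\sigma}$ is a unitary representation of $S_m$: the representation property $\bm{P_{\sigma\sigma'}} = \bm{P_\sigma}\bm{P_{\sigma'}}$ is recorded right after \cref{def:permutation_operator}, unitarity follows because $\bm{P_\sigma}$ merely permutes the tensor-product computational basis, and consequently $\bm{P_\sigma}^\dag = \bm{P_\sigma}^{-1} = \bm{P_{\sigma^{-1}}}$. Second, $\sgn \colon S_m \to \{\pm 1\}$ is a group homomorphism with $\sgn(\sigma^{-1}) = \sgn(\sigma)$ and $\sgn(\sigma)\sgn(\sigma') = \sgn(\sigma\sigma')$. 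Every step below is then a routine manipulation of the defining sum $\bm{S_-} = \frac{1}{m!}\sum_\sigma \sgn(\sigma)\bm{P_\sigma}$ combined with a bijective re-indexing of the group.

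For self-adjointness, I would take the adjoint of the defining sum, substitute $\bm{P_\sigma}^\dag = \bm{P_{\sigma^{-1}}}$ and $\sgn(\sigma) = \sgn(\sigma^{-1})$, and re-index by $\tau = \sigma^{-1}$, which sweeps over all of $S_m$ exactly once; this recovers the original expression, giving $\bm{S_-}^\dag = \bm{S_-}$. For idempotency I would expand $\bm{S_-}^2$ as a double sum over $(\sigma,\sigma')$, collapse $\bm{P_\sigma}\bm{P_{\sigma'}} = \bm{P_{\sigma\sigma'}}$ and $\sgn(\sigma)\sgn(\sigma') = \sgn(\sigma\sigma')$, and change variables to $\tau = \sigma\sigma'$. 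For each fixed $\tau$ the equation $\sigma\sigma' = \tau$ has exactly $m!$ solutions $(\sigma,\, \sigma' = \sigma^{-1}\tau)$, so the inner summation contributes a factor $m!$ that cancels one of the two $1/m!$ factors and leaves $\bm{S_-}^2 = \bm{S_-}$.

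For the image, I would first prove the intertwining relation $\bm{P_\tau}\bm{S_-} = \sgn(\tau)\bm{S_-}$ by the same device, substituting $\rho = \tau\sigma$ and using $\sgn(\sigma) = \sgn(\tau)\sgn(\rho)$. Applied to any $\bm{\ket{\psi}}$, this shows $\bm{P_\tau}(\bm{S_-}\bm{\ket{\psi}}) = \sgn(\tau)(\bm{S_-}\bm{\ket{\psi}})$ for every $\tau$, hence $\mathrm{im}(\bm{S_-}) \subseteq V_-(\mathcal H^{\otimes m})$. Conversely, if $\bm{\ket{\psi}} \in V_-(\mathcal H^{\otimes m})$ then $\bm{P_\sigma}\bm{\ket{\psi}} = \sgn(\sigma)\bm{\ket{\psi}}$, so each term of $\bm{S_-}\bm{\ket{\psi}}$ equals $\frac{1}{m!}\sgn(\sigma)^2\bm{\ket{\psi}} = \frac{1}{m!}\bm{\ket{\psi}}$, and summing the $m!$ equal terms yields $\bm{S_-}\bm{\ket{\psi}} = \bm{\ket{\psi}}$. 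This simultaneously proves $V_-(\mathcal H^{\otimes m}) \subseteq \mathrm{im}(\bm{S_-})$ and that $\bm{S_-}$ restricts to the identity on $V_-(\mathcal H^{\otimes m})$; together with idempotency and self-adjointness, this confirms $\bm{S_-}$ is the orthogonal projector onto $V_-(\mathcal H^{\otimes m})$.

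All the computations are elementary, so the only genuine point of care is the bijective re-indexing of sums over $S_m$ — the left and right translations $\sigma \mapsto \sigma^{-1}\tau$ and $\sigma \mapsto \tau\sigma$ — where one must verify that the new summation variable traverses the entire group exactly once and correctly track the sign through the homomorphism. This is where any slip would occur, so I would state these translation bijections explicitly before invoking them rather than leave them implicit.
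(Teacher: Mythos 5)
Your proof is correct and complete: all three properties are established by the standard group-averaging manipulations, and the re-indexing bijections you flag (left/right translation and inversion on $S_m$) are exactly the points that need care and are handled properly. Note that the paper itself states this proposition without proof, merely recalling it as a standard property of the antisymmetrizer, so your argument simply supplies the usual textbook derivation that the authors omitted.
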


\begin{proposition}
\label{prop:antisymmetrizer_property}
The antisymmetrizer satisfies
\begin{equation}
\forall \sigma, \ \bm{S_-} \bm{P_\sigma} = \bm{P_\sigma} \bm{S_-} = \sgn(\sigma) \bm{S_-}.
\end{equation}
\end{proposition}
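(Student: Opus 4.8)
The plan is to prove both equalities directly from \cref{def:antisymmetrizer} by reindexing the sum over the symmetric group. The only ingredients needed are the representation property $\bm{P_{\sigma \tau}} = \bm{P_\sigma} \bm{P_\tau}$ recorded just after \cref{def:permutation_operator}, the fact that $\sgn$ is a group homomorphism so that $\sgn(\sigma\tau) = \sgn(\sigma)\sgn(\tau)$, and the identity $\sgn(\sigma^{-1}) = \sgn(\sigma)$. No deeper structural result about the antisymmetric subspace is required, so this is a short algebraic verification rather than a conceptual argument.

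First I would establish $\bm{P_\sigma} \bm{S_-} = \sgn(\sigma) \bm{S_-}$. Expanding the definition of $\bm{S_-}$ and pulling $\bm{P_\sigma}$ inside the sum, the representation property gives
\begin{equation}
\bm{P_\sigma} \bm{S_-} = \frac{1}{m!} \sum_{\tau \in S_m} \sgn(\tau)\, \bm{P_\sigma} \bm{P_\tau} = \frac{1}{m!} \sum_{\tau \in S_m} \sgn(\tau)\, \bm{P_{\sigma \tau}}.
\end{equation}
The key step is the substitution $\rho = \sigma \tau$: since left multiplication by $\sigma$ is a bijection of $S_m$, summing over $\tau$ is the same as summing over $\rho$, and the sign rewrites as $\sgn(\tau) = \sgn(\sigma^{-1}\rho) = \sgn(\sigma)\sgn(\rho)$. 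Factoring the constant $\sgn(\sigma)$ out of the sum then reproduces exactly $\sgn(\sigma)\,\bm{S_-}$.

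The second equality $\bm{S_-} \bm{P_\sigma} = \sgn(\sigma)\,\bm{S_-}$ follows by the mirror-image computation, this time writing $\bm{P_\tau}\bm{P_\sigma} = \bm{P_{\tau\sigma}}$ and substituting $\rho = \tau\sigma$, using that right multiplication by $\sigma$ is also a bijection of $S_m$. I do not expect any genuine obstacle in this proof; the only point that warrants a moment of care is getting the direction of the reindexing right and invoking $\sgn(\sigma^{-1}) = \sgn(\sigma)$, both of which are immediate because the sign lands in the abelian group $\{\pm 1\}$. Together the two computations yield the chain $\bm{S_-}\bm{P_\sigma} = \bm{P_\sigma}\bm{S_-} = \sgn(\sigma)\,\bm{S_-}$ claimed in the statement.
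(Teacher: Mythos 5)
Your proof is correct: the reindexing of the sum over $S_m$ via the bijection $\tau \mapsto \sigma\tau$ (resp.\ $\tau \mapsto \tau\sigma$), combined with the representation property $\bm{P_{\sigma\tau}} = \bm{P_\sigma}\bm{P_\tau}$ and the multiplicativity of $\sgn$, is exactly the standard argument. The paper states this proposition as a recalled elementary fact without proof, and your verification is precisely the computation it implicitly relies on.
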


Finally, before getting to prove proper results, we show a simple equality that we will extensively use throughout this section.

\begin{proposition}
\label{prop:det_slater}
Let $\left\{ \ket{\phi_k} \right\}_k$ and $\left\{ \ket{\varphi_k} \right\}_k$ be two families of $m$ states of $\mathcal H$. If we define by $\bm{\ket \phi}$ and $\bm{\ket \varphi}$ the product states of each family
\begin{align}
\bm{\ket \phi} &= \ket{\phi_1} \otimes \ket{\phi_2} \otimes \ldots \otimes \ket{\phi_m}, \\
\bm{\ket \varphi} &= \ket{\varphi_1} \otimes \ket{\varphi_2} \otimes \ldots \otimes \ket{\varphi_m},
\end{align}
and by $\bm {\ket{\phi_A} = S_- \ket{\phi}}$ and $\bm {\ket{\varphi_A} = S_- \ket{\varphi}}$ their antisymmetrization, then we have
\begin{equation}
\bm{\braket{\phi_A | \varphi_A}} = \bm{\braket{\phi | S_- | \varphi}} = \frac{1}{m!} \det S,
\end{equation}
where $S$ is the $m \times m$ overlap matrix defined by $S_{ij} = \braket{\phi_i | \varphi_j}$.
\end{proposition}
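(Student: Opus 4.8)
The plan is to prove the two equalities in turn, starting from the middle expression $\bm{\braket{\phi | S_- | \varphi}}$ and working outward. First I would establish the left equality $\bm{\braket{\phi_A | \varphi_A}} = \bm{\braket{\phi | S_- | \varphi}}$. By definition $\bm{\ket{\phi_A}} = \bm{S_-}\bm{\ket\phi}$ and $\bm{\ket{\varphi_A}} = \bm{S_-}\bm{\ket\varphi}$, so $\bm{\braket{\phi_A | \varphi_A}} = \bm{\braket{\phi | S_-^\dag S_- | \varphi}}$. Invoking \cref{prop:antisymmetrizer_projector}, which states that $\bm{S_-}$ is an orthogonal projector (so $\bm{S_-}^\dag = \bm{S_-}$ and $\bm{S_-}^2 = \bm{S_-}$), gives $\bm{S_-}^\dag \bm{S_-} = \bm{S_-}$, and the left equality follows immediately.

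Next I would compute the right-hand expression explicitly using the definition of the antisymmetrizer. Substituting $\bm{S_-} = \frac{1}{m!}\sum_\sigma \sgn(\sigma)\bm{P_\sigma}$ into $\bm{\braket{\phi | S_- | \varphi}}$ yields
\begin{equation}
\bm{\braket{\phi | S_- | \varphi}} = \frac{1}{m!}\sum_{\sigma \in S_m} \sgn(\sigma)\, \bm{\braket{\phi | P_\sigma | \varphi}}.
\end{equation}
The key step is to evaluate $\bm{\braket{\phi | P_\sigma | \varphi}}$ for a product state. Applying \cref{def:permutation_operator}, the operator $\bm{P_\sigma}$ sends the product $\bm{\ket\varphi} = \ket{\varphi_1}\otimes\cdots\otimes\ket{\varphi_m}$ to $\ket{\varphi_{\sigma^{-1}(1)}}\otimes\cdots\otimes\ket{\varphi_{\sigma^{-1}(m)}}$, so that the overlap factorizes across the tensor slots as $\bm{\braket{\phi | P_\sigma | \varphi}} = \prod_{i=1}^m \braket{\phi_i | \varphi_{\sigma^{-1}(i)}} = \prod_i S_{i,\sigma^{-1}(i)}$.

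Finally I would recognize the resulting sum as a determinant. Substituting back gives
\begin{equation}
\bm{\braket{\phi | S_- | \varphi}} = \frac{1}{m!}\sum_{\sigma \in S_m} \sgn(\sigma) \prod_{i=1}^m S_{i,\sigma^{-1}(i)}.
\end{equation}
Reindexing the sum by $\tau = \sigma^{-1}$ (a bijection of $S_m$ with $\sgn(\tau) = \sgn(\sigma)$), and using that the product $\prod_i S_{i,\tau(i)}$ is unchanged in form, this is exactly $\frac{1}{m!}\det S$ by the Leibniz formula for the determinant. This completes the proof. The only mild subtlety, and the step most worth stating carefully, is the reindexing by the inverse permutation together with the factorization of the product-state overlap across tensor factors; everything else is a direct substitution of the definitions recalled in this subsection.
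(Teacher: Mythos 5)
Your proof is correct and follows essentially the same route as the paper's: reduce $\bm{\braket{\phi_A | \varphi_A}}$ to $\bm{\braket{\phi | S_- | \varphi}}$ via the projector property, expand the antisymmetrizer, and recognize the Leibniz formula. You are slightly more explicit about the $\sigma^{-1}$ coming from the definition of $\bm{P_\sigma}$ and the subsequent reindexing, which the paper glosses over, but the argument is the same.
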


\begin{proof}
We have
\begin{align}
\bm{\braket{\phi_A | \varphi_A}} &= \bm{\braket{\phi | S_-^\dag S_- | \varphi}} \\
&= \bm{\braket{\phi | S_- | \varphi}},
\end{align}
where we used \cref{prop:antisymmetrizer_projector}. We now expand the antisymmetrizer
\begin{align}
\bm{\braket{\phi | S_- | \varphi}} &= \frac{1}{m!} \sum_\sigma \sgn(\sigma) \braket{\phi_1 | \varphi_{\sigma(1)}} \braket{\phi_2 | \varphi_{\sigma(2)}} \ldots \braket{\phi_m | \varphi_{\sigma(m)}} \\
&= \frac{1}{m!} \det S.
\end{align}
\end{proof}

\subsection{Determinant state mapping}

We now prove the property that justifies the definition of the determinant state mapping (\cref{eq:determinant_state_mapping}).

\begin{proposition}[Slater determinant under a change of basis]
\label{prop:slater_cob}
Let $\left\{ \ket{\phi_k} \right\}_k$ be a family of $m$ states of $\mathcal H$, and let $\left\{ \ket{\varphi_k} \right\}_k$ be the family of $m$ states defined by the linear transformation
\begin{equation}
\ket{\varphi_p} = \sum_{k=1}^m B_{kp} \ket{\phi_k},
\end{equation}
where $B$ is an invertible $m \times m$ matrix. Then, the antisymmetrization of both product states are colinear, and we have
\begin{equation}
\bm{S_-} \ket{\varphi_1} \otimes \ldots \otimes \ket{\varphi_m} = \det(B) \bm{S_-} \ket{\phi_1} \otimes \ldots \otimes \ket{\phi_m}.
\end{equation}
\end{proposition}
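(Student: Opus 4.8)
The plan is to expand the antisymmetrized product state directly in terms of the transformation matrix $B$, collect the resulting sum over permutations, and recognize it as a determinant. First I would write out the product state of the $\ket{\varphi_k}$ by substituting the linear transformation into each tensor factor, giving
\begin{equation}
\ket{\varphi_1} \otimes \ldots \otimes \ket{\varphi_m} = \sum_{k_1, \ldots, k_m} B_{k_1 1} B_{k_2 2} \ldots B_{k_m m} \ket{\phi_{k_1}} \otimes \ldots \otimes \ket{\phi_{k_m}}.
\end{equation}
Applying $\bm{S_-}$ to both sides and using linearity, the task reduces to understanding the action of $\bm{S_-}$ on each product state $\ket{\phi_{k_1}} \otimes \ldots \otimes \ket{\phi_{k_m}}$ appearing in the sum.

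The key observation is that $\bm{S_-}$ annihilates any term in which two indices $k_i$ coincide, since such a product state is symmetric under the transposition of those two factors while $\bm{S_-}$ is totally antisymmetric (a consequence of \cref{prop:antisymmetrizer_property} applied to that transposition). Therefore only the terms where $(k_1, \ldots, k_m)$ is a permutation $\sigma$ of $(1, \ldots, m)$ survive. For such a term I would use \cref{prop:antisymmetrizer_property}, namely $\bm{S_-} \bm{P_\sigma} = \sgn(\sigma) \bm{S_-}$, to rewrite $\bm{S_-} \ket{\phi_{\sigma(1)}} \otimes \ldots \otimes \ket{\phi_{\sigma(m)}}$ as $\sgn(\sigma)\, \bm{S_-} \ket{\phi_1} \otimes \ldots \otimes \ket{\phi_m}$, being careful about whether the surviving index string corresponds to $\sigma$ or $\sigma^{-1}$ according to \cref{def:permutation_operator}.

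Collecting the surviving terms then yields
\begin{equation}
\bm{S_-} \ket{\varphi_1} \otimes \ldots \otimes \ket{\varphi_m} = \left( \sum_{\sigma \in S_m} \sgn(\sigma) \prod_{i=1}^m B_{\sigma(i)\, i} \right) \bm{S_-} \ket{\phi_1} \otimes \ldots \otimes \ket{\phi_m},
\end{equation}
and the scalar prefactor in parentheses is precisely the Leibniz formula for $\det(B)$, which completes the proof.

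The main obstacle I anticipate is purely bookkeeping: keeping the indexing of the permutation consistent between the sum coming from the linear expansion and the sign convention baked into \cref{def:permutation_operator}, where $\bm{P_\sigma}$ permutes factors by $\sigma^{-1}$. A clean way to sidestep potential sign errors is to relabel the summation so that the surviving multi-indices are written explicitly as $k_i = \sigma(i)$, identify $\ket{\phi_{\sigma(1)}} \otimes \ldots \otimes \ket{\phi_{\sigma(m)}} = \bm{P_{\sigma^{-1}}} \ket{\phi_1} \otimes \ldots \otimes \ket{\phi_m}$, and then invoke \cref{prop:antisymmetrizer_property} together with $\sgn(\sigma^{-1}) = \sgn(\sigma)$; since the sum over all $\sigma$ is the same as the sum over all $\sigma^{-1}$, the determinant emerges regardless of the convention chosen.
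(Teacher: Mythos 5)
Your proposal is correct and follows essentially the same route as the paper's proof: expand the tensor product via the linear transformation, discard the repeated-index terms, and collect the surviving permutation terms into the Leibniz formula for $\det(B)$ using \cref{prop:antisymmetrizer_property}. The only (harmless) divergence is in how the repeated-index terms are killed — you use the symmetry of the product state under the relevant transposition together with $\bm{S_-}\bm{P_\tau} = \sgn(\tau)\bm{S_-}$, whereas the paper shows the amplitudes $\det\Phi(\bm s)$ vanish because the matrix has repeated columns; both arguments are valid and yours is arguably more self-contained.
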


\begin{proof}
We start by expanding the $\ket{\varphi_p}$,
\begin{equation}
\label{eq:expansion_in_cob}
\bm{S_-} \ket{\varphi_1} \otimes \ldots \otimes \ket{\varphi_m} = \sum_{k_1, \ldots, k_m} B_{k_1, 1} \ldots B_{k_m, m} \bm{S_-} \ket{\phi_{k_1}} \otimes \ldots \otimes \ket{\phi_{k_m}}.
\end{equation}

From \cref{prop:det_slater}, the amplitude of the antisymmetrized state $\bm{S_-} \ket{\phi_{k_1}} \otimes \ldots \otimes \ket{\phi_{k_m}}$ corresponding the basis element $\bm s = (s_1, \ldots, s_m)$ is given by $\det \Phi (\bm s)$, where $\left[ \Phi (\bm s) \right]_{ij} = \braket{s_i | \phi_{k_j}}$. Therefore, if the $m$-uplet $(k_1, \ldots, k_m)$ has at least one repeated value, then $\Phi (\bm s)$ has repeated columns. As such, for any basis state $\bm s$, we have $\det \Phi (\bm s) = 0$, and so we conclude that $\bm{S_-} \ket{\phi_{k_1}} \otimes \ldots \otimes \ket{\phi_{k_m}} = 0$.

We can thus restrict the sum in \cref{eq:expansion_in_cob} to only $m$-uplet with no repeated values. An $m$-uplet of values $k_i \in \{ 1, \ldots, m \}$ such that no two $k_i$ are identical is effectively a permutation $\sigma : i \mapsto k_i$. We can thus rewrite this expression as a sum over permutations,
\begin{align}
\bm{S_-} \ket{\varphi_1} \otimes \ldots \otimes \ket{\varphi_m} &= \sum_\sigma B_{\sigma(1), 1} \ldots B_{\sigma(m), m} \bm{S_-} \ket{\phi_{\sigma(1)}} \otimes \ldots \otimes \ket{\phi_{\sigma(m)}} \\
&= \sum_\sigma B_{\sigma(1), 1} \ldots B_{\sigma(m), m} \bm{S_-} \bm{P_{\sigma^{-1}}} \ket{\phi_1} \otimes \ldots \otimes \ket{\phi_m} \\
&= \left( \sum_\sigma \sgn(\sigma) B_{\sigma(1), 1} \ldots B_{\sigma(m), m} \right) \bm{S_-} \ket{\phi_1} \otimes \ldots \otimes \ket{\phi_m} \\
&= \det(B) \bm{S_-} \ket{\phi_1} \otimes \ldots \otimes \ket{\phi_m}.
\end{align}
where we used \cref{prop:antisymmetrizer_property}.
\end{proof}

We have now proven \cref{eq:det_cob}, which ensures that the determinant state mapping \cref{eq:determinant_state_mapping} is well-defined.

We now show that the determinant state mapping is also injective. We first prove a lemma that is also useful in interpreting the distance between determinant states defined by \cref{eq:subspace_fidelity}.

\begin{lemma}[Fidelity between determinant states]
\label{lem:det_state_fidelity}
Let $U$ and $V$ be $m$-dimensional subspaces of $\mathcal H$ with respective \textbf{orthonormal} bases $\left\{ \ket{\psi_k} \right\}_k$ and $\left\{ \ket{\phi_k} \right\}_k$. 
Then, the fidelity between the corresponding determinant states can be expressed as
\begin{equation}
\frac{\left| \bm{\braket{\psi_A | \phi_A}} \right|^2}{\bm{\lVert \bm{\psi_A} \rVert}^2 \bm{\lVert \phi_A \rVert}^2} = \det G_{\tilde \phi},
\end{equation}
where $G_{\tilde \phi}$ is the Gram matrix of the image of the family $\left\{ \ket{\phi_k} \right\}_k$ by $P_U$ the orthogonal projector on $U$,
\begin{equation}
\left[ G_{\tilde \phi} \right]_{ij} = \braket{\phi_i | P_U^\dag P_U | \phi_j}.
\end{equation}
\end{lemma}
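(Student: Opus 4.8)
The plan is to compute the left-hand side using \cref{prop:det_slater}, which reduces each inner product of determinant states to a determinant of an overlap matrix. First I would note that since $\left\{ \ket{\psi_k} \right\}_k$ and $\left\{ \ket{\phi_k} \right\}_k$ are orthonormal bases of $U$ and $V$, \cref{prop:det_slater} gives directly $\bm{\lVert \psi_A \rVert}^2 = \bm{\braket{\psi_A | \psi_A}} = \frac{1}{m!}\det(I_m) = \frac{1}{m!}$, and similarly $\bm{\lVert \phi_A \rVert}^2 = \frac{1}{m!}$. The denominator is thus simply $\frac{1}{(m!)^2}$.

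The main work is the numerator. Applying \cref{prop:det_slater} to $\bm{\braket{\psi_A | \phi_A}}$ yields $\frac{1}{m!}\det S$ where $S_{ij} = \braket{\psi_i | \phi_j}$, so $\left| \bm{\braket{\psi_A | \phi_A}} \right|^2 = \frac{1}{(m!)^2}\left| \det S \right|^2$. After dividing by the denominator, the $(m!)^{-2}$ factors cancel and I am left with showing $\left| \det S \right|^2 = \det G_{\tilde\phi}$. The key observation is that since $\left\{ \ket{\psi_k} \right\}_k$ is an orthonormal basis of $U$, the projector onto $U$ can be written $P_U = \sum_k \ket{\psi_k}\bra{\psi_k}$, so that $\left[ G_{\tilde\phi} \right]_{ij} = \braket{\phi_i | P_U | \phi_j} = \sum_k \braket{\phi_i | \psi_k}\braket{\psi_k | \phi_j} = \sum_k \overline{S_{ki}}\, S_{kj} = \left[ S^\dag S \right]_{ij}$, using $P_U^\dag P_U = P_U$ since $P_U$ is an orthogonal projector. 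Hence $G_{\tilde\phi} = S^\dag S$, and therefore $\det G_{\tilde\phi} = \det(S^\dag S) = \overline{\det S}\,\det S = \left| \det S \right|^2$, which closes the chain of equalities.

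I do not anticipate a serious obstacle here; the argument is essentially a clean application of \cref{prop:det_slater} followed by recognizing the Gram matrix of the projected family as $S^\dag S$. The one point requiring mild care is bookkeeping the orthonormality assumptions correctly so that both norms collapse to $1/m!$ and the projector admits the spectral form $\sum_k \ket{\psi_k}\bra{\psi_k}$; this is exactly where the hypothesis that the bases are orthonormal (as emphasized in boldface in the statement) is used. It is worth remarking that $\det G_{\tilde\phi}$ is precisely the squared volume of the parallelotope spanned by the projected vectors $P_U \ket{\phi_k}$, which is the geometric interpretation alluded to before the lemma, and this identity is what will subsequently be used to establish injectivity of the determinant state mapping (\cref{prop:injectivity}), since $\det G_{\tilde\phi} = 1$ forces $P_U$ to act as an isometry on $V$, hence $V \subseteq U$ and thus $U = V$ by equal dimension.
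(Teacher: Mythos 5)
Your proof is correct and follows essentially the same route as the paper's: apply \cref{prop:det_slater} to reduce all three inner products to determinants (with the $m!$ factors and Gram determinants trivializing by orthonormality), then identify $S^\dag S$ with $G_{\tilde\phi}$ via the spectral form $P_U = \sum_k \ket{\psi_k}\bra{\psi_k}$. No gaps; the closing remark on the geometric interpretation and the use in \cref{prop:injectivity} matches the paper's subsequent development.
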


\begin{proof}
First, we emphasize that the left-hand term does not depend on the chosen bases. It is only the right-hand term that requires $G_{\tilde \phi}$ to be the Gram matrix of the projection of an orthonormal basis of $V$.

Using \cref{prop:det_slater}, we can rewrite the left-hand term as
\begin{equation}
\frac{\left| \bm{\braket{\psi_A | \phi_A}} \right|^2}{\bm{\lVert \bm{\psi_A} \rVert}^2 \bm{\lVert \phi_A \rVert}^2} = \frac{\left| \det S \right|^2}{\det G_\psi \det G_\phi},
\end{equation}
where the matrix $S$ is defined by $S_{ij} = \braket{\psi_i | \phi_j}$ and the matrices $G_\psi$ and $G_\phi$ are the Gram matrices of the families $\left\{ \ket{\psi_k} \right\}_k$ and $\left\{ \ket{\phi_k} \right\}_k$ respectively. Since we have chosen the two families to be orthonormal, it follows that these two Gram matrices are the identity.

Thus, only the numerator is left. It can be rewritten as
\begin{align}
\frac{\left| \bm{\braket{\psi_A | \phi_A}} \right|^2}{\bm{\lVert \bm{\psi_A} \rVert}^2 \bm{\lVert \phi_A \rVert}^2} &= \left| \det S \right|^2 \\
&= (\det S)^* (\det S) \\
&= \det \left( S^\dag S \right).
\end{align}
The matrix elements of $S^\dag S$ are given by
\begin{equation}
\left[ S^\dag S \right]_{ij} = \sum_k \braket{\phi_i | \psi_k} \braket{\psi_k | \phi_j}.
\end{equation}
We identify $P_U = \sum_k \ket{\psi_k} \bra{\psi_k}$ the orthogonal projector on $U$, such that we can rewrite the elements of $S^\dag S$ as
\begin{align}
\left[ S^\dag S \right]_{ij} &= \braket{\phi_i | P_U | \phi_j} \\
&= \braket{\phi_i | P_U^2 | \phi_j} \\
&= \braket{\phi_i | P_U^\dag P_U | \phi_j} \\
&= \braket{\tilde \phi_i |\tilde \phi_j}.
\end{align}
where we introduced $\ket{\tilde \phi_k} = P_U \ket{\phi_k}$. We also introduce $G_{\tilde \phi}$, the Gram matrix of the projected family $\left\{ \ket{\tilde \phi_k} \right\}_k$, defined by
\begin{equation}
\left[ G_{\tilde \phi} \right]_{ij} = \braket{\tilde \phi_i |\tilde \phi_j},
\end{equation}
such that we have $S^\dag S = G_{\tilde \phi}$. We indeed conclude that
\begin{equation}
\frac{\left| \bm{\braket{\psi_A | \phi_A}} \right|^2}{\bm{\lVert \bm{\psi_A} \rVert}^2 \bm{\lVert \phi_A \rVert}^2} = \det G_{\tilde \phi}.
\end{equation}
\end{proof}

This lemma allow us to easily prove the injectivity of the determinant state mapping.

\begin{proposition}[Injectivity of the determinant state mapping]
\label{prop:injectivity}
The mapping
\begin{equation}
\begin{cases}
\gr_m(\mathcal H) &\to S / (\mathbb C, \times) \\
V &\mapsto \left[ S_- \ket{\phi_1} \otimes \ldots \otimes \ket{\phi_m} \right]
\end{cases}
\end{equation}
is injective.
\end{proposition}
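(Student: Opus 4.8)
The plan is to reduce the statement to the claim that two subspaces whose determinant states are colinear must coincide, and then to convert colinearity into a determinant condition via \cref{lem:det_state_fidelity}. Suppose $U$ and $V$ are $m$-dimensional subspaces of $\mathcal H$ that are sent to the same class in $S/(\mathbb C, \times)$, so that their determinant states $\bm{\ket{\psi_A}}$ and $\bm{\ket{\phi_A}}$ are colinear. Because $U$ and $V$ are $m$-dimensional, any basis is linearly independent and the associated Slater determinant is nonzero, so both determinant states are nonzero and colinearity is equivalent, by the equality case of Cauchy--Schwarz, to the Fubini--Study fidelity being equal to one,
\begin{equation}
\frac{\left| \bm{\braket{\psi_A | \phi_A}} \right|^2}{\bm{\lVert \psi_A \rVert}^2 \bm{\lVert \phi_A \rVert}^2} = 1.
\end{equation}

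Next I would pick \emph{orthonormal} bases $\left\{ \ket{\psi_k} \right\}_k$ and $\left\{ \ket{\phi_k} \right\}_k$ of $U$ and $V$, which is legitimate since the left-hand side above is basis-independent. Applying \cref{lem:det_state_fidelity}, the fidelity equals $\det G_{\tilde \phi}$, where $G_{\tilde \phi}$ is the Gram matrix of the projected family $\left\{ P_U \ket{\phi_k} \right\}_k$, so the colinearity condition becomes $\det G_{\tilde \phi} = 1$. The heart of the argument is then to extract from this single scalar equality the full inclusion $V \subseteq U$.

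To do this I would record two properties of $G_{\tilde \phi}$. First, writing $G_{\tilde \phi} = S^\dag S$ with $S_{ij} = \braket{\psi_i | \phi_j}$ exactly as in the proof of the lemma, the matrix $S$ is a product of a coisometry (onto the orthonormal basis of $U$) and an isometry (from the orthonormal basis of $V$), hence $\lVert S \rVert \leq 1$ and every eigenvalue of the positive semidefinite matrix $G_{\tilde \phi}$ lies in $[0,1]$. Second, $\det G_{\tilde \phi}$ is the product of these $m$ eigenvalues, and a product of $m$ numbers in $[0,1]$ equals $1$ only if each factor equals $1$; thus all eigenvalues are $1$ and $G_{\tilde \phi} = \mathbb 1$. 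In particular each diagonal entry satisfies $\lVert P_U \ket{\phi_k} \rVert^2 = 1 = \lVert \phi_k \rVert^2$, and since $P_U$ is an orthogonal projector this forces $\ket{\phi_k} \in U$ for every $k$. Hence $V \subseteq U$, and equality of dimensions gives $U = V$.

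The only nontrivial step — and thus the main obstacle — is the passage from the scalar equality $\det G_{\tilde \phi} = 1$ to the pointwise conclusion $\ket{\phi_k} \in U$. Everything rests on first establishing the eigenvalue bound $\lambda_k \in [0,1]$, so that the determinant cannot compensate a factor below one by a factor above one, and then on the tightness of the orthogonal-projection norm inequality. Both ingredients follow directly from the isometry structure of $S$ already exploited in \cref{lem:det_state_fidelity}, so I expect the remaining calculation to be short.
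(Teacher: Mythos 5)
Your proof is correct and follows essentially the same route as the paper's: reduce colinearity of the determinant states to unit Fubini--Study fidelity, pass to orthonormal bases, invoke \cref{lem:det_state_fidelity} to get $\det G_{\tilde\phi}=1$, and conclude that each $\lVert P_U\ket{\phi_k}\rVert=1$ so that $V\subseteq U$ and hence $U=V$. The only (minor) divergence is in the middle step: the paper applies Hadamard's inequality $\det G_{\tilde\phi}\le\prod_k\lVert P_U\ket{\phi_k}\rVert^2$ together with $\lVert P_U\ket{\phi_k}\rVert\le 1$, whereas you bound the eigenvalues of $S^\dagger S$ by $1$ via $\lVert S\rVert\le 1$ and deduce the slightly stronger statement $G_{\tilde\phi}=\mathbb 1$; both arguments are valid and of comparable length.
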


\begin{proof}
Let $U$ and $V$ be $m$-dimensional subspaces of $\mathcal H$ with respective bases $\left\{ \ket{\psi_k} \right\}_k$ and $\left\{ \ket{\phi_k} \right\}_k$, such that $\bm{\ket{\psi_A}}$ and $\bm{\ket{\phi_A}}$ are the same quantum state. We want to show that $U = V$.

Since $\bm{\ket{\psi_A}}$ and $\bm{\ket{\phi_A}}$ are the same quantum state, their fidelity is 1,
\begin{equation}
\frac{\left| \bm{\braket{\psi_A | \phi_A}} \right|^2}{\bm{\lVert \bm{\psi_A} \rVert}^2 \bm{\lVert \phi_A \rVert}^2} = 1.
\end{equation}
Without loss of generality, $\left\{ \ket{\psi_k} \right\}_k$ and $\left\{ \ket{\phi_k} \right\}_k$ can be chosen to be orthonormal bases. From \cref{lem:det_state_fidelity}, it follows that the Gram matrix $G_{\tilde \phi}$ has determinant 1. Using the Hadamard inequality on the Gram determinant, we deduce that
\begin{equation}
\prod_k \lVert P_U \ket{\phi_k} \rVert^2 \geq 1.
\end{equation}
However, since the vectors $\ket{\phi_k}$ are normalized, we also have
\begin{equation}
\forall k, \ \lVert P_U \ket{\phi_k} \rVert^2 \leq 1.
\end{equation}
From both inequalities, we deduce that
\begin{equation}
\forall k, \ \lVert P_U \ket{\phi_k} \rVert^2 = 1.
\end{equation}
This implies that each state $\ket{\phi_k}$ is in $U$. Since they form a basis of $V$, it follows that $V \subset U$, and since they have the same dimension, we conclude that $U = V$. We have thus shown that if two subspaces have the same determinant state, then they are the same subspace. In other words, the determinant state mapping is injective.
\end{proof}

\subsection{Distance between subspaces}

In \cref{sec:subspace_fidelity}, we introduced the distance between subspaces $U$ and $V$ with respective bases $\left\{ \ket{\psi_k} \right\}_k$ and $\left\{ \ket{\phi_k} \right\}_k$ defined by
\begin{equation}
d_s \left( U, V \right) = d \left( \bm{\ket{\psi_A}}, \bm{\ket{\phi_A}} \right).
\end{equation}
To help better understand this distance, the next proposition characterizes maximally distant subspaces under that metric.

\begin{proposition}[Maximally distant subspaces]
\label{prop:max_distance_subspaces}
Two $m$-dimensional subspaces $U$ and $V$ of $\mathcal H$ have the maximum distance of $\pi/2$ for $d_s$, if and only if $V$ contains a non-zero vector orthogonal to $U$.
\end{proposition}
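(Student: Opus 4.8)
The plan is to reduce the statement about the Fubini--Study distance $d_s$ to a concrete algebraic condition using \cref{lem:det_state_fidelity}. Since $d_s(U,V) = d(\bm{\ket{\psi_A}}, \bm{\ket{\phi_A}}) = \arccos(\sqrt{F})$, where $F = |\bm{\braket{\psi_A | \phi_A}}|^2 / (\bm{\lVert \psi_A \rVert}^2 \bm{\lVert \phi_A \rVert}^2)$ is the fidelity, the maximum distance $\pi/2$ is attained if and only if $F = 0$. By \cref{lem:det_state_fidelity}, choosing orthonormal bases of $U$ and $V$, we have $F = \det G_{\tilde\phi}$, where $G_{\tilde\phi}$ is the Gram matrix of the projected family $\{\ket{\tilde\phi_k}\}_k = \{P_U \ket{\phi_k}\}_k$. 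Thus the whole problem collapses to characterizing when $\det G_{\tilde\phi} = 0$.

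Next I would invoke the standard fact that a Gram matrix is singular if and only if the family $\{\ket{\tilde\phi_k}\}_k$ is linearly dependent. So $d_s(U,V) = \pi/2$ iff the projected vectors $P_U\ket{\phi_k}$ are linearly dependent, i.e. there exists a nonzero coefficient vector $c = (c_1,\dots,c_m)$ with $\sum_k c_k P_U \ket{\phi_k} = 0$, equivalently $P_U\bigl(\sum_k c_k \ket{\phi_k}\bigr) = 0$. Setting $\ket{w} = \sum_k c_k \ket{\phi_k}$, this says $P_U \ket{w} = 0$, which is exactly the statement that $\ket{w}$ is orthogonal to $U$.

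The final step is to argue that such a $\ket{w}$ is a nonzero vector of $V$, closing the equivalence in both directions. For the forward direction: since the $\ket{\phi_k}$ form a \emph{basis} of $V$, linear independence of the $\ket{\phi_k}$ guarantees that a nonzero $c$ yields $\ket{w} \neq 0$, and by construction $\ket{w} \in V$ and $\ket{w} \perp U$. For the reverse direction: if $V$ contains a nonzero vector $\ket{w}$ orthogonal to $U$, expanding $\ket{w} = \sum_k c_k \ket{\phi_k}$ in the basis gives nonzero coefficients with $P_U \ket{w} = 0$, hence $\sum_k c_k P_U\ket{\phi_k} = 0$, forcing $\det G_{\tilde\phi} = 0$ and therefore $F = 0$ and $d_s = \pi/2$.

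I do not expect any serious obstacle here; the argument is a clean chain of equivalences once \cref{lem:det_state_fidelity} is in hand. The only point requiring mild care is keeping the basis-independence straight: \cref{lem:det_state_fidelity} requires \emph{orthonormal} bases on the right-hand side, so I would state at the outset that orthonormal bases may be chosen without loss of generality (the fidelity on the left is basis-independent), and then be careful that the vanishing of $\det G_{\tilde\phi}$ is a property of the projected subspace $P_U(V)$ rather than of the particular orthonormal basis chosen. The linear-dependence-of-projections argument is intrinsic to $V$ and $U$, so this causes no difficulty.
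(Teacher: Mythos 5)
Your proposal is correct and follows essentially the same route as the paper's proof: both reduce the statement via \cref{lem:det_state_fidelity} to the vanishing of $\det G_{\tilde\phi}$, and then translate singularity of the Gram matrix into linear dependence of the projected vectors $P_U\ket{\phi_k}$, which in turn yields a non-zero vector of $V$ orthogonal to $U$ (and conversely). The only cosmetic difference is that the paper handles the reverse direction by completing the orthogonal vector into a basis of $V$ whose first projected element vanishes, whereas you expand it in a fixed orthonormal basis; the two arguments are interchangeable.
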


\begin{proof}
We use the characterization of $d_s$ given by \cref{lem:det_state_fidelity}. If $V$ contains a non-zero vector $\ket{\phi_1}$ orthogonal to $U$, then this vector can be completed into an orthogonal basis $\left\{ \ket{\phi_k} \right\}_k$ of $V$. The orthogonal projection of this family onto $U$ is not full rank since the first vector is $0$, and the resulting Gram matrix $G_{\tilde \phi}$ is thus singular. The fidelity between the subspaces is 0, and so their distance is $\pi/2$.

Conversely, suppose that $d_s(U, V) = \pi/2$. Again, using \cref{lem:det_state_fidelity}, this implies that for any orthogonal basis $\left\{ \ket{\phi_k} \right\}_k$ of $V$, the Gram matrix $G_{\tilde \phi}$ of the orthogonal projection of these states on $U$ satisfies
\begin{equation}
\det G_{\tilde \phi} = \cos d_s(U, V) = 0.
\end{equation}
We conclude that the states $P_U \ket{\phi_k}$ are linearly dependent and that there exists $\alpha \in \mathbb C^m, \ \alpha \neq 0$ such that
\begin{align}
&\sum_k \alpha_k P_U \ket{\phi_k} = 0 \\
\iff &P_U \sum_k \alpha_k \ket{\phi_k} = 0.
\end{align}
Since the states $\ket{\phi_k}$ are linearly independent, we conclude that $\sum_k \alpha_k \ket{\phi_k}$ is a non-zero state of $V$ that is orthogonal to $U$.
\end{proof}

\subsection{Energy of a subspace}

\begin{proposition}[Energy of the determinant state \cite{Pfau2024}]
\label{prop:det_energy}
Let $H$ be a Hamiltonian acting on $\mathcal H$, and let $V$ be an $m$-dimensional subspace of $\mathcal H$ spanned by a basis $\left\{ \ket{\phi_k} \right\}_k$. The energy of the subspace defined through the determinant state is the trace of its Rayleigh matrix,
\begin{equation}
\frac{\bm{\braket{\phi_A | H | \phi_A}}}{\bm{\braket{\phi_A | \phi_A}}} = \tr \left( G^{-1} G^{(H)} \right).
\end{equation}
where $G$ is the Gram matrix and $G^{(H)}$ is the projected Hamiltonian
\begin{align}
G_{ij} &= \braket{\phi_i | \phi_j}, \\
G_{ij}^{(H)} &= \braket{\phi_i | H | \phi_j}.
\end{align}
\end{proposition}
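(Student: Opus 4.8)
The plan is to reduce both sides to a statement about determinants of $m\times m$ matrices of single-copy overlaps, and then to close the gap with a standard multilinearity identity for the determinant. The denominator is immediate: applying \cref{prop:det_slater} with $\ket{\varphi_k}=\ket{\phi_k}$ gives $\bm{\braket{\phi_A | \phi_A}} = \tfrac{1}{m!}\det G$, where $G$ is the Gram matrix. Note that $G$ is invertible because $\left\{\ket{\phi_k}\right\}_k$ is a basis, so $G^{-1}$ is well-defined.

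For the numerator I would write $\bm{\ket{\phi_A}}=\bm{S_-}\bm{\ket{\phi}}$ with $\bm{\ket{\phi}}=\ket{\phi_1}\otimes\cdots\otimes\ket{\phi_m}$, and use $\bm{S_-}^\dagger=\bm{S_-}=\bm{S_-}^2$ together with the commutation $[\bm H,\bm{S_-}]=0$ (already used in \cref{sec:subspace_energy}, and following from $\bm{P_\sigma}\bm{H_k}\bm{P_\sigma}^{-1}=\bm{H_{\sigma(k)}}$). This collapses the matrix element to $\bm{\braket{\phi_A | H | \phi_A}}=\bm{\braket{\phi | S_- H | \phi}}=\sum_k \bm{\braket{\phi | S_- \bm{H_k} | \phi}}$. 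Each summand is an overlap $\bm{\braket{\phi | S_- | \phi^{(k)}}}$ in which $\bm{\ket{\phi^{(k)}}}=\bm{H_k}\bm{\ket{\phi}}$ is again a \emph{product} state, equal to $\bm{\ket{\phi}}$ with its $k$-th factor $\ket{\phi_k}$ replaced by $H\ket{\phi_k}$. A second application of \cref{prop:det_slater} then yields $\bm{\braket{\phi | S_- | \phi^{(k)}}}=\tfrac{1}{m!}\det S^{(k)}$, where $S^{(k)}$ is precisely the matrix $G$ with its $k$-th column replaced by the $k$-th column of $G^{(H)}$, since $[S^{(k)}]_{ik}=\braket{\phi_i | H | \phi_k}=G^{(H)}_{ik}$ while the remaining columns are unchanged.

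Combining the two computations, the factors $1/m!$ cancel in the ratio, and it remains to prove the purely linear-algebraic identity $\sum_k \det S^{(k)} = \det G\,\tr\!\left(G^{-1}G^{(H)}\right)$. I would establish this by viewing both sides as the directional derivative $\frac{d}{dt}\big|_{t=0}\det\!\left(G+t\,G^{(H)}\right)$: column-multilinearity of the determinant expands this derivative as exactly $\sum_k \det S^{(k)}$, while Jacobi's formula evaluates it as $\det G\,\tr(G^{-1}G^{(H)})$. This last identity is the only genuinely non-routine ingredient; everything preceding it is bookkeeping with \cref{prop:det_slater} and the commutation relation. I therefore expect the main obstacle to be stating and justifying the determinant identity cleanly — in particular keeping track of the fact that it is the \emph{columns} of $G$ that are replaced one at a time — rather than any conceptual difficulty.
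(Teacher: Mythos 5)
Your proposal is correct and follows essentially the same route as the paper's proof: the same reduction of the numerator and denominator to determinants of single-copy overlap matrices via \cref{prop:det_slater} and the commutation of $\bm{H}$ with $\bm{S_-}$, arriving at exactly the matrices you call $S^{(k)}$ (the paper's $G^{(H)}_{p\to p}$). The only difference is the closing linear-algebra step: the paper applies Cramer's rule term by term to identify each ratio $\det S^{(k)}/\det G$ with the diagonal entry $\left[G^{-1}G^{(H)}\right]_{kk}$ before summing, whereas you sum first and invoke Jacobi's formula for $\frac{d}{dt}\det\left(G+t\,G^{(H)}\right)\big|_{t=0}$ — both are valid, though the per-term identity is the one the paper reuses when deriving the full Rayleigh-matrix estimator.
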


\begin{proof}
For convenience, we denote by $\bm{\ket{\phi}} = \ket{\phi_1} \otimes \ldots \otimes \ket{\phi_m}$ the product state formed from the basis $\left\{ \ket{\phi_k} \right\}_k$.

We start by rewriting the left-hand term using \cref{prop:antisymmetrizer_projector} and the fact that $\bm H$ commutes with $\bm{S_-}$,\footnote{The Hamiltonian $\bm H$ commutes with each of the permutation operators $\bm{P_\sigma}$, so it also commutes with $\bm{S_-}$.}
\begin{align}
\frac{\bm{\braket{\phi_A | H | \phi_A}}}{\bm{\braket{\phi_A | \phi_A}}} &= \frac{\bm{\braket{\phi | S_-^\dag H S_- | \phi}}}{\bm{\braket{\phi_A | \phi_A}}} \\
&= \frac{\bm{\braket{\phi | S_-^2 H | \phi}}}{\bm{\braket{\phi_A | \phi_A}}} \\
&= \frac{\bm{\braket{\phi | S_- H | \phi}}}{\bm{\braket{\phi_A | \phi_A}}} \\
&= \frac{\bm{\braket{\phi_A | H | \phi}}}{\bm{\braket{\phi_A | \phi_A}}} \\
&= \sum_p \frac{\bm{\braket{\phi_A | H_p | \phi}}}{\bm{\braket{\phi_A | \phi_A}}}.
\end{align}
\end{proof}

According to \cref{prop:det_slater}, the denominator can be expressed as $\bm{\braket{\phi_A | \phi_A}} = \frac{1}{m!} \det G$. 

We now consider the numerator $\bm{\braket{\phi_A | H_p | \phi}}$. If we define the family $\left\{ \ket{\tilde \phi_k} \right\}_k$ by
\begin{equation}
\ket{\tilde \phi_k} =
\begin{cases}
\ket{\phi_k} &\text{ if } k \neq p \\
H \ket{\phi_k} &\text{ if } k = p
\end{cases},
\end{equation}
then the numerator is
\begin{align}
\bm{\braket{\phi_A | H_p | \phi}} &= \bm{\braket{\phi_A | \tilde \phi}}, \\
&= \frac{1}{m!} \det G_{p \to p}^{(H)},
\end{align}
where we used \cref{prop:det_slater}, and we defined $G_{p \to p}^{(H)}$, the matrix $G$ where the $p$-th column was replaced by the $p$-th column of $G^{(H)}$,
\begin{equation}
\left[ G_{p \to p}^{(H)} \right]_{ij} = 
\begin{cases}
\braket{\phi_i | \phi_j} &\text{ if } j \neq p \\
\braket{\phi_i | H | \phi_p} &\text{ if } j = p
\end{cases}.
\end{equation}

We have thus obtained
\begin{equation}
\frac{\bm{\braket{\phi_A | H | \phi_A}}}{\bm{\braket{\phi_A | \phi_A}}} = \sum_p \frac{\det G_{p \to p}^{(H)}}{\det G}.
\end{equation}
As we pointed out, the matrix $G_{p \to p}^{(H)}$ is nothing but the matrix $G$ where the $p$-th column was replaced by the $p$-th column of $G^{(H)}$. We can therefore use Cramer's rule to rewrite the ratio of determinants. If we denote the $p$-th column of $G^{(H)}$ by $\left[ G^{(H)} \right]_{:,p}$, we can rewrite the energy as
\begin{align}
\frac{\bm{\braket{\phi_A | H | \phi_A}}}{\bm{\braket{\phi_A | \phi_A}}} &= \sum_p \left[ G^{-1} \left[ G^{(H)} \right]_{:,p} \right]_p \\
&= \sum_p \left[ G^{-1} G^{(H)} \right]_{pp} \\
&= \tr \left( G^{-1} G^{(H)} \right).
\end{align}

\begin{proposition}[Estimator of the Rayleigh matrix \cite{Pfau2024}]
\label{prop:rm_estimator}
The Rayleigh matrix can be decomposed as
\begin{equation}
G^{-1} G^{(H)} = \mathbb{E}_{\bm s \sim |\bm{\phi_A(s)}|^2} \left[ \Phi(\bm s)^{-1} \Phi^{(H)}(\bm s) \right],
\end{equation}
with
\begin{align}
\left[ \Phi(\bm s) \right]_{ij} &= \braket{s_i | \phi_j}, \\
\left[ \Phi^{(H)}(\bm s) \right]_{ij} &= \braket{s_i | H | \phi_j}.
\end{align}
\end{proposition}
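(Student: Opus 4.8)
The plan is to compute the right-hand side entry by entry and reduce it to a ratio of determinants that Cramer's rule then identifies with $[G^{-1}G^{(H)}]_{jk}$. First I would write the expectation as a sum over the computational basis $\bm s = (s_1,\ldots,s_m)$ of $\mathcal H^{\otimes m}$, weighted by $p(\bm s) = \bm{|\braket{s | \phi_A}|}^2/\bm{\braket{\phi_A | \phi_A}}$, and apply Cramer's rule to each matrix product. Writing $\Phi(\bm s)_{j\to k}^{(H)}$ for the matrix $\Phi(\bm s)$ with its $j$-th column replaced by the $k$-th column of $\Phi^{(H)}(\bm s)$, one has $[\Phi(\bm s)^{-1}\Phi^{(H)}(\bm s)]_{jk} = \det \Phi(\bm s)_{j\to k}^{(H)} / \det \Phi(\bm s)$, exactly as in the column-replacement computation used to prove \cref{prop:det_energy}.

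Next I would use the amplitude formula $\bm{\braket{s | \phi_A}} = \frac{1}{m!}\det\Phi(\bm s)$ together with $\bm{\braket{\phi_A | \phi_A}} = \frac{1}{m!}\det G$ (both from \cref{prop:det_slater}) to rewrite the weight. The key algebraic simplification is $|\det\Phi(\bm s)|^2/\det\Phi(\bm s) = \overline{\det\Phi(\bm s)}$, which eliminates the troublesome inverse determinant and turns the $(j,k)$ entry of the right-hand side into $\frac{1}{m!\det G}\sum_{\bm s}\overline{\det\Phi(\bm s)}\,\det\Phi(\bm s)_{j\to k}^{(H)}$.

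The crucial step is to recognise $\det\Phi(\bm s)_{j\to k}^{(H)}$ as $m!$ times the amplitude $\bm{\braket{s | \chi_A}}$ of the determinant state built from the modified family $\ket{\chi_l} = \ket{\phi_l}$ for $l\neq j$ and $\ket{\chi_j} = H\ket{\phi_k}$. Since $\overline{\det\Phi(\bm s)} = m!\,\bm{\braket{\phi_A | s}}$, the sum becomes $(m!)^2\sum_{\bm s}\bm{\braket{\phi_A | s}}\,\bm{\braket{s | \chi_A}} = (m!)^2\,\bm{\braket{\phi_A | \chi_A}}$ by completeness of $\{\ket{\bm s}\}$ on $\mathcal H^{\otimes m}$. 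Applying \cref{prop:det_slater} once more gives $\bm{\braket{\phi_A | \chi_A}} = \frac{1}{m!}\det G_{j\to k}^{(H)}$, where $G_{j\to k}^{(H)}$ is $G$ with its $j$-th column replaced by the $k$-th column of $G^{(H)}$.

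Collecting the prefactors, the $(j,k)$ entry reduces to $\det G_{j\to k}^{(H)}/\det G$, which by Cramer's rule is precisely $[G^{-1}G^{(H)}]_{jk}$, and the proof closes. I expect the main obstacle to be purely in the bookkeeping: tracking the column-replacement index conventions consistently between $\Phi$ and $G$, and threading the powers of $m!$ correctly through the two applications of \cref{prop:det_slater} so that the identity $|\det\Phi|^2/\det\Phi = \overline{\det\Phi}$ is exactly what collapses the sum into a resolution of the identity on $\mathcal H^{\otimes m}$.
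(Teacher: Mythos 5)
Your proposal is correct and is essentially the paper's own proof run in reverse: the paper starts from $[G^{-1}G^{(H)}]_{kp}$, applies Cramer's rule and \cref{prop:det_slater} to reach the expectation value, whereas you start from the expectation and collapse it back via the same two ingredients (column-replacement determinants for the modified family $\ket{\chi_l}$, and the resolution of the identity on $\mathcal H^{\otimes m}$), with the factors of $m!$ and the cancellation $|\det\Phi|^2/\det\Phi = \overline{\det\Phi}$ all checking out. The only point worth making explicit is that configurations with $\det\Phi(\bm s)=0$ carry zero probability weight and so drop out of the sum harmlessly, which the paper handles by restricting the sum to $\bm{\braket{\phi_A|s}}\neq 0$.
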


\begin{proof}
We consider the matrix element $(k,p)$ of the Rayleigh matrix. Using Cramer's rule again, we can reexpress it as a ratio of determinants,
\begin{align}
\left[ G^{-1} G^{(H)} \right]_{kp} &= \left[ G^{-1} G^{(H)}_{:,p} \right]_k \\
&= \frac{\det G_{k \to p}^{(H)}}{\det G}.
\end{align}
where $G_{k \to p}^{(H)}$, the matrix $G$ where the $k$-th column was replaced by the $p$-th column of $G^{(H)}$.

If we define the family $\left\{ \ket{\tilde \phi_i} \right\}_i$ by
\begin{equation}
\ket{\tilde \phi_i} =
\begin{cases}
\ket{\phi_i} &\text{ if } i \neq k \\
H \ket{\phi_p} &\text{ if } i = k
\end{cases},
\end{equation}
then the matrix $G_{k \to p}^{(H)}$ can be written as $\left[ G_{k \to p}^{(H)} \right]_{ij} = \braket{\phi_i | \tilde \phi_j}$. Using \cref{prop:det_slater}, we can therefore rewrite the matrix element as
\begin{equation}
\left[ G^{-1} G^{(H)} \right]_{kp} = \frac{\bm{\braket{\phi_A | \tilde \phi_A}}}{\bm{\braket{\phi_A | \phi_A}}}.
\end{equation}
From there, we can follow the usual steps to provide a decomposition as an expectation value,
\begin{align}
\left[ G^{-1} G^{(H)} \right]_{kp} &= \sum_{\bm s} \frac{\bm{\braket{\phi_A | s}} \bm{\braket{s | \tilde \phi_A}}}{\bm{\braket{\phi_A | \phi_A}}} \\
&= \sum_{\substack{\bm s \\ \bm{\braket{\phi_A | s}} \neq 0}} \frac{\bm{\braket{\phi_A | s}}}{\bm{\braket{\phi_A | \phi_A}}} \bm{\braket{s | \tilde \phi_A}} \\
&= \sum_{\substack{\bm s \\ \bm{\braket{\phi_A | s}} \neq 0}} \frac{\left| \bm{\braket{\phi_A | s}} \right|^2}{\bm{\braket{\phi_A | \phi_A}}} \frac{\bm{\braket{s | \tilde \phi_A}}}{\bm{\braket{s | \phi_A}}} \\
&= \label{eq:expectation_value}\mathbb{E}_{\bm s \sim |\bm{\phi_A(s)}|^2} \left[ \frac{\bm{\braket{s | \tilde \phi_A}}}{\bm{\braket{s | \phi_A}}} \right].
\end{align}

Let us now examine the object inside the expectation value. Applying \cref{prop:det_slater} yet again, we can reexpress both the numerator and denominator as determinants,
\begin{align}
\bm{\braket{s | \tilde \phi_A}} &= \frac{1}{m!} \det \Phi_{k \to p}^{(H)}(\bm s), \\
\bm{\braket{s | \phi_A}} &= \frac{1}{m!} \det \Phi(\bm s),
\end{align}
where we defined $\left[ \Phi(\bm s) \right]_{ij} = \braket{s_i | \phi_j}$ and
\begin{align}
\left[ \Phi_{k \to p}^{(H)}(\bm s) \right]_{ij} &= \braket{s_i | \tilde \phi_j} \\
&=
\begin{cases}
\braket{s_i | \phi_j} &\text{ if } j \neq k \\
\braket{s_i | H | \phi_p} &\text{ if } j = k
\end{cases}.
\end{align}
In other words, if we define the matrix $\Phi^{(H)}(\bm s)$ by $\left[ \Phi^{(H)}(\bm s) \right]_{ij} = \braket{s_i | H | \phi_j}$, then the matrix $\Phi_{k \to p}^{(H)}(\bm s)$ is $\Phi(\bm s)$ where the $k$-th column was replaced by the $p$-th column of $\Phi^{(H)}(\bm s)$. Using Cramer's rule again, we can therefore reexpress
\begin{align}
\frac{\bm{\braket{s | \tilde \phi_A}}}{\bm{\braket{s | \phi_A}}} &= \frac{\det \Phi_{k \to p}^{(H)}(\bm s)}{\det \Phi(\bm s)} \\
&= \left[ \Phi(\bm s)^{-1} \Phi^{(H)}(\bm s) \right]_{kp}.
\end{align}
Inserting this expression in \cref{eq:expectation_value} for every matrix element, we obtain the final result,
\begin{equation}
G^{-1} G^{(H)} = \mathbb{E}_{\bm s \sim |\bm{\phi_A(s)}|^2} \left[ \Phi(\bm s)^{-1} \Phi^{(H)}(\bm s) \right].
\end{equation}
\end{proof}

\begin{proposition}[Zero-variance property of the Rayleigh matrix estimator]
\label{prop:zero-variance}
If the states $\ket{\phi_k}$ span an invariant subspace of $H$, then the local energy matrix $\Phi(\bm s)^{-1} \Phi^{(H)}(\bm s)$ is a constant function of $\bm s$.
\end{proposition}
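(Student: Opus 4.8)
The plan is to exploit the defining property of invariance directly. Since $V = \operatorname{span}\left( \left\{ \ket{\phi_k} \right\}_k \right)$ is invariant under $H$, each vector $H \ket{\phi_j}$ lies back in $V$ and therefore expands over the same basis. I would introduce the $m \times m$ coefficient matrix $M$ defined by
\begin{equation}
H \ket{\phi_j} = \sum_k M_{kj} \ket{\phi_k}.
\end{equation}
This $M$ is precisely the matrix of the restriction $H|_V$ in the basis $\left\{ \ket{\phi_k} \right\}_k$, and it is uniquely determined because the family is linearly independent. Crucially, $M$ does not depend on $\bm s$; it is a fixed property of the subspace and the Hamiltonian.

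Next I would substitute this expansion into the definition of $\Phi^{(H)}(\bm s)$ and factor out $M$. Using $\left[ \Phi^{(H)}(\bm s) \right]_{ij} = \braket{s_i | H | \phi_j}$ and linearity, one gets
\begin{equation}
\left[ \Phi^{(H)}(\bm s) \right]_{ij} = \sum_k M_{kj} \braket{s_i | \phi_k} = \sum_k \left[ \Phi(\bm s) \right]_{ik} M_{kj} = \left[ \Phi(\bm s) M \right]_{ij},
\end{equation}
that is, the matrix identity $\Phi^{(H)}(\bm s) = \Phi(\bm s) M$ holds for every configuration $\bm s$. Left-multiplying by $\Phi(\bm s)^{-1}$ then yields
\begin{equation}
\Phi(\bm s)^{-1} \Phi^{(H)}(\bm s) = M,
\end{equation}
which is manifestly independent of $\bm s$, establishing the claim. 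As a sanity check, the same argument applied to the Gram matrices gives $G^{(H)} = G M$ and hence $G^{-1} G^{(H)} = M$, so the constant value of the local matrix is exactly the Rayleigh matrix, consistent with unbiasedness of the estimator in \cref{prop:rm_estimator}.

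There is no serious obstacle here; the only point requiring care is the invertibility of $\Phi(\bm s)$. The local energy matrix is only defined, and only contributes to the expectation value in \cref{eq:rm_decomposition}, on configurations for which $\bm{\braket{s | \phi_A}} \neq 0$, equivalently $\det \Phi(\bm s) \neq 0$ by \cref{prop:det_slater}. Thus $\Phi(\bm s)^{-1}$ exists exactly on the support of the sampling distribution $\left| \bm{\phi_A(s)} \right|^2$, which is all that is needed. I would state this restriction explicitly so that the manipulation $\Phi(\bm s)^{-1} \Phi(\bm s) = \mathbb 1$ is justified.
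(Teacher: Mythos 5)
Your proposal is correct and follows essentially the same route as the paper: introduce the coefficient matrix of $H$ restricted to the invariant subspace, deduce $\Phi^{(H)}(\bm s) = \Phi(\bm s) M$, and conclude that the local energy matrix equals $M$ (the paper calls this matrix $B$). Your added remark on restricting to configurations where $\det \Phi(\bm s) \neq 0$ is a sensible precaution that the paper leaves implicit.
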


\begin{proof}
If the states $\ket{\phi_k}$ span an invariant subspace, then there exists an $(m,m)$ matrix $B$ such that
\begin{equation}
\forall p, \ H \ket{\phi_p} = \sum_k B_{kp} \ket{\phi_k}.
\end{equation}
The matrix $\Phi^{(H)}(\bm s)$ can therefore be rewritten as
\begin{align}
\left[ \Phi^{(H)}(\bm s) \right]_{ij} &= \braket{s_i | H | \phi_j} \\
&= \sum_k B_{kj} \braket{s_i | \phi_k} \\
&= \left[ \Phi(\bm s) B \right]_{ij}.
\end{align}
We can then insert this expression in the Rayleigh matrix to obtain
\begin{equation}
\Phi(\bm s)^{-1} \Phi^{(H)}(\bm s) = B,
\end{equation}
such that the local energy matrix is independent of the value of the sample $\bm s$. We conclude that the variance of the estimator of the Rayleigh matrix induced by \cref{prop:rm_estimator} is therefore 0 if the states $\ket{\phi_k}$ span an invariant subspace. 
\end{proof}

\section{Eigenvalues of the Rayleigh matrix as the generalization of eigenvalues in a subspace}
\label{app:ritz_values}

In the main text, we introduced the eigenvalues $\mu_k$ of the Rayleigh matrix $G^{-1} G^{(H)}$. In this section, we justify why the $\mu_k$, called the \textit{Ritz values}, can be understood as the natural generalization of the eigenvalues of an operator when restricted to a subspace that is not necessarily an invariant subspace. In particular, we will prove the generalized variational principle of \cref{eq:cit}.

We start by recalling two important results of linear algebra: the min-max theorem and the Cauchy interlacing theorem. While these theorems are expressed in the context of an orthonormal basis, we will see that their results can be easily extended to an arbitrary basis.

\begin{theorem}[Min-max theorem \cite{Golub}]
\label{th:minmax}
Let $H$ be an $n \times n$ Hermitian operator acting on the Hilbert space $\mathcal H$, and let $E_k, \ k \in \left\{ 0, \ldots, n-1 \right\}$ be its eigenvalues indexed in increasing order. The eigenvalue $E_k$ can be expressed as a maximum over subspaces of $\mathcal H$ of dimension $k+1$,
\begin{equation}
E_k = \min_{W \in \gr_{k+1}(\mathcal H)} \max_{\substack{\ket{\psi} \in W \\ \ket{\psi} \neq 0}} \frac{\braket{\psi | H | \psi}}{\braket{\psi | \psi}},
\end{equation}
where we recall that $\gr_{k+1}(\mathcal H)$ is the set of subspaces of $\mathcal H$ with dimension $k+1$.
\end{theorem}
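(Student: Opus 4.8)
The plan is to prove the Courant--Fischer min-max characterization of eigenvalues by establishing two matching inequalities: an upper bound showing that the claimed minimum is at most $E_k$, and a lower bound showing it is at least $E_k$. Throughout I would work in an orthonormal eigenbasis $\{\ket{E_0}, \ldots, \ket{E_{n-1}}\}$ of $H$, which exists by the spectral theorem since $H$ is Hermitian, with $E_0 \leq E_1 \leq \cdots \leq E_{n-1}$.

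For the upper bound, I would exhibit a specific $(k+1)$-dimensional subspace $W_\star$ on which the maximum of the Rayleigh quotient equals $E_k$. The natural choice is $W_\star = \operatorname{span}(\ket{E_0}, \ldots, \ket{E_k})$. For any nonzero $\ket{\psi} = \sum_{j=0}^{k} c_j \ket{E_j}$ in $W_\star$, a direct computation gives
\begin{equation}
\frac{\braket{\psi | H | \psi}}{\braket{\psi | \psi}} = \frac{\sum_{j=0}^{k} E_j |c_j|^2}{\sum_{j=0}^{k} |c_j|^2} \leq E_k,
\end{equation}
since each $E_j \leq E_k$ for $j \leq k$, and equality is attained at $\ket{\psi} = \ket{E_k}$. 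Hence $\max_{\ket{\psi} \in W_\star} \braket{\psi|H|\psi}/\braket{\psi|\psi} = E_k$, which shows the minimum over all $(k+1)$-dimensional $W$ is at most $E_k$.

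For the lower bound, I would show that every $(k+1)$-dimensional subspace $W$ has $\max_{\ket{\psi}\in W} \braket{\psi|H|\psi}/\braket{\psi|\psi} \geq E_k$. The key dimension-counting argument is that $W$, of dimension $k+1$, must intersect nontrivially with the subspace $U = \operatorname{span}(\ket{E_k}, \ldots, \ket{E_{n-1}})$ of dimension $n-k$, since $(k+1) + (n-k) = n+1 > n = \dim \mathcal H$. Picking a nonzero vector $\ket{\psi} = \sum_{j=k}^{n-1} c_j \ket{E_j}$ in $W \cap U$, the same computation as above yields $\braket{\psi|H|\psi}/\braket{\psi|\psi} \geq E_k$ because each $E_j \geq E_k$ for $j \geq k$. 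Since this vector lies in $W$, the maximum over $W$ is at least $E_k$. Combining the two bounds gives the claimed equality.

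The main obstacle, such as it is, is the dimension-counting intersection argument in the lower bound: one must invoke the fact that two subspaces of $\mathcal H$ whose dimensions sum to more than $\dim \mathcal H$ must share a nonzero vector, which follows from the Grassmann formula $\dim(W \cap U) \geq \dim W + \dim U - \dim \mathcal H$. Everything else reduces to the elementary observation that the Rayleigh quotient of a vector expanded in the eigenbasis is a convex combination of eigenvalues, hence bounded between the smallest and largest eigenvalues appearing in its support. Since this is a standard theorem of linear algebra cited directly from \cite{Golub}, I would expect the authors to state it without proof or with only a brief sketch along these lines.
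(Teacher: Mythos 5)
Your proposal is correct and follows essentially the same route as the paper's proof: both establish the upper bound by exhibiting $W_\star = \operatorname{span}(\ket{E_0},\ldots,\ket{E_k})$ and the lower bound via the dimension-counting intersection of an arbitrary $(k+1)$-dimensional $W$ with $\operatorname{span}(\ket{E_k},\ldots,\ket{E_{n-1}})$. The only difference is cosmetic; you make the Grassmann-formula step explicit, which the paper leaves implicit.
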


\begin{proof}
For simplicity, we will denote
\begin{equation}
E_{\max}(W) = \max_{\substack{\ket{\psi} \in W \\ \ket{\psi} \neq 0}} \frac{\braket{\psi | H | \psi}}{\braket{\psi | \psi}}.
\end{equation}
In order to prove that $E_k = \min_{W \in \gr_{k+1}(\mathcal H)} E_{\max}(W)$, we need to show:
\begin{align}
\textbf{(i)} \quad& \forall W \in \gr_{k+1}(\mathcal H), E_{\max}(W) \geq E_k \\
\textbf{(ii)} \quad& \exists W \in \gr_{k+1}(\mathcal H), \ E_{\max}(W) = E_k.
\end{align}

We denote by $\ket{E_k}$ the $k$-th eigenvector of $H$, corresponding to the eigenvalue $E_k$.

\paragraph{(i)}
Consider the $(n-k)$-dimensional subspace $V = \operatorname{span} \left( \ket{E_k}, \ket{E_{k+1}}, \ldots, \ket{E_{n-1}} \right)$. For any state $\ket{\psi_\alpha} = \sum_{p=k}^{n-1} \alpha_p \ket{E_p} \in V$, the Rayleigh quotient satisfies
\begin{equation}
\label{eq:wbar}
\frac{\braket{\psi_\alpha | H | \psi_\alpha}}{\braket{\psi_\alpha | \psi_\alpha}} = \frac{\sum_{p=k}^{n-1} E_p | \alpha_p |^2}{\sum_{p=k}^{n-1} | \alpha_p |^2} \geq E_k.
\end{equation}
Since any subspace $W \in \gr_{k+1}(\mathcal H)$ has dimension $k+1$, its intersection with the $(n-k)$-dimensional subspace $V$ is of dimension greater or equal to 1. Therefore, there exists a non-zero vector of $W$ that also belongs to $V$. By \cref{eq:wbar}, we deduce that $E_{\max}(W) \leq E_k$.

\paragraph{(ii)}
We consider the subspace $W^* = \operatorname{span} \left( \ket{E_0}, \ket{E_1}, \ldots, \ket{E_k} \right)$. For a state $\ket{\psi_\alpha} = \sum_{p=0}^{k} \alpha_p \ket{E_p} \in W^*$, the Rayleigh quotient is given by
\begin{equation}
\frac{\braket{\psi_\alpha | H | \psi_\alpha}}{\braket{\psi_\alpha | \psi_\alpha}} = \frac{\sum_{p=0}^{k} E_p | \alpha_p |^2}{\sum_{p=0}^{k} | \alpha_p |^2}.
\end{equation}
Its maximum is thus $E_{\max}(W^*) = E_k$.

\end{proof}

Based on the characterization of eigenvalues given by the min-max theorem, the following theorem justifies how the Ritz values $\mu_k$ are the natural extension of eigenvalues when restricted to a subspace. From a variational perspective, they are the smallest values that can be obtained within $V$.

\begin{theorem}[Cauchy interlacing theorem]
\label{th:cauchy_interlacing}
Let $H$ be an $n \times n$ Hermitian operator. Let $V$ be an $m$-dimensional subspace with \textbf{orthonormal} basis $\left\{ \ket{\phi_k} \right\}_k$. We define the matrix $G^{(H)}$ by $G_{ij}^{(H)} = \braket{\phi_i | H | \phi_j}$. If we denote by $E_k$ and $\mu_k$ the respective eigenvalues of $H$ and $G^{(H)}$, indexed in increasing order, then we have
\begin{equation}
\mu_k = \min_{\substack{W \in \gr_{k+1} \left(\mathcal H \right) \\ W \subset V}} \max_{\substack{\ket{\psi} \in W \\ \ket{\psi} \neq 0}} \frac{\braket{\psi | H | \psi}}{\braket{\psi | \psi}},
\end{equation}
and
\begin{equation}
\forall k \leq m, \ E_k \leq \mu_k.
\end{equation}
\end{theorem}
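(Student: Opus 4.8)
The plan is to reduce both claims to the min-max theorem (\cref{th:minmax}) by recognizing $G^{(H)}$ as a \emph{compressed} operator on $V$. Let $P_V$ denote the orthogonal projector onto $V$ and set $\tilde H = P_V H P_V$, which is Hermitian since $H$ is and $P_V^\dag = P_V$. Because the basis $\left\{ \ket{\phi_k} \right\}_k$ is orthonormal, $P_V \ket{\phi_j} = \ket{\phi_j}$, so $G^{(H)}_{ij} = \braket{\phi_i | H | \phi_j} = \braket{\phi_i | \tilde H | \phi_j}$. Hence $G^{(H)}$ is exactly the matrix of $\tilde H$ restricted to $V$, written in this orthonormal basis, and the $\mu_k$ are precisely the eigenvalues of $\tilde H$ viewed as a Hermitian operator on the $m$-dimensional Hilbert space $V$.

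For the first identity, I would apply \cref{th:minmax} to $\tilde H$ acting on $V$ in place of $\mathcal H$, which yields
\begin{equation}
\mu_k = \min_{W \in \gr_{k+1}(V)} \max_{\substack{\ket \psi \in W \\ \ket \psi \neq 0}} \frac{\braket{\psi | \tilde H | \psi}}{\braket{\psi | \psi}}.
\end{equation}
Two elementary observations then finish the argument: first, for any $\ket \psi \in V$ one has $\braket{\psi | \tilde H | \psi} = \braket{\psi | H | \psi}$ because $P_V$ acts as the identity on $V$; second, the $(k+1)$-dimensional subspaces of $V$ are exactly the $W \in \gr_{k+1}(\mathcal H)$ with $W \subset V$. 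Substituting both into the formula above gives the claimed characterization of $\mu_k$.

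For the inequality, I would simply place the two min-max formulas side by side. Writing $E_{\max}(W) = \max_{\ket \psi \in W,\, \ket \psi \neq 0} \braket{\psi | H | \psi}/\braket{\psi | \psi}$, \cref{th:minmax} reads $E_k = \min_{W \in \gr_{k+1}(\mathcal H)} E_{\max}(W)$, while the first part reads $\mu_k = \min_{W \in \gr_{k+1}(\mathcal H),\, W \subset V} E_{\max}(W)$. The minimum defining $\mu_k$ ranges over a subset of the feasible set defining $E_k$, so it can only be larger, giving $E_k \leq \mu_k$ for every $k \in \left\{ 0, \ldots, m-1 \right\}$ (the indices for which $\mu_k$ exists, equivalently for which $\gr_{k+1}(V)$ is non-empty).

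The argument is essentially bookkeeping once the reduction is in place; the single genuine ingredient is the identification of $G^{(H)}$ with the compressed operator $\tilde H$ on $V$, which relies crucially on the orthonormality hypothesis — for a non-orthonormal basis $G \neq \mathbb 1$, and the Rayleigh quotient on $V$ would not collapse to $\braket{\psi | \tilde H | \psi}/\braket{\psi | \psi}$, so one would instead have to work with the full Rayleigh matrix $G^{-1}G^{(H)}$. I do not anticipate any real obstacle beyond verifying that the two families of feasible subspaces match up correctly.
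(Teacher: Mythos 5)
Your proof is correct and follows essentially the same route as the paper's: both apply the min-max theorem to the $m$-dimensional problem (you phrase it via the compression $P_V H P_V$ restricted to $V$, the paper via the coordinate map $\alpha \mapsto \ket{\psi_\alpha} = \sum_k \alpha_k \ket{\phi_k}$ and the matrix $G^{(H)}$, which are the same identification), use orthonormality to equate the two Rayleigh quotients, and obtain $E_k \leq \mu_k$ by observing that the minimum for $\mu_k$ ranges over the restricted family of subspaces contained in $V$.
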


\begin{proof}
Applying the min-max theorem to $\mu_k$, we get
\begin{equation}
\mu_k = \min_{W \in \gr_{k+1} \left( V \right)} \max_{\substack{\alpha \in W \\ \alpha \neq 0}} \frac{\alpha^\dag G^{(H)} \alpha}{\alpha^\dag \alpha}.
\end{equation}

We can rewrite
\begin{align}
\alpha^\dag G^{(H)} \alpha &= \sum_{kp} \alpha_k^* \braket{\phi_k | H | \phi_p} \alpha_p \\
&= \left( \sum_{k} \alpha_k^* \bra{\phi_k} \right) H \left( \sum_p \alpha_p \ket{\phi_p} \right) \\
&= \braket{\psi_\alpha | H | \psi_\alpha},
\end{align}
where we defined $\ket{\psi_\alpha} = \sum_k \alpha_k \ket{\phi_k}$.

The denominator can also be reexpressed as
\begin{align}
\alpha^\dag \alpha &= \sum_{k} \alpha_k^* \alpha_k \\
&= \sum_{kp} \alpha_k^* \delta_{kp} \alpha_p \\
&= \sum_{kp} \alpha_k^* \braket{\phi_k | \phi_p} \alpha_p \\
&= \braket{\psi_\alpha | \psi_\alpha},
\end{align}
where we used the orthogonality of the family $\left\{ \ket{\phi_k} \right\}_k$.

We can thus rewrite the min-max theorem as
\begin{equation}
\mu_k = \min_{W \in \gr_{k+1} \left( V \right)} \max_{\substack{\alpha \in W \\ \alpha \neq 0}} \frac{\braket{\psi_\alpha | H | \psi_\alpha}}{\braket{\psi_\alpha | \psi_\alpha}}.
\end{equation}

The set $W_{\mathcal H} = \left\{ \ket{\psi_\alpha}, \alpha \in W \right\}$ defines a $(k+1)$-dimensional subspace of $\mathcal H$ which is included in $V$. Conversely, any $(k+1)$-dimensional subspace of $\mathcal H$ which is included in $V$ can be written as $W_{\mathcal H}$ for some subspace $W$ of $V$. We can therefore rewrite the min-max theorem on $\mu_k$ once again,
\begin{align}
\mu_k &= \min_{W \in \gr_{k+1} \left( V \right)} \max_{\substack{\ket{\psi} \in W_{\mathcal H} \\ \ket{\psi} \neq 0}} \frac{\braket{\psi | H | \psi}}{\braket{\psi | \psi}} \\
&= \min_{\substack{W \in \gr_{k+1} \left( \mathcal H \right) \\ W \subset V}} \max_{\substack{\ket{\psi} \in W \\ \ket{\psi} \neq 0}} \frac{\braket{\psi | H | \psi}}{\braket{\psi | \psi}}.
\end{align}
It then follows immediately from the min-max theorem on $E_k$ that
\begin{equation}
\forall k \leq m, \ E_k \leq \mu_k.
\end{equation}
\end{proof}

In the Cauchy interlacing theorem, the Ritz values $\mu_k$ are defined as the eigenvalues of the matrix $G^{(H)}$ constructed from an orthonormal basis of the subspace $V$. We note that for a given Hermitian operator $H$, they are independent of the chosen orthonormal basis, and only depend on the subspace $V$. The following result indicates how to obtain the Ritz values from matrices constructed from any basis of $V$.

\begin{proposition}[Ritz values from non-orthonormal families]
Let $H$ be an $n \times n$ Hermitian operator. Let $V$ be an $m$-dimensional subspace. The corresponding Ritz values $\mu_k$ can be obtained from any basis $\left\{ \ket{\phi_k} \right\}_k$ of $V$ as the eigenvalues of the Rayleigh matrix $G^{-1} G^{(H)}$, where $G_{ij} = \braket{\phi_i | H | \phi_j}$ and $G_{ij}^{(H)} = \braket{\phi_i | H | \phi_j}$.
\end{proposition}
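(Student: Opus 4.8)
The plan is to reduce the claim to the orthonormal case already established by the Cauchy interlacing theorem (\cref{th:cauchy_interlacing}) via a change-of-basis argument. Recall that the Ritz values $\mu_k$ were \emph{defined} as the eigenvalues of the matrix $\tilde G^{(H)}$ with entries $\tilde G^{(H)}_{ij} = \braket{\psi_i | H | \psi_j}$ built from an \emph{orthonormal} basis $\left\{ \ket{\psi_k} \right\}_k$ of $V$. The goal is therefore to show that, for an arbitrary basis $\left\{ \ket{\phi_k} \right\}_k$ of the same subspace, the Rayleigh matrix $G^{-1} G^{(H)}$ has exactly these same eigenvalues.

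First I would fix an orthonormal basis $\left\{ \ket{\psi_k} \right\}_k$ of $V$ and expand the arbitrary basis in it, writing $\ket{\phi_p} = \sum_k B_{kp} \ket{\psi_k}$ in the spirit of \cref{eq:change-of-basis}, where $B$ is the $m \times m$ change-of-basis matrix. Since both families span the same $m$-dimensional space and $\left\{ \ket{\phi_k} \right\}_k$ is linearly independent, $B$ is invertible. I would then substitute this expansion into the definitions of $G$ and $G^{(H)}$ and use the orthonormality $\braket{\psi_k | \psi_l} = \delta_{kl}$. A short computation yields
\begin{align}
G_{ij} &= \sum_{kl} B_{ki}^* B_{lj} \braket{\psi_k | \psi_l} = \left[ B^\dagger B \right]_{ij}, \\
G^{(H)}_{ij} &= \sum_{kl} B_{ki}^* B_{lj} \braket{\psi_k | H | \psi_l} = \left[ B^\dagger \tilde G^{(H)} B \right]_{ij},
\end{align}
so that $G = B^\dagger B$ and $G^{(H)} = B^\dagger \tilde G^{(H)} B$.

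Finally I would combine these two identities to obtain
\begin{equation}
G^{-1} G^{(H)} = \left( B^\dagger B \right)^{-1} B^\dagger \tilde G^{(H)} B = B^{-1} \tilde G^{(H)} B,
\end{equation}
which exhibits $G^{-1} G^{(H)}$ as a similarity transform of $\tilde G^{(H)}$. Since similar matrices share the same spectrum, the eigenvalues of $G^{-1} G^{(H)}$ coincide with those of $\tilde G^{(H)}$, namely the Ritz values $\mu_k$, which completes the proof. The argument is essentially notational bookkeeping, and the only point requiring care is the consistent placement of indices and complex conjugates in $B$, together with the invertibility of $B$ that guarantees $G^{-1} = B^{-1}(B^\dagger)^{-1}$ exists; both are secured by the linear independence of the $\ket{\phi_k}$. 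I do not anticipate any genuine obstacle beyond this.
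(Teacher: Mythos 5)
Your proof is correct and follows essentially the same route as the paper: both arguments relate the arbitrary basis to an orthonormal one by an invertible change-of-basis matrix, express $G$ and $G^{(H)}$ through it, and conclude that $G^{-1}G^{(H)}$ is similar to the orthonormal-basis matrix $\tilde G^{(H)}$, hence shares its spectrum. The only cosmetic difference is the direction of the change of basis (your $B$ is the inverse of the paper's orthogonalization matrix $A$), which does not affect the argument.
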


\begin{proof}
Let $\left\{ \ket{\tilde \phi_k} \right\}_k$ be an orthogonalization of the family $\left\{ \ket{\phi_k} \right\}_k$ (for example through the Gram-Schmidt process), such that the matrices of columns of each families are related by $\tilde \Phi = \Phi A$ where $\Phi_{sk} = \braket{s | \phi_k}$ and $\tilde \Phi_{sk} = \braket{s | \tilde \phi_k}$, and $A$ is an invertible $m \times m$ matrix that defines the orthogonalization process.

The Ritz values $\mu_k$ are the eigenvalues of the matrix $\tilde G^{(H)} = \tilde \Phi^\dag H \tilde \Phi$. Expressed in terms of $\Phi$, we have $\tilde G^{(H)} = A^\dag \Phi^\dag H \Phi A = A^\dag G^{(H)} A$. The eigenvalues are unchanged upon conjugation by an invertible matrix such as $A$. Therefore, the $\mu_k$ are also the eigenvalues of $A \tilde G^{(H)} A^{-1} = A A^\dag G^{(H)}$. On the other hand, we have
\begin{align}
G &= \Phi^\dag \Phi \\
&= (A^{-1})^\dag \tilde \Phi^\dag \tilde \Phi A^{-1} \\
&= (A^{-1})^\dag A^{-1} \\
&= ( A A^\dag)^{-1}.
\end{align}
where we used the orthonormality of the family $\left\{ \ket{\tilde \phi_k} \right\}_k$. We can thus substitute $A \tilde G^{(H)} A^{-1} = G^{-1} G^{(H)}$, such that the Ritz values $\mu_k$ are indeed the eigenvalues of the Rayleigh matrix.
\end{proof}

The results we have given so far on the Ritz values can be illustrated by the special case where $V$ is an invariant subspace of $H$. Then, the restriction of $H$ to $V$ is well-defined, and it can be shown to be the operator $G^{-1} G^{(H)}$. Therefore, in this special case, the Ritz values are actual eigenvalues of $H$.

Since each Ritz value $\mu$ is an eigenvalue of the Rayleigh matrix, it has an associated eigenvector $\alpha$. From this eigenvector, we can define a state $\ket{\psi_\alpha} = \sum_p \alpha_p \ket{\phi_p}$ of the full Hilbert space, called \textit{Ritz vector}. In the next proposition, we show that the Ritz vector $\ket{\psi_\alpha}$ has energy $\mu$.

\begin{proposition}[Energy of Ritz vectors]
The eigenvectors $\alpha^{(k)}$ of $G^{-1} G^{(H)}$ corresponding to the eigenvalues $\mu_k$ define states of the full Hilbert space $\ket{\psi_{\alpha^{(k)}}} = \sum_p \alpha_p^{(k)} \ket{\phi_p}$ such that the average value of $\ket{\psi_{\alpha^{(k)}}}$ on $H$ is $\mu_k$.
\end{proposition}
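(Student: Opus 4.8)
The plan is to compute the Rayleigh quotient $\braket{\psi_{\alpha^{(k)}} | H | \psi_{\alpha^{(k)}}} / \braket{\psi_{\alpha^{(k)}} | \psi_{\alpha^{(k)}}}$ directly and show that it collapses to $\mu_k$ by rewriting the eigenequation for $G^{-1} G^{(H)}$ in its generalized-eigenvalue form. The whole argument is a short bilinear-form manipulation, so no conceptual machinery beyond the definitions of $G$ and $G^{(H)}$ is needed.

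First I would expand the numerator and denominator in the basis $\left\{ \ket{\phi_p} \right\}_p$. Inserting $\ket{\psi_{\alpha^{(k)}}} = \sum_p \alpha_p^{(k)} \ket{\phi_p}$ and using the definitions $G_{ij} = \braket{\phi_i | \phi_j}$ and $G_{ij}^{(H)} = \braket{\phi_i | H | \phi_j}$, the numerator becomes $\alpha^{(k)\dag} G^{(H)} \alpha^{(k)}$ and the denominator becomes $\alpha^{(k)\dag} G \alpha^{(k)}$. This is exactly the same computation already carried out in the proof of \cref{th:cauchy_interlacing}, except that here the basis is not assumed orthonormal, which is precisely why the Gram matrix $G$ appears in the denominator rather than the identity.

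Next I would multiply the eigenequation $G^{-1} G^{(H)} \alpha^{(k)} = \mu_k \alpha^{(k)}$ on the left by $G$ to obtain its generalized form $G^{(H)} \alpha^{(k)} = \mu_k\, G \alpha^{(k)}$. Substituting this into the numerator gives $\alpha^{(k)\dag} G^{(H)} \alpha^{(k)} = \mu_k\, \alpha^{(k)\dag} G \alpha^{(k)}$, so the Rayleigh quotient equals $\mu_k$ after cancelling the common factor $\alpha^{(k)\dag} G \alpha^{(k)}$.

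The only point requiring care is that this common factor is nonzero: since $\left\{ \ket{\phi_p} \right\}_p$ is linearly independent, its Gram matrix $G$ is positive definite, and $\alpha^{(k)} \neq 0$ as an eigenvector, so $\alpha^{(k)\dag} G \alpha^{(k)} > 0$. I do not expect a genuine obstacle in this proof; once the eigenequation is written in generalized form the result is immediate, and the statement is essentially the non-orthonormal generalization of the elementary fact that the expectation value of a Hermitian operator on one of its eigenvectors is the corresponding eigenvalue.
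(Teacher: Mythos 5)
Your proposal is correct and follows essentially the same route as the paper: expand the Rayleigh quotient into $\alpha^{(k)\dag} G^{(H)} \alpha^{(k)} / \alpha^{(k)\dag} G \alpha^{(k)}$, rewrite the eigenequation in its generalized form $G^{(H)} \alpha^{(k)} = \mu_k G \alpha^{(k)}$, and cancel. Your explicit remark that $\alpha^{(k)\dag} G \alpha^{(k)} > 0$ by positive definiteness of the Gram matrix is a small point of care that the paper leaves implicit.
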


\begin{proof}
Consider $\alpha$, an eigenvector of $G^{-1} G^{(H)}$ corresponding to the eigenvalue $\mu$. Then, the energy of the associated state $\ket{\psi_\alpha} = \sum_p \alpha_p \ket{\phi_p}$ is given by
\begin{align}
\frac{\braket{\psi_\alpha | H | \psi_\alpha}}{\braket{\psi_\alpha | \psi_\alpha}} &= \frac{\sum_{p,p'} \alpha_p^* \braket{\phi_p | H | \phi_{p'}} \alpha_{p'}}{\sum_{p,p'} \alpha_p^* \braket{\phi_p | \phi_{p'}} \alpha_{p'}} \\
&= \frac{\alpha^\dag G^{(H)} \alpha}{\alpha^\dag G \alpha} \\
&= \mu \frac{\alpha^\dag G \alpha}{\alpha^\dag G \alpha} \\
&= \mu,
\end{align}
where we used $G^{(H)} \alpha = \mu G \alpha$.
\end{proof}

This result means that the Ritz values $\mu_k$ are not just abstract quantities. They are associated with states of the full Hilbert space that have energy $\mu_k$, and that can be easily constructed.

We have thus seen how the energy of the Ritz vectors $\ket{\psi_{\alpha^{(k)}}}$ can be trivially obtained. We will now see how the average value of other observables can be obtained. While there exist other methods of computing observables of the Ritz vectors as detailed in \cref{sec:estimate_observables}, the following method has the advantage of only involving Rayleigh matrices.

\begin{proposition}[Average values of Ritz vectors]
\label{prop:ritz_vector_average}
Let $A$ be an operator. If the Ritz values are all distinct, then the average value of the $k$-th Ritz vector $\ket{\psi_{\alpha^{(k)}}}$ is given by
\begin{equation}
\frac{\braket{\psi_{\alpha^{(k)}} | A | \psi_{\alpha^{(k)}}}}{\braket{\psi_{\alpha^{(k)}} | \psi_{\alpha^{(k)}}}} = \left[ P^{-1} G^{-1} G^{(A)} P \right]_{kk},
\end{equation}
where $P$ is the matrix of eigenvectors of $G^{-1} G^{(H)}$.
\end{proposition}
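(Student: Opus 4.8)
The plan is to reduce both sides to bilinear expressions in the eigenvectors $\alpha^{(k)}$ of the Rayleigh matrix and to exploit a $G$-orthogonality relation between the Ritz vectors that holds precisely because the Ritz values are distinct. First I would rewrite the left-hand side directly from the definition $\ket{\psi_{\alpha^{(k)}}} = \sum_p \alpha_p^{(k)} \ket{\phi_p}$, expanding numerator and denominator exactly as in the proof of the energy of Ritz vectors but with $H$ replaced by $A$:
\begin{equation}
\frac{\braket{\psi_{\alpha^{(k)}} | A | \psi_{\alpha^{(k)}}}}{\braket{\psi_{\alpha^{(k)}} | \psi_{\alpha^{(k)}}}} = \frac{(\alpha^{(k)})^\dag G^{(A)} \alpha^{(k)}}{(\alpha^{(k)})^\dag G \alpha^{(k)}}.
\end{equation}

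The key step is to establish that the columns of $P$, i.e.\ the eigenvectors $\alpha^{(k)}$, are orthogonal with respect to the inner product defined by the Hermitian positive-definite Gram matrix $G$. Starting from the generalized eigenvalue relation $G^{(H)} \alpha^{(k)} = \mu_k G \alpha^{(k)}$ and using that both $G$ and $G^{(H)}$ are Hermitian with real $\mu_k$, I would evaluate $(\alpha^{(j)})^\dag G^{(H)} \alpha^{(k)}$ in two ways to derive $(\mu_k - \mu_j)(\alpha^{(j)})^\dag G \alpha^{(k)} = 0$. The distinctness hypothesis then forces $(\alpha^{(j)})^\dag G \alpha^{(k)} = 0$ for $j \neq k$, so that $P^\dag G P = \Lambda$ is diagonal with $\Lambda_{kk} = (\alpha^{(k)})^\dag G \alpha^{(k)} = \braket{\psi_{\alpha^{(k)}} | \psi_{\alpha^{(k)}}}$.

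From $P^\dag G P = \Lambda$, I would invert to obtain $P^{-1} = \Lambda^{-1} P^\dag G$ and substitute into the right-hand side,
\begin{equation}
\left[ P^{-1} G^{-1} G^{(A)} P \right]_{kk} = \left[ \Lambda^{-1} P^\dag G^{(A)} P \right]_{kk} = \Lambda_{kk}^{-1} \left[ P^\dag G^{(A)} P \right]_{kk},
\end{equation}
where the last equality uses that $\Lambda$ is diagonal. Recognizing $[P^\dag G^{(A)} P]_{kk} = (\alpha^{(k)})^\dag G^{(A)} \alpha^{(k)} = \braket{\psi_{\alpha^{(k)}} | A | \psi_{\alpha^{(k)}}}$ and $\Lambda_{kk} = \braket{\psi_{\alpha^{(k)}} | \psi_{\alpha^{(k)}}}$ then matches the left-hand side and closes the argument.

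The main obstacle, and the only place the distinctness assumption is actually used, is the $G$-orthogonality step; everything else is bookkeeping. I would additionally remark that distinctness also guarantees that the diagonalizable matrix $G^{-1} G^{(H)}$ has a full set of independent eigenvectors, so $P$ is invertible and $P^{-1}$ is well-defined, as implicitly required by the statement.
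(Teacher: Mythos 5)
Your proof is correct, but it takes a genuinely different route from the paper's. Both arguments hinge on the same key fact --- that the Ritz vectors are mutually orthogonal when the Ritz values are distinct, which in coefficient space is exactly your relation $P^\dag G P = \Lambda$ with $\Lambda$ diagonal --- but you establish it and exploit it differently. The paper proves orthogonality by showing that the Ritz vectors are eigenvectors of the Hermitian projected Hamiltonian $H_Q = Q H Q$ with $Q = \Phi G^{-1}\Phi^\dag$, then completes them to an orthogonal basis $\mathcal B$ of the full Hilbert space, forms the change-of-basis matrix $R = (\Phi P \;\; \Psi)$, and reads the claimed identity off the upper-left block of $A_{\mathcal B} = (R^\dag R)^{-1} R^\dag A R$. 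You instead derive $G$-orthogonality directly from the generalized eigenvalue relation $G^{(H)}\alpha^{(k)} = \mu_k G\alpha^{(k)}$ via the standard $(\mu_k-\mu_j)(\alpha^{(j)})^\dag G\alpha^{(k)}=0$ argument (using Hermiticity of $G$, $G^{(H)}$ and reality of the $\mu_k$, both available from the paper), and then the identity $P^{-1} = \Lambda^{-1}P^\dag G$ reduces the right-hand side to $\Lambda_{kk}^{-1}[P^\dag G^{(A)}P]_{kk}$ in two lines. Your version stays entirely within $m\times m$ matrices and avoids the basis-completion and block computations, so it is shorter and more self-contained; the paper's version has the side benefit of exhibiting the full matrix $A_{\mathcal B}$ and hence also the off-diagonal elements $\braket{\psi_{\alpha^{(i)}}|A|\psi_{\alpha^{(j)}}}/\braket{\psi_{\alpha^{(i)}}|\psi_{\alpha^{(i)}}}$, which your diagonal-only computation also yields but which you do not state. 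Your closing remark that distinctness guarantees invertibility of $P$ is a welcome addition that the paper leaves implicit.
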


\begin{proof}
We first need to prove that under the assumption that the Ritz values are all distinct, the Ritz vectors are orthogonal.

The eigendecomposition of the Rayleigh matrix $G^{-1} G^{(H)} = P D P^{-1}$ defines a family of Ritz vectors whose matrix of columns is $U = \Phi P$, where $\Phi_{sk} = \braket{s | \phi_k}$. If we define $Q = \Phi G^{-1} \Phi^\dag$ the orthogonal projector on the subspace spanned by the states $\ket{\phi_k}$ and $H_Q = Q H Q$ the Hamiltonian projected in this subspace, we notice that
\begin{align}
H_Q U &= \Phi G^{-1} \Phi^\dag H \Phi G^{-1} \Phi^\dag \Phi P \\
&= \Phi G^{-1} G^{(H)} P \\
&= \Phi P D \\
&= U D,
\end{align}
such that the Ritz vectors are also eigenvectors of the projected Hamiltonian $H_Q$, with the Ritz values as the corresponding eigenvalues.

Under the assumption that the Ritz values are all different, the Ritz vectors are thus eigenvectors of a Hermitian operator for distinct eigenvalues and are therefore orthogonal (but not necessarily orthonormal).

\paragraph{}
We denote by $\ket{\psi_{\alpha^{(k)}}}$ the Ritz vectors, that is, the columns of the matrix $U$. Let $\left\{ \ket{\psi_p} \right\}_p$ be an orthogonal family of $n-m$ vectors orthogonal to all states $\ket{\psi_{\alpha^{(k)}}}$, that completes $\left\{ \ket{\psi_{\alpha^{(k)}}} \right\}_k$ into an orthogonal basis $\mathcal B$. We denote by $\Psi$ the matrix of columns of the family $\left\{ \ket{\psi_p} \right\}_p$.

Let $R$ be the matrix of the basis $\mathcal B$ in the canonical basis, that is,
\begin{equation}
R = \begin{pmatrix}
\Phi P & \Psi
\end{pmatrix}.
\end{equation}
Then, the matrix of $A$ in the basis $\mathcal B$ can be expressed as\footnote{We make the usual abuse of notation of denoting both the operator and its matrix in the canonical basis by the same symbol $A$.}
\begin{align}
A_{\mathcal B} &= R^{-1} A R \\
&= R^{-1} \left( R^\dag \right)^{-1} R^\dag A R \\
&= \left( R^\dag R \right)^{-1} R^\dag A R.
\end{align}

Let us examine the block structure of $R^\dag R$,
\begin{align}
R^\dag R &=
\begin{pmatrix}
P^\dag \Phi^\dag \\
\Psi^\dag
\end{pmatrix}
\begin{pmatrix}
\Phi P & \Psi
\end{pmatrix} \\
&=
\begin{pmatrix}
P^\dag \Phi^\dag \Phi P & P^\dag \Phi^\dag \Psi \\
\Psi^\dag \Phi P & \Psi^\dag \Psi
\end{pmatrix} \\
&=
\begin{pmatrix}
P^\dag G_\phi P & 0 \\
0 & G_\psi
\end{pmatrix},
\end{align}
where the off-diagonal blocks are zero since the states $\ket{\psi_{\alpha^{(k)}}}$ and $\ket{\psi_p}$ are orthogonal. We use the indices of $G$ to distinguish between the Gram matrices of the states $\ket{\phi_k}$ and $\ket{\psi_p}$.

Let us then write down the blocks of $R^\dag A R$,
\begin{equation}
R^\dag A R =
\begin{pmatrix}
P^\dag G_\phi^{(A)} P & P^\dag G_{\phi \psi}^{(A)} \\
G_{\psi \phi}^{(A)} P & G_\psi^{(A)}
\end{pmatrix}.
\end{equation}

Combining both expressions, we finally get
\begin{equation}
A_{\mathcal B} =
\begin{pmatrix}
P^{-1} G_\phi^{-1} G_\phi^{(A)} P & P^{-1} G_\phi^{-1} G_{\phi \psi}^{(A)} \\
G_\psi^{-1} G_{\psi \phi}^{(A)} P & G_\psi^{-1} G_\psi^{(A)}
\end{pmatrix}.
\end{equation}

Since $A_{\mathcal B}$ is the matrix of $A$ in an orthogonal basis, we know that for $i,j \leq m$,
\begin{equation}
\left[ P^{-1} G_\phi^{-1} G_\phi^{(A)} P \right]_{ij} = \frac{\braket{ \psi_{\alpha^{(i)}} | A | \psi_{\alpha^{(j)}}}}{\braket{\psi_{\alpha^{(i)}} | \psi_{\alpha^{(i)}}}}.
\end{equation}
In particular, average values on a Ritz vector can be obtained as
\begin{equation}
\left[ P^{-1} G_\phi^{-1} G_\phi^{(A)} P \right]_{ii} = \frac{\braket{\psi_{\alpha^{(i)}} | A | \psi_{\alpha^{(i)}}}}{\braket{\psi_{\alpha^{(i)}} | \psi_{\alpha^{(i)}}}}.
\end{equation}
\end{proof}

\section{Variational principle for linear variational states}
\label{app:variational_principle}

We derive the (normalized) variational principle for a linear variational state, and show that it is in fact equivalent to the unnormalized variational principal in this case. This form is more simple and convenient than the direct form of the normalized variational principle.

\begin{proposition}
Let $\left\{ \ket{\phi_k} \right\}_k$ be a family of linearly independent states, and let $\ket{\psi_\alpha}$ be the linear variational state defined on this basis,
\begin{equation}
\ket{\psi_\alpha} = \sum_k \alpha_k \ket{\phi_k}, \ \alpha \in \mathbb C^m.
\end{equation}
Then, the time-dependent variational principle applied to $\ket{\psi_\alpha}$ for a Hamiltonian $H$ results in the variational dynamics
\begin{equation}
\dot \alpha = - i G^{-1} G^{(H)} \alpha.
\end{equation}
\end{proposition}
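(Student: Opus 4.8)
The plan is to exploit the defining feature of a linear ansatz: since $\ket{\psi_\alpha}=\sum_k \alpha_k\ket{\phi_k}$ depends holomorphically and \emph{linearly} on the complex parameters, the tangent vectors are just the basis states,
\begin{equation}
\partial_{\alpha_k}\ket{\psi_\alpha}=\ket{\phi_k},
\end{equation}
and crucially they do not depend on $\alpha$. First I would derive the simple form through the projection (Dirac–Frenkel) condition, requiring the residual of the Schrödinger equation $\partial_t\ket{\psi_\alpha}+iH\ket{\psi_\alpha}$ to be orthogonal to the tangent space, i.e.\ $\braket{\phi_j|(\partial_t+iH)|\psi_\alpha}=0$ for every $j$. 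Using $\partial_t\ket{\psi_\alpha}=\sum_k\dot\alpha_k\ket{\phi_k}$, this reads $\sum_k G_{jk}\dot\alpha_k=-i\sum_k G^{(H)}_{jk}\alpha_k$, or in matrix form $G\dot\alpha=-iG^{(H)}\alpha$. Invertibility of $G$, guaranteed by the assumed linear independence of the $\ket{\phi_k}$, then yields $\dot\alpha=-iG^{-1}G^{(H)}\alpha$, which is \cref{eq:bridge}.

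The more delicate step is to connect this with the normalized variational principle employed for neural quantum states, where the overall norm and global phase are treated as gauge. There the dynamics obeys $S\dot\alpha=-iF$ with the connected quantum geometric tensor and force
\begin{align}
S_{jk}&=\frac{\braket{\phi_j|\phi_k}}{N}-\frac{\braket{\phi_j|\psi_\alpha}\braket{\psi_\alpha|\phi_k}}{N^2},\\
F_j&=\frac{\braket{\phi_j|H|\psi_\alpha}}{N}-\frac{\braket{\phi_j|\psi_\alpha}\braket{\psi_\alpha|H|\psi_\alpha}}{N^2},
\end{align}
where $N=\alpha^\dagger G\alpha$. Writing $u=G\alpha$, so that $\braket{\phi_j|\psi_\alpha}=u_j$, these become rank-one modifications $S=\frac{1}{N}G-\frac{1}{N^2}uu^\dagger$ and $F=\frac{1}{N}G^{(H)}\alpha-\frac{E}{N}u$ with $E=\alpha^\dagger G^{(H)}\alpha/N$. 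I would first observe that $S\alpha=0$, since $u^\dagger\alpha=\alpha^\dagger G\alpha=N$; thus $\alpha$ lies in the kernel of $S$, reflecting that displacing $\alpha$ along $\alpha$ only rescales the norm and phase of $\ket{\psi_\alpha}$. Consequently the normalized equation fixes $\dot\alpha$ only up to an additive multiple $c\alpha$.

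Finally I would verify directly that the simple solution solves the normalized equation. Substituting $\dot\alpha=-iG^{-1}G^{(H)}\alpha$ and using the two identities $G\dot\alpha=-iG^{(H)}\alpha$ and $u^\dagger\dot\alpha=\alpha^\dagger G\dot\alpha=-iNE$, a short computation shows that both $S\dot\alpha$ and $-iF$ equal $-\frac{i}{N}G^{(H)}\alpha+\frac{iE}{N}u$, confirming $S\dot\alpha=-iF$. Hence the normalized principle and the simple equation agree up to the kernel direction $c\alpha$, which affects only the physically irrelevant norm and global phase, establishing their equivalence in the linear case. The main obstacle is the bookkeeping of the rank-one connected terms in $S$ and $F$, in particular identifying $\alpha$ as the kernel direction and tracking the scalar $u^\dagger\dot\alpha$, whereas the projection derivation of \cref{eq:bridge} itself is immediate from linearity.
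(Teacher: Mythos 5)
Your proposal is correct and follows essentially the same route as the paper: both compute the connected $S$ and $F$ for the linear ansatz, observe $S\alpha=0$, exhibit $-iG^{-1}G^{(H)}\alpha$ as a particular solution of $S\dot\alpha=-iF$, and dismiss the residual freedom along $\alpha$ as a pure norm-and-phase gauge. The only point you assert without justification is that the ambiguity is \emph{only} $c\alpha$, i.e.\ that $\ker S=\operatorname{span}(\alpha)$; the paper closes this with a one-line rank argument ($G$ has full rank $m$ and $G\alpha\alpha^\dagger G$ has rank one, so the singular matrix $S$ has rank exactly $m-1$).
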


\begin{proof}
We recall that for a given parameter $\alpha$, the time-dependent variational principle gives the variation of the parameter as a solution of the equation
\begin{equation}
\label{eq:tVMC}
S \dot \alpha = -i F,
\end{equation}
where the quantum geometric tensor $S$ and forces vector $F$ are given in general by
\begin{align}
S_{pp'} &= \frac{\partialps \bra{\psi_\alpha} \partialpp \ket{\psi_\alpha}}{\braket{\psi_\alpha | \psi_\alpha}} - \frac{( \partialps \bra{\psi_\alpha} ) \ket{\psi_\alpha}}{\braket{\psi_\alpha | \psi_\alpha}} \frac{\bra{\psi_\alpha} \partialpp \ket{\psi_\alpha}}{\braket{\psi_\alpha | \psi_\alpha}}, \\
F_p &= \frac{( \partialps \bra{\psi_\alpha} ) H \ket{\psi_\alpha}}{\braket{\psi_\alpha | \psi_\alpha}} - \frac{\braket{\psi_\alpha | H | \psi_\alpha}}{\braket{\psi_\alpha | \psi_\alpha}} \frac{( \partialps \bra{\psi_\alpha} ) \ket{\psi_\alpha}}{\braket{\psi_\alpha | \psi_\alpha}}.
\end{align}

In the case of the linear variational state $\ket{\psi_\alpha}$, the derivatives are easily obtained as
\begin{equation}
\partialp \ket{\psi_\alpha} = \ket{\phi_p}.
\end{equation}
Plugging these derivatives into $S$ and $F$, we get
\begin{align}
S &= \frac{1}{\alpha^\dag G \alpha} \left[ G - \frac{G \alpha \alpha^\dag G}{\alpha^\dag G \alpha} \right], \\
F &= \frac{1}{\alpha^\dag G \alpha} \left[ G^{(H)} \alpha - \frac{\alpha^\dag G^{(H)} \alpha}{\alpha^\dag G \alpha} G \alpha \right],
\end{align}
where $G_{ij} = \braket{\phi_i | \phi_j}$  and $G_{ij}^{(H)} = \braket{\phi_i | H | \phi_j}$ are respectively the Gram matrix and projected Hamiltonian corresponding to the family $\left\{ \ket{\phi_k} \right\}_k$.

We notice that we always have $S \alpha = 0$, such that $S$ is not invertible. Thus, we cannot simply take $\dot \alpha = - i S^{-1} F$ as our variation of the parameter. The linear system has either no solution or has a solution space which is an affine space of dimension larger than 1. It is easy to see that one particular solution is $- i G^{-1} G^{(H)} \alpha$, and that the set of all solutions is thus given by
\begin{equation}
- i G^{-1} G^{(H)} \alpha + \ker S.
\end{equation}

We now need to identify the kernel of $S$. Since $G$ has full rank $m$ and $G \alpha \alpha^\dag G = G \alpha (G \alpha)^\dag$ is a rank 1 matrix, then $S$ can only have rank $m$ or $m-1$. But we know that $S$ is singular, so it must have rank $m-1$. As such, we have $\ker S = \operatorname{span}(\alpha)$ and the set of solutions to \cref{eq:tVMC} is
\begin{equation}
\left\{- i G^{-1} G^{(H)} \alpha + \lambda \alpha, \ \lambda \in \mathbb C \right\}
\end{equation}
for any $\lambda \in \mathbb C$.

Let us assume that we pick an arbitrary solution at all times, in such a way that $\lambda(t)$ is an integrable function with antiderivative $\Lambda(t)$. Then the resulting dynamics is given by
\begin{equation}
\label{eq:big_lambda}
\alpha(t) = e^{-i \left( G^{-1} G^{(H)} + \Lambda(t) \right)} \alpha_0.
\end{equation}
Since $\Lambda(t)$ is a scalar function, it doesn't change the resulting quantum state. The function
\begin{equation}
\alpha(t) = e^{-i G^{-1} G^{(H)}} \alpha_0
\end{equation}
describes the same quantum state as \cref{eq:big_lambda}, meaning that the choice of $\lambda$ is irrelevant. We can thus always choose the solution with $\lambda=0$. We conclude that in this case, the TDVP equation is equivalent to
\begin{equation}
\dot \alpha = - i G^{-1} G^{(H)} \alpha.
\end{equation}
\end{proof}

\section{Estimators of the Rayleigh matrix}
\label{app:rm_estimators}

In this section we explore in more details the various possible estimators of the Rayleigh matrix $G^{-1} G^{(H)}$. The two classes of estimators that we consider are the determinant state estimator from \cref{eq:rm_decomposition} and the sum of states estimator from \cref{eq:sum_of_states_1,eq:sum_of_states_2}.
One remark we can make concerning the sum of states estimator is that as the number of states $m$ increases, the probability distribution it relies on becomes increasingly unadapted to the specific matrix elements it estimates. This might lead to an increase in the variance of this estimator. Compared to the determinant state estimator, however, it does have the advantage of being much cheaper to compute for a given number of samples (see \cref{tab:estimator_costs}).
We also note that both estimators reduce to the standard estimator of the energy when $m=1$.

We now discuss various details about these two estimators that lead to different possible implementations, each of which we benchmark in \cref{fig:estimator_difference_big}.

\paragraph{Keep or discard imaginary parts}
When estimating the energy of a single state using the local energy, the resulting estimate is complex and only becomes real in the limit of infinite samples. As such, the real part of the resulting estimate is taken. Similarly, the Rayleigh matrix is known to have real eigenvalues, but the estimators using a finite number of samples will produce estimates that have eigenvalues with a non-zero imaginary part. As such, we can transform the resulting estimate by discarding the imaginary part of its eigenvalues. In \cref{fig:estimator_difference_big}, for all estimators, we examine both keeping and discarding the imaginary part of the eigenvalue in order to identify which variant is more efficient.

\paragraph{Numerical stability}
The sum of states estimator produces $G$ and $G^{(H)}$ individually, such that they still need to be combined into $G^{-1} G^{(H)}$. The inversion of $G$ can be especially ill-conditioned as the states $\ket{\phi_k}$ get close to being linearly dependent. This means that the Monte Carlo error incurred by estimating $G$ stochastically is significantly amplified when taking its inverse. We therefore examine two methods to construct $G^{-1} G^{(H)}$. The first method consists in taking the inverse in high precision. The second method consists in taking the pseudo-inverse for a given singular value cut-off. We compare those two methods in \cref{fig:estimator_difference_big}.

We note that the determinant state estimator suffers much less from the ill-conditionedness of $G$ than the sum of states estimator. In the sum of states estimator, the matrix $G$ that we invert is known up to Monte Carlo error $\epsilon_{\text{MC}}$. If we denote the condition number of $G$ by $\kappa$, the resulting error on $G^{-1}$ is bounded by $\kappa \epsilon_{\text{MC}}$. However, in the case of the determinant state estimator, we instead invert $\Phi(\bm s)$, which is known up to numerical precision $\epsilon_{\text{NP}} \sim 10^{-15}$. Assuming that the condition numbers of the matrices $\Phi(\bm s)$ for the sampled configurations $\bm s$ are of similar magnitude to $\kappa$, the resulting error is instead bounded by $\kappa \epsilon_{\text{NP}}$. Since typically $\epsilon_{\text{NP}} \ll \epsilon_{\text{MC}}$, we see that the determinant state estimator is expected to be significantly more stable to large $\kappa$ than the sum of states estimator.

\begin{figure}[tb]
\centering
\includegraphics[width=0.9\linewidth]{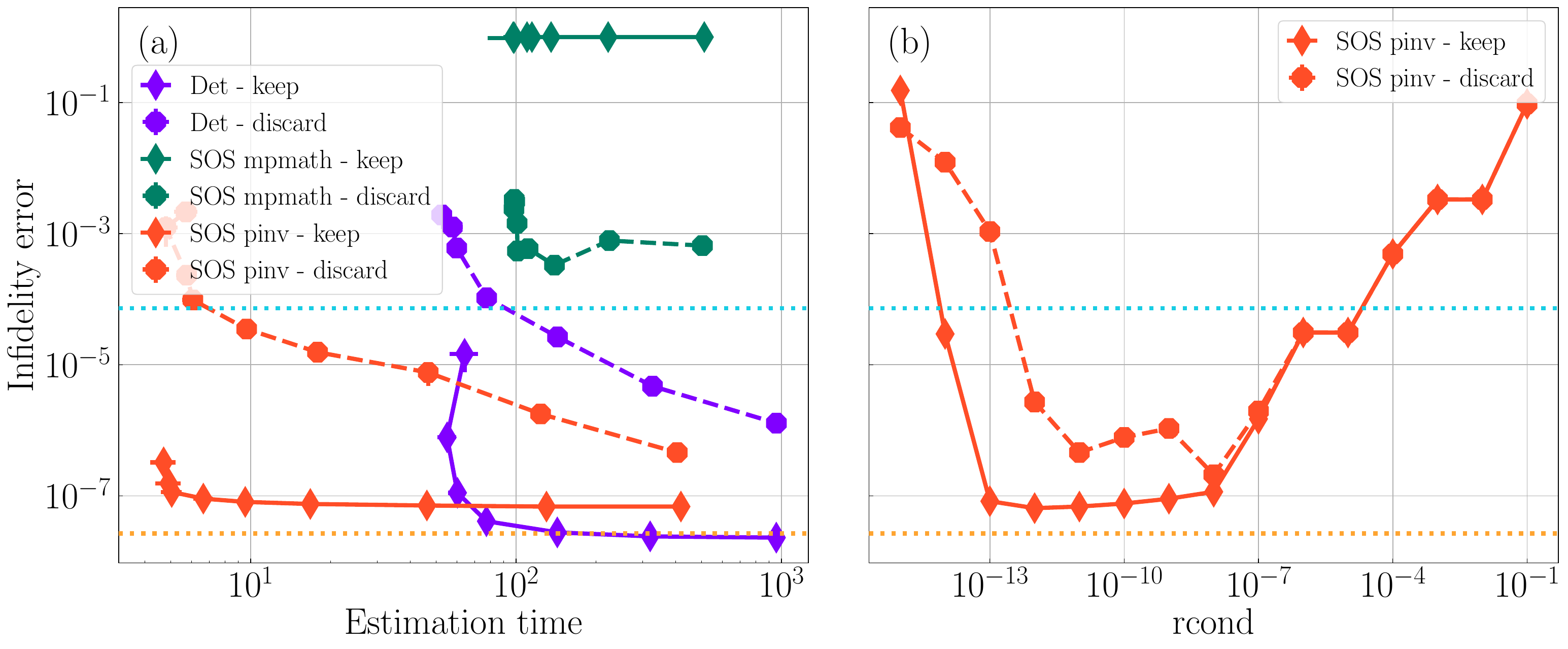}
\caption{(a) Comparison between the final infidelity error of Bridge performed using various estimators of the Rayleigh matrix as a function of the estimation time of the Rayleigh matrix. The Hamiltonian is the transverse-field Ising Hamiltonian on a $4 \times 4$ lattice with $(h,J) = (1,0.1)$, and the basis states are 136 CNN states obtained with p-tVMC from time 0 to 1.35 with $\delta=0.01$. The Rayleigh matrix is estimated either with the determinant state (Det) estimator of \cref{eq:rm_decomposition} or with the sum of states (SOS) estimator of \cref{eq:sum_of_states_1,eq:sum_of_states_2} for various numbers of samples. For the sum of states, the inverse of $G$ was either taken exactly with 400 digits of precision (mpmath) or using the pseudo-inverse (pinv) with an singular value cut-off of $10^{-11}$. For each of the three methods, full lines show the performance when keeping the imaginary part of the eigenvalues of the Rayleigh matrix, and dashed lines show the performance when discarding them. The dotted horizontal orange line indicated the optimal performance that can be achieved in the subspace. In theory, it should be below all other points, but it is subject to numerical error. The dotted horizontal blue line indicates the performance of the original states $\ket{\phi_k}$. (b) Performance of Bridge when estimating the Rayleigh matrix with the sum of states estimator and inverting $G$ with the pseudo-inverse for various choices of the singular value cut-off rcond. The estimation is performed with $10^5$ samples which corresponds to the rightmost point of the corresponding curve in plot (a).}
\label{fig:estimator_difference_big}
\end{figure}

\paragraph{Numerical comparison}
In order to compare the performance of the two estimators and their variants, we could compare the resulting estimates of the Rayleigh matrix to the exact Rayleigh matrix. However, when the states $\ket{\phi_k}$ start being almost linearly dependent, the exact calculation of the Rayleigh matrix starts being numerically unstable. While the computation of $G^{-1} G^{(H)}$ from $G$ and $G^{(H)}$ can be performed in high precision, the exact computation of $G$ and $G^{(H)}$ alone must be performed in double precision, since the overhead associated to using unoptimized high-precision libraries like mpmath makes it unpractical to use in this case. The small numerical error accumulated in the computation of $G$ and $G^{(H)}$ is then amplified when taking the ill-conditioned inverse of $G$. 

Since the exact calculation of the Rayleigh matrix cannot be trusted, we instead use the performance of Bridge as a benchmark, where the exact dynamics used as the reference can be computed without significant numerical error. In figure \ref{fig:estimator_difference_big}, we show the performance of Bridge using the determinant state estimator, the sum of states estimator with high precision inversion of $G$ and the sum of states estimator with inversion of $G$ using the pseudo-inverse at various singular value cut-offs. For each, we either keep or discard the imaginary part of the eigenvalues of the resulting estimate of the Rayleigh matrix. For more details on the setting of the dynamics, see \cref{sec:numerical_results}. We conclude from this figure that the optimal estimator is either the sum of states estimator that keeps the imaginary part with an appropriately chosen singular value cut-off for a good performance and a very fast estimation (less than 10 seconds), or the determinant state estimator that keeps the imaginary part for the optimal performance at an increased cost (about 5 minutes).

In the context of Bridge, irrespective of the chosen method, the calculation time will be dominated by the generation of the original basis states $\ket{\phi_k}$. Therefore, we use the determinant state estimator that keeps the imaginary part of the eigenvalues, since it achieves the optimal final performance and does not require tuning the singular value cut-off.

We further show that the determinant state estimator is more suitable for Bridge than the pseudo-inverse sum of states estimator with \cref{fig:8x8_with_sos} which is \cref{fig:8x8} with an additional curve depicting Bridge performed with the pseudo-inverse sum of states estimator for a given value of the pseudo-inverse singular value cut-off. It confirms that the determinant state estimator also outperforms the sum of states estimator for larger system sizes. In particular, Bridge with the sum of states estimator is even less accurate than the original basis states themselves in the case (a). This is further aggravated by the fact that in both cases, the sum of states estimator ran for significantly longer. In all cases, Bridge was run on a single A100 GPU. With the determinant state estimator, it took 15 minutes for (a) and 60 minutes for (b). With the sum of states estimator, it took 150 minutes for both (a) and (b).

\begin{figure}[t]
\centering
\includegraphics[width=\linewidth]{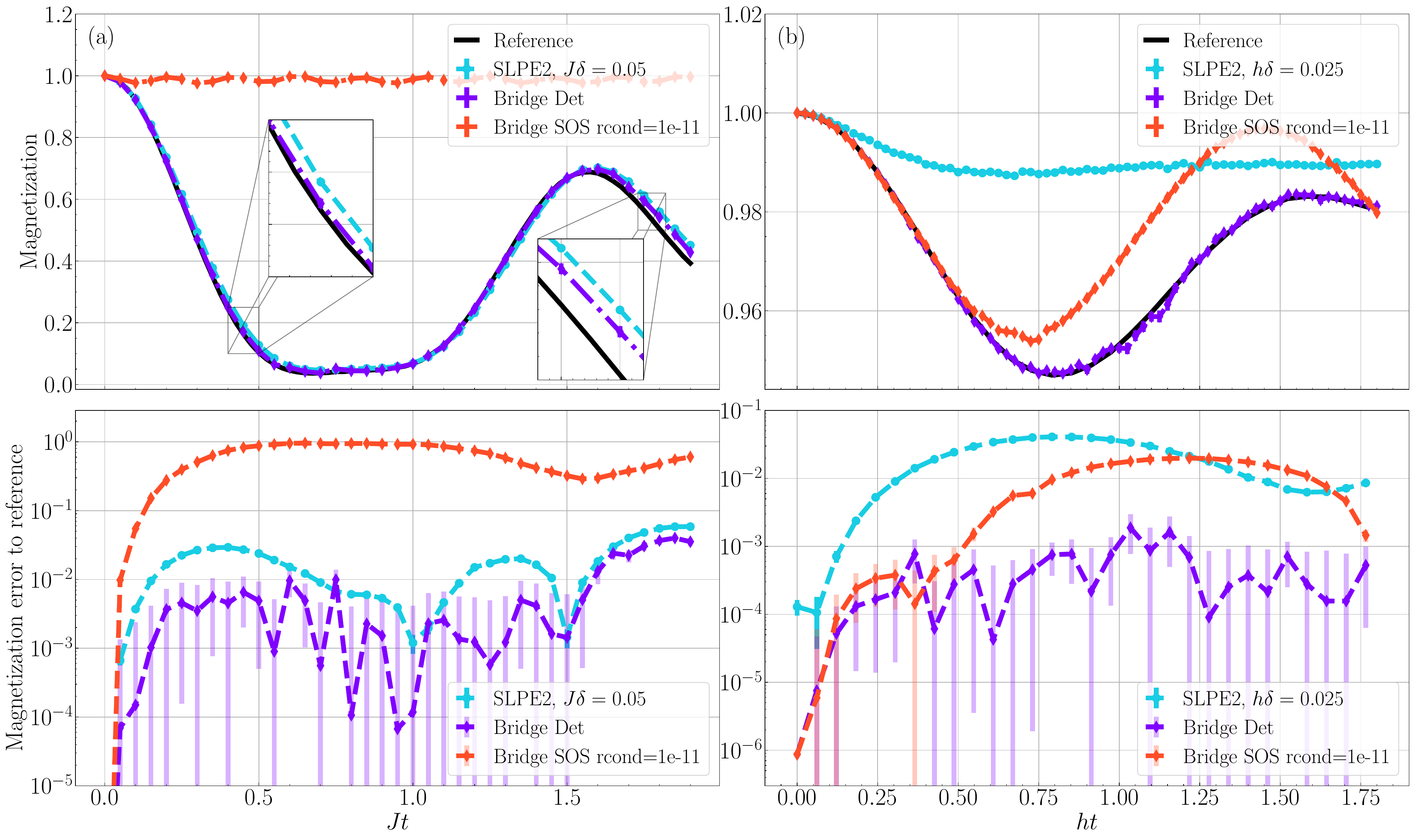}
\caption{Magnetization on the $8 \times 8$ transverse-field Ising model with $h=0.1h_c$ (a) and $h=2h_c$ (b). The original states are vision transformers optimized with p-tVMC using the scheme SLPE2 with time step $J\delta = 0.05$ (a) and $h\delta = 0.025$ (b). The reference for (a) are vision transformers optimized with p-tVMC using the higher order scheme SLPE3 with a smaller time step $J\delta = 0.025$. The reference for (b) is convolutional neural network states obtained with t-VMC in Ref.~\cite{Schmitt2020MarkusMarkus}, since obtaining good p-tVMC states in this regime would have been too computationally expensive. In both cases, Bridge is performed with either the determinant state estimator (Det) or sum of states estimator (SOS). For the sum of states estimator, $G$ is inverted with the pseudo-inverse, with a singular value cut-off $\texttt{rcond}=10^{-11}$.}
\label{fig:8x8_with_sos}
\end{figure}

\section{Projected-time variational Monte Carlo}
\label{app:ptvmc}
In this section, we summarize the core features of the projected-time variational Monte Carlo (p-tVMC) method, referring the reader to \cite{Sinibaldi2023Unbiasing, Gravina2024PTVMC} for a more comprehensive treatment. 
The main goal of p-tVMC is to numerically solve the time-dependent Schrödinger equation $\ket{\psi(t+\delta)} = e^{\Lambda \delta} \ket{\psi(t)}$
on a discrete time mesh with step size $\delta$.
We assume that the exact dynamics is well approximated by a variational state, meaning that there exist parameters $\theta(t)$ such that $\ket{\psi(t)} \approx \ket{\psi_{\theta(t)}}$. Using the McLachlan variational principle, the evolution problem is mapped onto an optimization problem in parameter space: \begin{equation} 
\label{eq:minimization_problem}
\theta(t+\delta) = \argmax{\eta}\,\,\mathcal{F}\left(\ket{\psi_\eta}, e^{-iH \delta}\ket{\psi_{\theta(t)}}\right), 
\end{equation} 
with the fidelity
\begin{equation} 
\mathcal{F}(\ket{\psi}, \ket{\phi}) = \frac{\braket{\psi|\phi}\braket{\phi|\psi}}{\braket{\psi|\psi}\braket{\phi|\phi}}
= \mathbb{E}_{\raisebox{1em}{$\substack{x\sim|\psi(x)|^2 \ y\sim|\phi(y)|^2}$}}\left[\frac{\phi(x)}{\psi(x)}\frac{\psi(y)}{\phi(y)}\right]
\end{equation}
measuring the discrepancy between two quantum states.

The p-tVMC approach leverages an expansion of the exponential map $e^{\Lambda\delta}$ to finite order in $\delta$, yielding discretizations specifically suited for the optimization in \cref{eq:minimization_problem} \cite{Sinibaldi2023Unbiasing,Gravina2024PTVMC,Nys2024,Medvidovi2021,Donatella2023}. 
These expansions typically assume product forms, such as 
\begin{equation} \label{eq:LPE} e^{\Lambda \delta} = \prod_{k=1}^s \left( \mathbb 1 - i a_k H \delta \right) + O \left( \delta^{s + 1} \right) , \quad \text{or} \quad e^{\Lambda \delta} = \prod_{k=1}^s \left( \mathbb 1 - i a_k H_1 \delta \right) e^{-i b_k H_0 \delta} + O\left(\delta^{o(s) + 1}\right), 
\end{equation} 
where $H_0$ is diagonal in the computational basis, and $H_1$ captures the off-diagonal part, with $H = H_0 + H_1$. 
Importantly, the number of substeps $s$ scales polynomially in the order $o=o(s)$ of the approximation.
These discretizations are known as the \emph{Linear Product Expansion} (LPE) and the \emph{Splitted Linear Product Expansion} (SLPE) respectively.
The values of the coefficients ${a_k, b_k}$ are determined semi-analytically and are tabulated in \cite{Gravina2024PTVMC}.
A key advantage of these expansions is their ability to achieve high-order temporal accuracy by approximately decomposing the single optimization step [\cref{eq:minimization_problem}] into a sequence of $s$ smaller and efficiently computable optimization problems. Introducing the intermediate parameters ${\theta^{(k)}}$, with initial condition $\theta^{(0)} \equiv \theta(t)$ and final result $\theta^{(s)} \equiv \theta(t+\delta)$, the procedure becomes: 
\begin{equation}
\label{eq:optimization_UV} 
\theta^{(k)} = \argmax{\eta}\,\,\mathcal{F} \left( \ket{\psi_\eta}, U_k \ket{\psi_{\theta^{(k-1)}}} \right), \quad k \in \left\{ 1, \dots, s \right\}, 
\end{equation} 
where each operator $U_k$ corresponds to a factor in the chosen product expansion \cite{Gravina2024PTVMC, Nys2024}.

The error associated with the p-tVMC method arises from two distinct sources. Firstly, a finite \emph{discretization error} remains due to truncation of the expansions of the exponential map [\cref{eq:LPE}]. It can be systematically controlled and made arbitrarily small as the time step $\delta \to 0$, at the expense of an increased computational cost. Secondly, each optimization step introduces an \emph{optimization error}, arising from finite sampling errors, limited expressivity of the variational ansatz, and challenging optimization landscapes. This error accumulates over successive steps and can significantly affect the overall accuracy. 
The optimization problems at each substep are typically solved using natural gradient descent combined with Monte Carlo approximations of the loss function and its gradient, as detailed in \cite{Sinibaldi2023Unbiasing,Gravina2024PTVMC}.

\end{document}